\documentclass[11pt]{article}
\usepackage{amsmath,amsthm,nicefrac}
\usepackage{amsfonts, amstext}
 \usepackage{libertine}
 \usepackage[libertine]{newtxmath}

\usepackage{typearea}
\paperwidth 8.5in \paperheight 11in
\typearea{14}
\usepackage[font={small,it}]{caption}

\usepackage{setspace}

\usepackage{enumitem}
\usepackage[breaklinks]{hyperref}
\hypersetup{colorlinks=true,%
            citebordercolor={.6 .6 .6},linkbordercolor={.6 .6 .6},%
citecolor=blue,urlcolor=black,linkcolor=blue,pagecolor=black}

\usepackage[nameinlink]{cleveref}
\Crefname{algocf}{Algorithm}{Algorithms}
\crefname{algocfline}{line}{lines}
\Crefname{invariant}{Invariant}{Invariants}
\Crefname{claim}{Claim}{Claims}
\Crefname{fact}{Fact}{Facts}
\Crefname{subclaim}{Subclaim}{Subclaims}

\usepackage{epsfig}

\usepackage{mathrsfs}
\usepackage{xspace}
\usepackage{soul}
\usepackage{latexsym}
\usepackage{bbm}

\usepackage{svg}
\usepackage{tikz}
\usetikzlibrary{patterns}
\usetikzlibrary{shadows.blur}
\usetikzlibrary{shapes}

\usepackage{framed}


\usepackage[ruled,vlined,linesnumbered,algonl]{algorithm2e}
\usepackage{algpseudocode} 
\SetEndCharOfAlgoLine{}

\SetKwComment{Comment}{\footnotesize$\triangleright$\ }{}

\SetCommentSty{mycommfont}

\def\ShowComment{True} 

\ifdefined\ShowComment

\newcommand{\agnote}[1]{\refstepcounter{note}$\ll${\bf Anupam~\thenote:}
  {\sf \color{blue} #1}$\gg$\marginpar{\tiny\bf AG~\thenote}}
\newcommand{\aknote}[1]{\refstepcounter{note}$\ll${\bf Amit~\thenote:}
  {\sf \color{blue} #1}$\gg$\marginpar{\tiny\bf AK~\thenote}}
\newcommand{\dpnote}[1]{\refstepcounter{note}$\ll${\bf Debmalya~\thenote:}
  {\sf \color{blue} #1}$\gg$\marginpar{\tiny\bf DP~\thenote}}  
  
\newcommand{\redd}[1]{{\color{red}#1}}
\newcommand{\balert}[1]{{\color{blue}#1}}
\newcommand{\alert}[1]{{\color{red}#1}}
\newcommand{\dpalert}[1]{{\color{magenta}#1}}

\else

\newcommand{\agnote}[1]{}
\newcommand{\aknote}[1]{}
\newcommand{\dpnote}[1]{}

\newcommand{\redd}[1]{}
\newcommand{\balert}[1]{}
\newcommand{\alert}[1]{}
\newcommand{\dpalert}[1]{}

\fi

\usepackage{thmtools,thm-restate}

\makeatletter
\setlength{\parindent}{7 mm}
\addtolength{\partopsep}{-2mm}
\addtolength{\textheight}{35pt}
\addtolength{\footskip}{-20pt}
\makeatother
\allowdisplaybreaks

\newcounter{note}[section]
\renewcommand{\thenote}{\thesection.\arabic{note}}

\sethlcolor{fluorescentyellow}

\newcommand{\initOneLiners}{%
    \setlength{\itemsep}{0pt}
    \setlength{\parsep }{0pt}
    \setlength{\topsep }{0pt}
}

\pdfstringdefDisableCommands{%
  \def\\{}%
  \def\texttt#1{<#1>}%
  \def\textsf#1{<#1>}%
  \def\mathsf#1{<#1>}%
  \def\ensuremath#1{#1}%
  \def\Cref#1{<Label:#1>}%
  \def\eqref#1{<Eq.:#1>}%
}

\newtheorem{theorem}{Theorem}[section]
\newtheorem{lemma}[theorem]{Lemma}

\newtheorem{fact}[theorem]{Fact}
\newtheorem{corollary}[theorem]{Corollary}

\theoremstyle{definition}
\newtheorem{defn}[theorem]{Definition}

\theoremstyle{remark}
\newtheorem{remark}[theorem]{Remark}

\makeatletter 
\renewcommand{\theinvariant}{(I\@arabic\c@invariant)}
\makeatother

\newcommand{\eps}{\varepsilon}

\newcommand{\poly}{\operatorname{poly}}

\newcommand{\polylog}{\operatorname{polylog}}
\renewcommand{\emptyset}{\varnothing}

\newcommand{\E}{\mathbb{E}}

\newcommand{\nf}{\nicefrac}

\newcommand{\junk}[1]{}
\newcommand{\eat}[1]{}

\newif\ifhideproofs

\ifhideproofs
\usepackage{environ}
\NewEnviron{hide}{}

\fi


\usepackage{fullpage}
\usepackage[utf8]{inputenc}

\title{Fast Algorithms for Graph Arboricity and Related Problems}
\author{Ruoxu Cen\\Department of Computer Science\\Duke University
\and Henry Fleischmann\\Computer Science Department\\Carnegie Mellon University
\and George Z. Li\\Computer Science Department\\Carnegie Mellon University
\and Jason Li\\Computer Science Department\\Carnegie Mellon University
\and Debmalya Panigrahi\\Department of Computer Science\\Duke University}
\date{July 2025}

\newcommand{\tO}{\widetilde{O}}
\newcommand{\val}{\mathrm{val}}

\newcommand{\ralert}[1]{\textcolor{red}{#1}}
\newcommand{\cP}{\mathcal{P}}

\newcommand{\str}{\sigma}

\begin{document}

\pagenumbering{gobble}

\maketitle
\begin{abstract}

We give an algorithm for finding the arboricity of a weighted, undirected graph, defined as the minimum number of spanning forests that cover all edges of the graph, in $\sqrt{n} m^{1+o(1)}$ time. This improves on the previous best bound of $\tilde{O}(nm)$ for weighted graphs and $\tilde{O}(m^{3/2})
$ for unweighted graphs (Gabow 1995) for this problem. The running time of our algorithm is dominated by a logarithmic number of calls to a directed global minimum cut subroutine -- if the running time of the latter problem improves to $m^{1+o(1)}$ (thereby matching the running time of maximum flow), the running time of our arboricity algorithm would improve further to $m^{1+o(1)}$.

We also give a new algorithm for computing the entire {\em cut hierarchy} -- laminar multiway cuts with minimum cut ratio in recursively defined induced subgraphs -- in $m n^{1+o(1)}$ time. The cut hierarchy yields the ideal edge loads (Thorup 2001) in a fractional spanning tree packing of the graph which, we show, also corresponds to a max-entropy solution in the spanning tree polytope. For the cut hierarchy problem, the previous best bound was $\tilde{O}(n^2 m)$ for weighted graphs and $\tilde{O}(n m^{3/2})$ for unweighted graphs.

\medskip
\end{abstract}

\clearpage

\pagenumbering{arabic}

\section{Introduction}\label{sec:intro}
\eat{

\textcolor{red}{For arboricity: Best current algorithm is $\tilde{O}(mn)$ time for weighted graphs and $\tilde{O}(m\sqrt{m})$ for unweighted graphs~\cite{Gabow95}, who devised algorithms based on a parametric version of the directed global minimum cut problem. Earlier work established slower polynomial time algorithms~\cite{DBLP:journals/networks/PicardQ82,DBLP:journals/mp/PadbergW84,DBLP:journals/algorithmica/GabowW92}.
}

\textcolor{red}{
Eppstein~\cite{Eppstein94} obtains a linear-time algorithm for 2-approximation. Plotkin, Shmoys, and Tardos~\cite{PlotkinST95} gives an FPTAS for general packing and covering LP, which works for arboricity. After that, faster FPTAS are obtained by \cite{WorouG16,BlumenstockF20}.
}

\textcolor{red}{
The relation between ideal loads and arboricity is exploited by \cite{deVos25}. In particular, they proved that arboricity = 1 / min ideal load.
}

\textcolor{red}{
The cut hierarchy is implicit in the definition of ideal tree packing from Thorup~\cite{Thorup01,Thorup08}. It is also equivalently defined by Trubin~\cite{Trubin93} (referred to as HS-decomposition; HS stands for homogeneously strong) in the bottom-up way, by repeatedly contracting a skew-densest subgraph.
We will show they are equivalent in \Cref{sec:algorithm}.
Trubin proposed an efficient algorithm for computing cut hierachy. However, we will show in \Cref{sec:trubin} that this algorithm is incorrect. So, the previous best algorithm for cut hierarchy is $O(n)$ calls to Gabow's algorithm for arboricity.
}

}

\eat{
\textcolor{red}{Our Contributions:
\begin{enumerate}
    \item Find a minimum ratio cut in $mn^{1+o(1)}$ time. This also leads to the following auxiliary results:
    \begin{enumerate}
        \item By duality, find the fractional tree packing number in the same time complexity.
        \item By strong integrality of the spanning tree packing LP, find the (integer) tree packing number in the same time complexity.
    \end{enumerate}
    For all these problems, the previous best running time was $\tO(mn^2)$ \cite{ChengCunningham94,Gabow95}.
    \item Find a maximum fractional tree packing in $mn^{1+o(1)}$ time. The previous best running time was $\tO(mn^3)$ \cite{GabowManu95}.
    \item Find an ideal tree packing in $mn^{1+o(1)}$ time. We are not aware of any prior results on finding an exact ideal tree packing. An $\eps$-approximation algorithm for ideal tree packing was known previously in $\tO(m/\eps^2)$ time \cite{Thorup01}.
    \item We observe equivalence of an ideal tree packing with entropy maximization in the spanning tree polytope. (Similar ideas were explored in the context of using entropy maximization in the spanning tree polytope for the traveling salesman problem, but we were unable to find any reference that states the equivalence with an ideal tree packing explicitly.) This equivalence is not strictly necessary for the above results, but simplifies our proofs and can be of independent value.
\end{enumerate}
}
}




Consider an undirected graph $G = (V, E)$ with $n$ vertices and $m$ edges with integer edge weights $c_e\in \{1, 2, \ldots, C\}$. The {\em arboricity} of the graph (see, e.g., Schrijver~\cite{Schrijver03}) is defined as the minimum number of disjoint forests that cover all the edges of the graph. (The notion of coverage is that each edge belongs to exactly as many forests as its weight.) Picard and Queyranne~\cite{DBLP:journals/networks/PicardQ82} initiated the algorithmic study of arboricity and showed that the problem is polynomial-time solvable for $O(1)$ edge capacities. This was later extended to general edge capacities by Padberg and Wolsey~\cite{DBLP:journals/mp/PadbergW84}. The running time of these algorithms are large polynomials. A more efficient algorithm was obtained by Gabow and Westermann~\cite{DBLP:journals/algorithmica/GabowW92}, whose running time was $\tO(\min(m^{5/3}, mn))$ when $m = \Omega(n\log n)$, and slightly more for sparser graphs.\footnote{Throughout the paper, we use $\tO(\cdot)$ to hide $\polylog(n)$ factors.} This was further improved in the work of Gabow~\cite{Gabow95}, who leveraged parametric directed minimum cut algorithms to obtain the arboricity of a weighted graph in $\tilde{O}(mn)$ time and that of an unweighted graph in $\tilde{O}(m^{3/2})$ time, respectively. If approximation is allowed, faster algorithms are known.  Eppstein~\cite{Eppstein94} obtained a linear-time algorithm for a 2-approximation of arboricity. In their famous work on positive LPs, Plotkin, Shmoys, and Tardos~\cite{PlotkinST95} gave an FPTAS for finding the arboricity of a graph, whose running time was later improved in \cite{WorouG16,BlumenstockF20}. Nevertheless, for computing the arboricity of a graph exactly, the 30-year-old algorithm of Gabow~\cite{Gabow95} remains the state of the art.

A closely related problem is that of computing the edge loads in an {\em ideal tree packing}, a paradigm introduced by Thorup~\cite{Thorup01,Thorup08}. To define this, recall that a {\em multiway cut} of $G$ (denoted $\cP$) is a vertex partition $V = V_1\sqcup V_2\sqcup \ldots\sqcup V_k$ for any $k\ge 2$, where each $V_i$ is said to be a {\em side} of $\cP$. (If $k=2$, then $\cP$ has only two sides, and is simply a cut of $G$.) The {\em cut ratio} of $\cP$ is defined as $\frac{d(\cP)}{|\cP|-1}$, where $d(\cP) := \sum_{e\in \partial \cP} c_e$ is the sum of weights of all edges with endpoints in two different sides of $\cP$ (call this set of edges $\partial \cP$), and $|\cP|$ is the number of sides of $\cP$. A multiway cut that attains the minimum cut ratio across all the multiway cuts in a graph is called a {\em minimum ratio cut}.
Now, consider the following recursive algorithm. Find a min ratio cut $\cP$ and set $x_e = \frac{|\cP|-1}{d(\cP)}$ for all edges $e\in \partial \cP$, i.e., the reciprocal of the minimum cut ratio. Then, we recurse on the induced subgraphs for each non-singleton component after removing the edges $\partial \cP$. The resulting hierarchy of laminar multiway cuts on induced subgraphs form the {\em cut hierarchy} of the graph and the values $x_e$ on the edges are called the {\em ideal loads}. As observed by de Vos and Christiansen~\cite{deVos25}, the arboricity of a graph is the reciprocal of the minimum ideal load on an edge of the graph. Therefore, we can obtain the arboricity of a graph as a by-product of computing the ideal loads of all edges. 

The cut hierarchy is implicit in the definition of ideal tree packing by Thorup~\cite{Thorup01,Thorup08}. It is also equivalently defined by Trubin~\cite{Trubin93} (referred to as HS-decomposition; HS stands for homogeneously strong). Trubin also claims an efficient algorithm for computing the cut hierarchy, and therefore, the ideal loads. However, we show in \Cref{sec:trubin} that this algorithm is incorrect. Given this observation, the previous best algorithm for computing the cut hierarchy and ideal loads is via $O(n)$ calls to Gabow's algorithm for arboricity~\cite{Gabow95}, which has an overall running time of $\tO(mn^2)$ for weighted graphs and $\tO(m^{3/2}n)$ for unweighted graphs. We also know of a greedy tree packing algorithm to approximate the ideal loads up to small additive error in near-linear time~\cite{Thorup01}.

Before describing our results, we note some other implications of computing the cut hierarchy and the ideal loads. Just as arboricity is the reciprocal of the minimum ideal load, the reciprocal of the {\em maximum} ideal load is called the {\em strength} of the graph (see, e.g., Schrijver~\cite{Schrijver03}). This is also equal to the minimum cut ratio among all the multiway cuts of the graphs, and is denoted
    $\str(G) := \min_{\text{multiway cut }\mathcal{P} \text{ of } G} \frac{d(\mathcal{P})}{|\mathcal{P}|-1}$.
Our cut hierarchy algorithm also returns the strength of the graph, and its running time matches, up to lower order terms, the best previous bound for this problem of $\tO(nm)$. This latter bound is achieved by running a binary search over the values of strength, using an algorithm of Barahona~\cite{Barahona92} that can decide whether the strength is $>1$ in $\tO(nm)$ time. A closely related parameter is the {\em packing number} of a graph, which is the maximum number of spanning trees that can be packed into the graph. By a classical result of Nash-Williams \cite{Nash-Williams61} and Tutte~\cite{Tutte61}, the packing number of a graph is equal to the integral floor of the strength of the graph $\lfloor \sigma\rfloor$. In fact, the entire set of ideal loads yield a {\em fractional spanning tree} of the graph. We show that this fractional tree defined by the ideal loads is precisely the {\em entropy maximizer} in the spanning tree polytope (see \Cref{sec:entropy}), i.e., it is an optimal solution to the convex program $\max_{{\bf x} \in \mathbb{T}} \sum_{e\in E} x_e \ln \nf{1}{x_e}$, where $\mathbb{T}$ is the spanning tree polytope of the graph. 

\eat{

\alert{Ruoxu: We should mention our algorithm matches the best algorithm for computing strength.
Barahona~\cite{Barahona92} can decide whether the strength is $>1$ in $\tO(nm)$ time. By simple scaling and binary search, it can computes the strength in integer weights setting in $\tO(nm\log C)$ time.}

}

\subsection{Our Results}

\eat{

In this paper, we give a randomized algorithm that returns a minimum ratio cut (and therefore, computes graph strength) in running time that is dominated by $\tO(n)$ calls\footnote{In our formal theorems, we make the dependence on $C$ explicit, but in our informal descriptions, we assume that $C$ is polynomially bounded in $n$ for brevity.}
to any maximum flow algorithm on {\em directed} graphs with $O(m)$ edges and $O(m)$ vertices. (Here, $m$ and $n$ respectively denote the number of edges and vertices in the input graph.) Using the current state of the art max-flow algorithms, this results in a running time of $nm^{1+o(1)}$.
Prior to our work, the fastest algorithms for this problem, due to Gabow~\cite{Gabow95} and Cheng and Cunningham~\cite{ChengCunningham94}, ran in  $\tO(n^2 m)$ time; we improve this bound by a factor of $n$.


In fact, our min-ratio cut result is obtained as a corollary of a more general result that returns

}

Our first result is a new algorithm to compute the cut hierarchy, and therefore ideal loads, of a weighted, undirected graph. The algorithm is randomized and its running time is dominated by $\tO(n)$ calls\footnote{In our formal theorems, we make the dependence on $C$ explicit, but in our informal descriptions, we assume that $C$ is polynomially bounded in $n$ for brevity.}
to any maximum flow algorithm on {\em directed} graphs with $O(m)$ edges and $O(m)$ vertices.  Using the current state of the art max-flow algorithms~\cite{chen2023almost,van2023deterministic}, this results in a running time of $nm^{1+o(1)}$. This improves on the previous best algorithm that calls Gabow's arboricity algorithm~\cite{Gabow95} $O(n)$ times, and therefore has a running time of $\tO(m n^2)$ for weighted graphs and $\tO(m^{3/2}n)$ for unweighted graphs. 

\begin{restatable}{theorem}{cuthierarchyAlg}\label{thm:cut-hierarchy}
There is a randomized algorithm that, given an undirected graph $G$ on $m$ edges and $n$ vertices with integer edge weights in $\{1, 2, \ldots, C\}$, outputs the cut hierarchy (and hence ideal edge loads) of $G$ in $\tO(nm\log C)$ time plus $\tO(n\log C)$ calls to a maximum flow subroutine on directed graphs with $O(m)$ edges and $O(m)$ vertices and integer edge weights at most $\poly(n, C)$.
\end{restatable}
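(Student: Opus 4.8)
The plan is to realize the recursive definition of the cut hierarchy from \Cref{sec:intro} literally, driven by a subroutine that computes a minimum ratio multiway cut of a weighted graph, and then to bound the total cost by amortizing over the recursion.

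\textbf{The minimum‑ratio‑cut subroutine (the crux).} First I would build an algorithm that, given a connected weighted graph $H$ on $n_H$ vertices and $m_H$ edges, returns $\sigma(H)=\min_{\cP}d(\cP)/(|\cP|-1)$ together with the \emph{finest} minimum ratio cut $\cP$ of $H$, using $\tO(\log C)$ calls to a maximum flow subroutine on a directed graph with $O(m_H)$ vertices and edges and integer capacities bounded by $\poly(n,C)$, plus $\tO(m_H\log C)$ additional work. The finest minimum ratio cut is well defined: on the partition lattice $d(\cdot)$ is submodular while $|\cdot|-1$ is supermodular, so by an uncrossing argument the set of minimum ratio cuts is closed under common refinement. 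To compute it I would use a Lagrangian guess‑and‑check together with Tutte--Nash-Williams duality: for a parameter $\lambda=a/b$ one has $\sigma(H)\ge\lambda$ iff $d(\cP)\ge\lambda(|\cP|-1)$ for every partition $\cP$, equivalently (scaling all capacities by $b$) iff $H$ fractionally packs $a$ spanning trees, and by the matroid‑union characterization this decision --- indeed, the recovery of the optimal $\cP$ when it is violated --- reduces to a maximum flow / directed minimum cut computation in a Picard--Queyranne / Cunningham-style network (one auxiliary node per edge, a source and a sink, and the parameter $\lambda$ charged once per part). Since $\sigma(H)=a/b$ with $1\le b\le n-1$ and $1\le a\le m_HC$, and distinct rationals of denominator at most $n-1$ differ by at least $1/(n-1)^2$, a binary search over $\lambda$ followed by snapping to the exact rational determines $\sigma(H)$ in $O(\log(mnC))=\tO(\log C)$ decision calls; at $\lambda=\sigma(H)$ the structure of the tight cuts exposes the finest multiway cut $\cP$, and clearing denominators (with $b\le n-1$) keeps all capacities at $\poly(n,C)$. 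I expect this subroutine to be the main obstacle --- the delicate points being to bring the number of flow calls down to $\tO(\log C)$ rather than the $\Theta(n)$ flows of a naive rooted‑connectivity test, to extract a \emph{multiway} partition rather than merely an $s$--$t$ cut, and to isolate the canonical finest one.

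\textbf{Unrolling the recursion.} Given the subroutine, the hierarchy algorithm decomposes $G$ into connected components and, on each component $H$, computes $\cP=\{V_1,\ldots,V_k\}$, sets $x_e=(|\cP|-1)/d(\cP)$ for every $e\in\partial\cP$, deletes $\partial\cP$, and recurses on $H[V_i]$ for each non‑singleton side $V_i$; correctness then follows directly from the definition of the cut hierarchy once each call is known to return the canonical minimum ratio cut. Two structural facts keep the recursion well defined and bounded. First, each $H[V_i]$ is connected: if $H[V_i]$ were disconnected for some $|V_i|\ge2$, refining $\cP$ by splitting $V_i$ into its components would leave $d(\cP)$ unchanged while increasing $|\cP|$, strictly lowering the ratio and contradicting minimality; hence $\sigma>0$ persists down the recursion, and --- because we took the finest cut --- in fact $\sigma(H[V_i])>\sigma(H)$ strictly (a ratio‑$\sigma(H)$ partition of some $V_i$ would combine with $\cP$, via the mediant inequality, into a strictly finer minimum ratio cut of $H$). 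Second, every non‑singleton side strictly shrinks, so the recursion depth is at most $n$, and the vertex sets on which we recurse form a laminar family all of whose members have size at least $2$, so there are at most $n-1$ recursive calls in total.

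\textbf{Putting the pieces together.} Summing the subroutine's cost over all recursive calls: there are $O(n)$ calls, each issuing $\tO(\log C)$ maximum flow calls on a directed graph with $O(m_H)=O(m)$ vertices and edges (the reduction adds one auxiliary node per edge, and $m_H\le m$ always) and capacities $\poly(n,C)$, for $\tO(n\log C)$ flow calls overall. The remaining work is $\tO(m_H\log C)$ per call; since the induced subgraphs at any fixed recursion depth are edge‑disjoint their edge counts sum to at most $m$, and there are at most $n$ depths, so $\sum_{\text{calls}}m_H\le nm$ and the total non‑flow time is $\tO(nm\log C)$ --- which also absorbs the $O(n)$‑node hierarchy bookkeeping and the component decomposition. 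The only randomization is that inherited from the maximum flow (and cut‑extraction) subroutines that are invoked. This establishes \Cref{thm:cut-hierarchy}.
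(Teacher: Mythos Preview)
Your recursion and accounting are sound; the gap is precisely where you already flag it --- the min-ratio-cut subroutine. You assert that deciding $\sigma(H)\ge\lambda$ (and extracting the finest tight partition) reduces to \emph{one} max-flow in ``a Picard--Queyranne / Cunningham-style network (one auxiliary node per edge, a source and a sink, and the parameter $\lambda$ charged once per part)''. That construction --- Goldberg's network in this paper --- solves the \emph{densest-subgraph} decision $\max_{S} c(E[S])/(|S|-1)\le\lambda$, i.e.\ the arboricity side. The strength side $\min_{\cP} d(\cP)/(|\cP|-1)\ge\lambda$ is a constraint over all \emph{partitions}, not over all single sets, and there is no known single-flow reduction for it: Cunningham's and Barahona's algorithms decide it in $\tO(n_H m_H)$ time, essentially $n_H$ flows, and the paper notes explicitly (\Cref{sec:intro}, \Cref{sec:closing}) that $\tO(nm)$ remains the best known running time for computing strength. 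If your subroutine existed, binary search would give strength in $m^{1+o(1)}$ --- a result strictly stronger than what the paper claims, not a routine step you can invoke.

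This is exactly why the paper does \emph{not} follow the top-down definition. It goes bottom-up and never computes a min-ratio cut directly: it finds a \emph{dense core} (a vertex set skew-denser than all sub- and supersets), which \emph{does} reduce to the densest-subgraph side and hence to $\tO(\log C)$ flows plus a size-bounded directed min-cut; it proves that every dense core is a star set of the hierarchy (\Cref{lem:core-is-star}); contracts it; and repeats. The $\tO(n\log C)$-flow budget comes from amortizing each iteration's cost against the size of the contracted set, with a guess-and-double on that size and a separate verification step. Your top-down scheme matches the target bounds only if the $\tO(\log C)$-flow min-ratio-cut subroutine actually exists; as written, that is the missing idea.
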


Recall that arboricity is given by the reciprocal of the minimum ideal load. Therefore, the above theorem yields an algorithm for computing arboricity in the same running time of $mn^{1+o(1)}$, which is slightly slower than the previous best bounds for this problem. 
We improve this bound further by giving a randomized algorithm for arboricity whose running time is dominated by $\tO(1)$ calls to two subroutines: any maximum flow algorithm on {\em directed} graphs with $O(m)$ edges and $O(m)$ vertices, and any global minimum cut algorithm on {\em directed} graphs with $O(m)$ edges and $O(n)$ vertices.  Using the current state of the art maximum flow algorithms~\cite{chen2023almost,van2023deterministic} and global minimum cut algorithms in directed graphs~\cite{CLNPQS21}, this results in a running time of $\sqrt{n} m^{1+o(1)}$. This improves on the previous best algorithm for arboricity~\cite{Gabow95} that has a running time of $\tO(mn)$ for weighted graphs and $\tO(m^{3/2})$ for unweighted graphs.
Moreover, if the running time of directed minimum cut improves to almost linear time, it will immediately improve the running time of the arboricity algorithm to almost linear time as well.

\begin{restatable}{theorem}{arboricityAlg}\label{thm:arboricity-intro}
    There is a randomized algorithm that, given an undirected graph $G$ on $m$ edges and $n$ vertices with integer edge weights in $\{1, 2, \ldots, C\}$, outputs the arboricity of $G$ in $O(m\log (nC))$ time plus $O(\log (nC))$ calls to a maximum flow subroutine on directed graphs with $O(m)$ edges and $O(m)$ vertices and $O(\log (nC))$ calls to a global minimum cut subroutine on directed graphs with $O(m)$ edges and $O(n)$ vertices, both with integer edge weights at most $\poly(n, C)$.
\end{restatable}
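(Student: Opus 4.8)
The plan is to reduce the computation of arboricity to $O(\log(nC))$ decision problems of the form ``is the arboricity of $G$ at most $\lambda$?'' for integer $\lambda$, and then to show that each such decision is answered by a single directed global minimum cut computation, with a maximum flow computation in a supporting role. The $O(m\log(nC))$ additive term will come purely from bookkeeping (binary search, building the auxiliary graphs, splitting into connected components).

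First, I would invoke the combinatorial characterization of arboricity. Viewing the weighted graph as the multigraph in which edge $e$ occurs with multiplicity $c_e$, the Nash--Williams--Tutte formula (equivalently, the fact that arboricity is the reciprocal of the minimum ideal load) gives
\[
\mathrm{arb}(G)\;=\;\big\lceil \alpha^*(G)\big\rceil,
\qquad
\alpha^*(G)\;:=\;\max_{\emptyset\neq H\subseteq V:\, |H|\geq 2}\ \frac{c(E(H))}{|H|-1},
\]
where $\alpha^*$ is the fractional arboricity and $c(E(H))$ is the total weight of edges with both endpoints in $H$. Since $c(E(H))\le mC$ and $|H|-1\ge 1$, we have $1\le \mathrm{arb}(G)\le mC$ for connected $G$ (disconnected $G$ is handled componentwise, since arboricity is the max over components), so a binary search over the integers in $[1,mC]$ locates $\lceil\alpha^*\rceil$ after $O(\log(mC))=O(\log(nC))$ evaluations of the decision oracle ``$\alpha^*(G)\le\lambda$?''.

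It remains to implement the decision oracle. Fix an integer $\lambda$ and set $g(H):=c(E(H))-\lambda|H|$, a supermodular function with $g(\emptyset)=0$; then $\alpha^*(G)\le\lambda$ holds iff $\max_{\emptyset\neq H\subseteq V} g(H)\le -\lambda$. The reason a single maximum flow does not directly suffice is that the unrestricted maximum $\max_H g(H)$ is always attained at $\emptyset$, so any flow network must be ``anchored'' at a vertex known to lie in a witnessing set $H$. For a fixed anchor $r$, deciding whether some $H\ni r$ has $c(E(H))>\lambda(|H|-1)$ is a textbook minimum $s$--$t$ cut on the bipartite-style network with an edge-vertex $x_e$ for each $e\in E$ (arc $s\to x_e$ of capacity $c_e$; arcs $x_e\to u$, $x_e\to v$ of infinite capacity), a penalty arc of capacity $\lambda$ from every vertex to $t$, and $r$ merged with $s$; this is where a maximum flow call (on $O(m)$ vertices and arcs) is used, and a parametric version of it also yields an $O(1)$-approximation of $\alpha^*$ that can shrink the binary-search interval. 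To remove the dependence on the unknown anchor, I would pass to a directed graph $D_\lambda$ on vertex set $V$ together with $O(1)$ auxiliary vertices, obtained by replacing each undirected edge by two oppositely directed arcs of capacity $c_e$ and attaching $O(n)$ penalty arcs of capacity $\lambda$, designed so that the minimum cut of $D_\lambda$ taken over \emph{all} ordered vertex pairs equals a fixed affine function of $\max_{\emptyset\neq H}g(H)$: ranging over source--sink pairs in the directed global minimum cut is exactly what realizes the ``for all anchors $r\in H$'' quantifier, replacing the $n$ separate flow computations one would otherwise need. Comparing this value against the threshold obtained from $-\lambda$ decides ``$\alpha^*(G)\le\lambda$?''; forming $D_\lambda$ costs $O(m)$ per step.

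The hard part will be establishing the correctness of this last reduction: choosing $D_\lambda$ so that (i) its directed global minimum cut is always realized by a cut of the form $(H,\,V\setminus H)$ with $H$ a genuine induced subgraph (rather than a spurious small cut created by the auxiliary/penalty structure), (ii) the value of that cut is an explicit function of $c(E(H))$ and $|H|$, and (iii) all arc capacities stay integral and bounded by $\poly(n,C)$ --- the last point being immediate here since $\lambda\le mC$ and $c_e\le C$, but requiring care if one instead searches over non-integer thresholds with denominator up to $n-1$. Granting the reduction, the running time follows by summing $O(\log(nC))$ calls to each of the two subroutines plus $O(m\log(nC))$ bookkeeping; substituting current almost-linear-time maximum flow and the $\sqrt{n}\,m^{1+o(1)}$-time directed global minimum cut algorithm gives the claimed $\sqrt{n}\,m^{1+o(1)}$ bound, and any improvement of directed global minimum cut to $m^{1+o(1)}$ would immediately improve the arboricity bound to $m^{1+o(1)}$.
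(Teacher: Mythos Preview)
Your high-level plan---binary search plus a directed global minimum cut oracle per step---matches the paper's, and your integer binary search over $\lambda\in[1,mC]$ is a perfectly good variant of the paper's continuous search to precision $n^{-3}$. The gap is in the step you yourself flag as ``the hard part'': the construction of the $O(n)$-vertex digraph $D_\lambda$ whose global minimum cut encodes $\max_{H\ne\emptyset}(c(E[H])-\lambda|H|)$.

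The sketch you give---bidirect every edge with capacity $c_e$ and add penalty arcs of capacity $\lambda$ to a sink---does not do this. For a set $X$ not containing the sink, the out-cut of that graph is $d(X)+\lambda|X|$, an affine function of the \emph{boundary} $d(X)$, not of $c(E[X])$. One can try the classical degree-based rewriting $2c(E[X])=\sum_{v\in X}c(\delta(v))-d(X)$, which leads to vertex-to-sink capacities of the form $M+2\lambda-c(\delta(v))$ for a large offset $M$; this gives an $s$--$t$ cut value of $Mn-2(c(E[X])-\lambda|X|)$ when a source $s$ with arcs of capacity $M$ to every vertex is present, but once you delete $s$ to pass to a global cut the value becomes $M|X|-2(c(E[X])-\lambda|X|)$, whose $M|X|$ term biases the minimizer toward singletons. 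So a direct $O(n)$-vertex construction of the kind you describe does not yield the desired reduction.

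The paper resolves this with Gabow's residual trick: first run a single $s$--$t$ max flow on Goldberg's bipartite network $H$ (the $O(m)$-vertex graph you already described for the anchored problem) with parameter $\tau$. If the max flow is $<c(E)$ one already certifies $\tau<\tGamma(G)$. Otherwise, take the residual graph, delete $s$, and shortcut each edge-node $e=(u,v)$ into the two arcs $u\to v$ and $v\to u$ carrying the residual capacities; call the result $\tilde H$. The key lemma (their Lemma~4.8) is that in $\tilde H$, $d^+_{\tilde H}(X)=\tau|X|-c(E[X])$ for every $X\subseteq V$, with all capacities nonnegative because they came from a feasible residual. Now a single $\bar t$-minimum cut on $\tilde H$---which has $O(n)$ vertices and $O(m)$ arcs---decides whether some $X$ has $\rho(X)>\tau$. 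So the max flow is not merely ``supporting'': it is what manufactures the $O(n)$-vertex instance on which the directed global min-cut is called. Filling in this construction (and its correctness lemma) is exactly what your proposal is missing.
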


\eat{

\alert{Please check the terms in the above theorem, whether it is randomized or deterministic, and use restatable to make this theorem identical to \Cref{thm:arboricity}.}

}

\eat{

\subsection{Previous Approaches}
Previous min-ratio cut algorithms rely on an iterative framework called Newton's method. The basic idea is to maintain an upper bound $\tau$ on the strength (initially, $\tau$ is the cut ratio of the $n$-way cut comprising singleton sides) that is progressively made tighter in every step (called a Newton step). A Newton step comprises minimizing a function
\[
    f_\tau(\cP) := d(\cP) - \tau(|\cP|-1),
\]
which leads to one of two outcomes: either it confirms that the current guess $\tau$ is indeed the strength and returns min-ratio cut, or it obtains a tighter upper bound to be used as $\tau$ in the next iteration. It can shown that the algorithm terminates after at most $n-1$ Newton steps. 

The prior algorithms differ in how they perform the Newton step. Cheng and Cunningham~\cite{ChengCunningham94}, building on earlier work of Cunningham~\cite{Cunningham85}, Gusfiled~\cite{Gusfield83}, and Barahona~\cite{Barahona92}, converted the minimization of $f_\tau(\cdot)$ to a maximization subroutine on polymatroids defined on vertex sets of the graph. This latter problem was then solved using a parametric flow algorithm~\cite{GalloGT89} in $\tO(nm)$ time, yielding an overall running time of $\tO(n^2 m)$ across the $O(n)$ iterations. Gabow~\cite{Gabow95} also converted the Newton step to a maximization problem on polymatroids, but one defined on the edge sets of the graph. This was again solved using a version of parametric flows, yielding the same running time bound of $\tO(n^2m)$. 

To our knowledge, the problem of constructing the cut hierarchy was not studied on its own in prior work, but it admits an $\tO(n^3m)$-time algorithm by recursively calling the min-ratio cut algorithm as outlined earlier. Gabow and Manu~\cite{GabowManu95} considered the closely related problem of obtaining a fractional tree packing, and gave an algorithm that runs in $\tO(n^3m)$ time. In general, a maximum fractional tree packing returns the min-ratio cut, but not the cut hierarchy because the min-ratio cuts at the lower levels of the hierarchy are not tight constraints for spanning trees in the packing.
A special case of a maximum fractional tree packing that subsumes all the above problems is an {\em ideal tree packing}, introduced by Thorup~\cite{Thorup01} (see \Cref{sec:entropy} for the definition). It is known that an ideal tree packing can be $\eps$-approximated by a greedy algorithm in $\tO(m/\eps^2)$ time~\cite{Thorup01,ChekuriQuanrud17}, but we are not aware of any prior algorithm that constructs an exact ideal tree packing solution.

}

\subsection{Our Techniques}
Recall that in defining the cut hierarchy, we used a top down recursive definition where the algorithm repeatedly computes a minimum ratio cut on an induced subgraph. In designing our cut hierarchy algorithm, however, we take a bottom up approach. We start with a multiway cut on an induced subgraph at the bottom layer of the hierarchy, and end with the minimum ratio cut that appears at the top level of the hierarchy. Note that any multiway cut at the bottom level of the cut hierarchy must comprise singleton vertices as its sides, since the leaves of the cut hierarchy are the vertices of the graph. (We call this a {\em star set}.) Suppose we had an algorithm to identify a star set $S$. Then, we can use this algorithm to reveal the entire cut hierarchy as follows: Contract the star set $S$ returned by the algorithm and recursively find the cut hierarchy of the contracted graph. In this recursively constructed cut hierarchy, $S$ appears as a leaf since it is a vertex of the contracted graph. Now, return the cut hierarchy for the original graph where we replace leaf $S$ with a star centered at $S$, whose leaves are the vertices in $S$.

But, how do we identify a star set? Intuitively, a minimum ratio cut splits the graph into induced components that are denser than the overall graph. Repeating this argument along each branch of the cut hierarchy, we should expect that the star sets are denser than the induced subgraphs represented by their ancestors, and since they are at the bottom of the hierarchy, there are no denser subgraphs contained in them. To make this formal, we introduce the notion of a {\em dense core}, which is an induced subgraph of a graph that is denser than induced subgraphs containing them, as well as induced subgraphs contained in them. Our main structural lemma connect these two concepts: we show that {\em every dense core in a graph is necessarily a star set of the cut hierarchy}. 

We are left with the problem of identifying a dense core. A natural idea is to find the induced subgraph of maximum (rather than maximal) edge density. (Our notion of edge density is slightly different from the usual one, but this is a finer point that we ignore in this summary.) For this purpose, we use variants of previously known constructions due to Goldberg~\cite{Goldberg84} and Gabow~\cite{Gabow95} that define analogous minimum cut problems in directed graphs to identify the densest subgraph. We could use an off-the-shelf directed min-cut algorithm to complete our algorithm, and this would suffice if there were close-to-linear time bounds for the directed min-cut problem. Unfortunately, the current fastest directed min-cut algorithm~\cite{CLNPQS21} is slower---it runs in $m^{1+o(1)}\sqrt{n}$ time. This still suffices if the resulting star set has $\Omega(\sqrt{n})$ vertices, since the running time of directed min-cut can be amortized over the size of the vertex set that we contract. But, what if the star set is only of size $o(\sqrt{n})$? To handle this possibility, we redesign the directed min-cut algorithm to run in $m^{1+o(1)} k$ time, where $k$ is the size of the star set that we identify and contract. This redesign requires a careful binary search for the correct value of $k$, where for any guessed $k$, we give an $m^{1+o(1)} k$-time verification algorithm that either confirms that a vertex set is indeed a dense core if the guess is at least the true value of $k$, or identifies that the guess of $k$ is too small. 

\eat{

\alert{Add a sketch of the optimization over the above algorithm for arboricity. Start with what you would get if you simply used the above algorithm, identify why this is suboptimal conceptually, and then describe how you exploit this suboptimality to obtain a faster algorithm for arboricity.}

}

Note that the cut hierarchy also yields the ideal loads for all the edges of the graph. This is because for any edge, its ideal load is simply the reciprocal of the cut ratio of the unique multiway cut containing the edge in the cut hierarchy. Once we have the ideal loads, we can return the reciprocal of the smallest ideal load, equivalently the maximum cut ratio among all multiway cuts in the cut hierarchy, as the arboricity of the graph. But, this takes $nm^{1+o(1)}$ time. Can we do better? Note that arboricity is exactly the maximum over cut ratios of star nodes in the cut hierarchy. Since we are building the cut hierarchy in a bottom-up manner, it is unnecessary to build the whole hierarchy for the purpose of computing arboricity. As alluded to above, we can identify the densest subgraph by calling a global minimum cut subroutine in an appropriately defined directed graph. We make this connection precise by giving a direct reduction from arboricity to the directed minimum cut problem--we show that it suffices to make $\polylog(n)$ directed minimum cut calls, plus $\polylog(n)$ maximum flow calls and $\tO(m)$ time outside these calls, in order to compute arboricity. This yields an algorithm for computing arboricity in $\sqrt{n} m^{1+o(1)}$ time, the dominant cost being the running time of the directed minimum cut subroutine~\cite{CLNPQS21}.

\smallskip\noindent{\bf Roadmap.}
The key structural property relating star sets and dense cores appears in \Cref{sec:structure}. This is then used in our algorithm for the cut hierarchy and ideal loads, which is given in \Cref{sec:algorithm}. The algorithm for arboricity is given in \Cref{sec:arboricity-algoritm}. Finally, in \Cref{sec:entropy}, we relate the ideal loads to the entropy-maximizing fractional spanning tree. We use an adaptation of the global directed min cut algorithm~\cite{CLNPQS21} as a subroutine in our algorithm; the proof appears in \Cref{sec:partial-sparsify} for completeness. \Cref{sec:trubin} describes the algorithm of Trubin~\cite{Trubin93} for cut hierarchy and shows that it is incorrect.

\eat{

\ralert{Old Stuff starts here}

In an undirected graph with parallel edges, a \emph{tree packing} is a collection of edge-disjoint spanning trees. It can be generalized to edge-capacitated graphs by regarding an integer-capacitated edge as parallel edges.
The tree packing number is defined as the maximum number of edge-disjoint spanning trees.
The Tutte-Nash-Williams theorem states that the tree packing number equals
\[\min_{\text{partition }\mathcal{P}} \left\lfloor\frac{d(\mathcal{P})}{|\mathcal{P}|-1}\right\rfloor.\]
(Formally, a partition $\mathcal{P}$ is a collection of disjoint subsets of $V$ such that their union is $V$. $d(\mathcal{P})$ is the cut value of $\mathcal{P}$, i.e., the sum of capacities of edges whose two endpoints are in different sides of $\mathcal{P}$.)

A \emph{fractional tree packing} is a linear combination of spanning trees such that for each edge, the sum of coefficients of trees that contain that edge does not exceed its capacity. The value of a fractional tree packing is the sum of coefficients of all trees.
The \emph{fractional tree packing number} (sometimes called strength) is defined as the maximum possible value of a fractional tree packing. As pointed out by [Thorup08], by approximating the capacities by rational numbers at arbitrarily high precision, scaling to integers and applying Tutte-Nash-Williams theorem, at the limit, the fractional tree packing number equals
\[\min_{\text{partition }\mathcal{P}} \frac{d(\mathcal{P})}{|\mathcal{P}|-1}.\]

For any multiway cut with vertex partition $\mathcal{P}$, we use {\it cut ratio} to denote $\frac{d(\mathcal{P})}{|\mathcal{P}|-1}$.
Note that any spanning tree must cross the cut on at least $|\mathcal{P}|-1$ edges; therefore, any cut ratio is an upper bound to the fractional packing number.
The theorems discussed above imply that the minimum cut ratio among all multiway cuts, which is the tightest such bound, is actually achievable by the maximum fractional tree packing.
The fact that the (integer) packing number is exactly the integer floor of the fractional packing number is related to the strong integrality of the tree packing linear program.

Our first target is to algorithmically compute the fractional packing number, or equivalently minimum cut ratio.
The state-of-the-art algorithms for (exact) fractional packing number are \cite{ChengCunningham94,Gabow95}. Their running times are dominated by $n$ calls to a parametric max-flow algorithm (on graphs with $O(n)$ vertices and $O(m)$ edges), which is further implemented by the push-relabel algorithm \cite{GalloGT89}. Thus, their running times are $O(n^2 m\log(n^2/m))$.

Our second target is to compute a maximum fractional tree packing.  The state-of-the-art algorithm \cite{GabowManu95} runs in $O(n^3 m\log(n^2/m))$ time.

\subsection{Ideal Tree Packing}
The above two targets are subsumed by the following third target, which is to compute the \emph{ideal tree packing} introduced by \cite{Thorup01}. It is a special type of maximum tree packing with stronger properties. It will be convenient to adopt an equivalent dual view of tree packing, where the sum of coefficients is normalized to $1$. Such a normalized tree packing is called a \emph{fractional spanning tree}. For each edge, its load in the fractional spanning tree is the sum of coefficients of all trees that contain that edge, and its \emph{utility} is its load divided by its capacity. After scaling the coefficients to sum to $1$, a (primal) maximum tree packing becomes a fractional spanning tree that minimizes the maximum utility.

The edge utilities of the ideal tree packing is defined by the following conceptual algorithm.
\paragraph{Recursive Min Ratio Cut Algorithm}
\begin{enumerate}
    \item Find the min ratio cut, i.e.\ the partition $\mathcal{P}$ of $V$ minimizing the cut ratio $\frac{d(P)}{|P|-1}$.
    \item Set the utility of all cut edges in $\partial \mathcal{P}$ to be $\frac{|P|-1}{d(P)}$.
    \item Recurse on the induced subgraphs of each non-singleton component partitioned by $P$.
\end{enumerate}

\begin{remark}
Although $P$ may not be unique in the procedure, the output is unique.
\end{remark}

\begin{remark}
Technically, the ideal tree packing is the unique fractional spanning tree whose edge utilities match those defined above. Since the representation size of the ideal tree packing can be large, we are not concerned with outputting an explicit representation. Hence, we abuse notation and refer to the edge utilities themselves as the ideal tree packing. 
\end{remark}

The recursive min ratio cut algorithm above can be directly implemented in $\tO(n^4 m)$ time because fractional packing algorithms can produce the min ratio cut.

The ideal tree packing can be approximated by the following greedy MST algorithm, which is essentially a multiplicative weights update algorithm. Maintain a list of weights on the edges, initialized to $0$. On each iteration $i$, compute a MST $T_i$ w.r.t.\ the current edge weights, and increase the weights of edges in $T_i$ by 1. After a number of iterations, output the average of all trees as the fractional spanning tree.
\begin{lemma}[Proposition 16 of \cite{Thorup01}]
In an uncapacitated graph with min-cut value $\lambda$, if $\mathcal{T}$ is generated by greedy tree packing of at least $6\lambda \ln m/\eps^2$ iterations, then for all edges $e$,
    $$|u^{\mathcal{T}}(e) - u^*(e)| \le \eps.$$
\end{lemma}
The greedy MST algorithm can be implemented efficiently by using dynamic MST to exploit the similarity of edge weights between iterations. The best result is as follows.
\begin{lemma}[\cite{ChekuriQuanrud17}]
A $(1-\eps)$-approximation of maximum fractional tree packing can be computed in $\tO(m/\eps^2)$ time.
\end{lemma}

\subsection{Brief Survey of Previous Algorithms}
\paragraph{Algorithms for fractional packing number.}
Recall the goal is to minimize cut ratio $\frac{d(\mathcal{P})}{|\mathcal{P}|-1}$ among all multiway cuts $\mathcal{P}$.
If we represent $\mathcal{P}$ by a 2-dimensional point $(|\mathcal{P}|-1, d(\mathcal{P}))$, then we want to find the point with minimum slope, which is on the lower convex hull.
This can be achieved by the following process: Start from any point $x_0$. In each iteration, find $x_{i+1}$ to be the lowest point in the direction perpendicular to the line of origin and $x_i$. Note that the slope is monotone decreasing. Terminate when $x_i$ itself is the lowest point, which means it has minimum slope.
Each iteration is called a Newton's step because this process is analogous to Newton's method for minimizing a function.
In the original setting, the goal of each iteration is to find a multiway cut minimizing 
$$f_\tau(\mathcal{P}) = d(\mathcal{P})-\tau(|\mathcal{P}|-1),$$
where $\tau$ is the ratio (slope) of current solution. It can be shown that each Newton's step must strictly decrease $|\mathcal{P}|$, so $n-1$ iterations are sufficient.

Now the subproblem is to minimize $f_\tau$, which is some cut function minus some modular function, so it can be characterized as a submodular function. Recall that a polymatroid is a polytope in $\mathbb{R}^{U}$ for some universe $U$ associated with a submodular function $f$ in the sense that for any $S\subseteq U$, the sum of elements in $S$ is at most $f(S)$.
One important property of polymatroid is that maximizing the sum (or a linear function with positive coefficients) in a polymatroid can be solved by the greedy algorithm, which is a simple algorithm that increases any element in the solution as long as possible. Both algorithms essentially convert the subproblem into polymatroid maximization and solves it using some efficient implementation of greedy algorithms.

The algorithms deviate in the representations of $f_\tau$.
Cheng and Cunningham \cite{ChengCunningham94} use a polymatroid on vertex sets associated with a submodular function 
$$g_\tau(S) = \begin{cases}
d(S)-2\tau & r\notin S\\
d(S) & r\in S
\end{cases}$$
where $r$ is an arbitrary root vertex.
It can be shown that minimizing $f_\tau(\mathcal{P})$ is equivalent to maximizing the sum of elements in the polymatroid associated with $g$, which can be solved by the greedy algorithm.
We sketch the proof of this claim below.
Let $x$ be in the polymatroid, i.e., $x(S)\le g(S)\forall S\subseteq V$ ($x$ is extended to a set function by taking sum). 
For any multiway cut $\mathcal{P}$, we have
$$x(V)= \sum_{S\in \mathcal{P}} x(S) \le \sum_{S\in \mathcal{P}} g(S) = (\sum_{S\in \mathcal{P}}d(S) )- 2\tau(|\mathcal{P}|-1) = 2f_\tau(\mathcal{P})$$
So the minimizer of $f_\tau(\mathcal{P})$ gives the tightest upper bound for $x(V)$. It turns out that this tightest bound can be achieved.

In constrast, Gabow \cite{Gabow95} uses a polymatroid on edge sets associated with a submodular function $\tau\cdot r(F)$,
where $r$ is the rank function of graphic matroid. In addition, we bound each element by the edge capacity.
It can be shown that minimizing $f_\tau(\mathcal{P})$ is equivalent to maximizing the sum of elements in the bounded polymatroid, which can be solved by the greedy algorithm (increasing each element is limited by the bound).
We sketch the proof of this claim below.
Let $x$ be in the bounded polymatroid, i.e., $x(F)\le \tau \cdot r(F)\forall F\subseteq E$ and $x(e)\le c(e)$. 
For any multiway cut $\mathcal{P}$ where each component of $\mathcal{P}$ is connected, we have
$$x(E) =  x(\partial \mathcal{P})+\sum_{S\in \mathcal{P}} x(E[S]) 
\le  c(\partial\mathcal{P})+ \sum_{S\in \mathcal{P}} \tau (|S|-1)
=  d(\mathcal{P})+\tau (n-|\mathcal{P}|) = f_\tau(\mathcal{P})+\tau(n-1)$$
So the minimizer of $f_\tau(\mathcal{P})$ gives the tightest upper bound for $x(E)$. It turns out that this tightest bound can be achieved.

\paragraph{Algorithm for fractional tree packing.}
In \cite{GabowManu95}, the problem is reduced into rooted arborescence packing in directed graphs. The algorithm will greedily add at most $m$ (fractional) arborescences to form a maximal fractional arborescence packing.

From Edmond's theorem, we know the fractional $r$-arborescence packing value is equal to the $r$-mincut value. So, any $r$-mincut (represented by sink side) should have 1 incoming arc for any arborescence in the packing.
The algorithm maintains a laminar family $\mathcal{F}$ of $r$-mincuts, initially empty.
We require the candidate arborescence to have only 1 arc into any set in $\mathcal{F}$.
For a candidate $r$-arborescence $T$, we can add $\alpha$ copies of $T$ as long as removing $\alpha$ copies $T$ only decreases the $r$-mincut value by $\alpha$. When this is no longer possible for any $\alpha>0$, the $r$-mincut $S$ in $G-\eps T$ is a $r$-mincut in $G$ which is not captured by $\mathcal{F}$. We add $S$ into $\mathcal{F}$ and apply some uncrossing (precisely speaking, we add the union of $S$ and all sets in $\mathcal{F}$ that cross $S$).
Every time we add a set into $\mathcal{F}$, there is more sets in the laminar family after uncrossing. This must terminate after the laminar family distinguishes every vertex, in which case the structure of candidate arborescence is uniquely determined. 

Since each iteration either adds a set to $\mathcal{F}$ or uses up an edge, the algorithm terminates after $O(m)$ iterations.

}

\section{Preliminaries}\label{sec:prelim}

\subsection{Basic Notations for Graphs and Cuts}
A multiway cut of an undirected graph $G = (V, E)$ is a partition of $V$, i.e., a disjoint collection of subsets of $V$ such that their union is $V$. Each subset in this partition is called a side of the multiway cut. 
For a multiway cut $\mathcal{P}$, we use $\partial \mathcal{P}$ to denote the set of cut edges, i.e., edges whose two endpoints are in different sides of $\mathcal{P}$.
Sometimes, we abuse the term multiway cut to refer to the edge set of a multiway cut.
The all-singleton cut refers to the multiway cut that partitions $V$ into singleton vertices, i.e., $\{\{v\}: v\in V\}$.
A multiway cut with two sides $(S, V\setminus S)$ is a cut, and its edge set is denoted $\partial S$. 

We denote edge weights $c:E\to \mathbb{R}^{\ge 0}$. We extend $c$ to a set function $2^E\to \mathbb{R}^{\ge 0}$ by taking the sum. So, $c(\partial \cP)=\sum_{e\in \partial \cP} c(e)$ is the cut value of $\cP$, and $c(E)$ is the total edge weight.
We abbreviate $c(\partial \mathcal{P})$ by $d(\mathcal{P})$; for a cut $S$, we abbreviate $c(\partial S)$ by $d(S)$. We use a subscript to disambiguate functions defined on different graphs.

We use $G[S]$ to denote the induced subgraph on a vertex set $S$, and use $E[S]$ to denote the edge set of $G[S]$. Contracting a vertex set $S$ means to replace $S$ by a new node $s$, replace all edges $(u, v), u\in S, v\notin S$ by $(s, v)$, and delete all edges in $E[S]$. We use $G/S$ to denote the graph formed by contracting $S$ from $G$. For $U\supseteq S$ or $U\cap S=\emptyset$, we use $U/S$ to denote the result of $U$ after contracting $S$, i.e., $U/S=(U\setminus S)\cup\{s\}$ if $U\supseteq S$ and $U/S=U$ if $U\cap S=\emptyset$. For a partition $\mathcal{P}$ of $V$ and a set $S$ contained in a side of $\mathcal{P}$, we use $\mathcal{P}/S$ to denote the corresponding partition in $G/S$, i.e., $\mathcal{P}/S = \{P/S: P\in \mathcal{P}\}$.

\smallskip\noindent{\bf Rank.}
We use $r(\cdot)$ to denote the rank function of the graphic matroid on graph $G=(V,E)$. For any edge set $F\subseteq E$, we have $r(F) = |V| - {\cal C}(F)$, where ${\cal C}(F)$ is the number of connected components in the graph $(V, F)$. 
The following properties are standard.
\begin{fact}\label{fact:rank-submodular}
    $r(\cdot)$ is submodular, i.e., for edge sets $A\subseteq B$ and $X$ disjoint from $B$,
    \[r(B\cup X) - r(B) \le r(A\cup X) - r(A)\]
\end{fact}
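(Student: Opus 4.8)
The plan is to reduce the statement to the case $|X| = 1$ and then telescope. First I would rewrite everything in terms of components using $r(F) = |V| - \mathcal{C}(F)$, so that the claim becomes $\mathcal{C}(B) - \mathcal{C}(B \cup X) \le \mathcal{C}(A) - \mathcal{C}(A \cup X)$. Then I would order the edges of $X$ as $e_1,\dots,e_k$ and, for $0 \le i \le k$, set $A_i := A \cup \{e_1,\dots,e_i\}$ and $B_i := B \cup \{e_1,\dots,e_i\}$. Because $X$ is disjoint from $B$ (hence from $A$) and the $e_j$ are distinct, we have $A_{i-1}\subseteq B_{i-1}$ and $e_i \notin B_{i-1}$ for every $i$. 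Telescoping,
\[
r(B \cup X) - r(B) = \sum_{i=1}^{k}\big(r(B_{i-1}\cup\{e_i\}) - r(B_{i-1})\big),
\]
and similarly for $A$, so it is enough to prove the inequality when $X$ is a single edge $e \notin Q$ with $P \subseteq Q$.

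For that single-edge step, the key observation is that adding edges to a graph only merges connected components and never splits them: if $P \subseteq Q$, then two vertices in the same component of $(V,P)$ lie in the same component of $(V,Q)$. Writing $e = \{u,v\}$, it follows directly from $r(F) = |V| - \mathcal{C}(F)$ that $r(Q\cup\{e\}) - r(Q)$ equals $1$ when $u,v$ are in distinct components of $(V,Q)$ and $0$ otherwise, and likewise for $P$. If this marginal is $1$ for $Q$, then $u,v$ are in distinct components of $(V,Q)$, hence --- by the contrapositive of the observation --- in distinct components of $(V,P)$, so the marginal is also $1$ for $P$; since the $Q$-marginal always lies in $\{0,1\}$, the inequality $r(Q\cup\{e\}) - r(Q) \le r(P\cup\{e\}) - r(P)$ holds. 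Degenerate cases need no separate treatment: a self-loop, or a parallel copy of an edge already present, contributes $0$ on both sides.

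I do not expect a genuine obstacle here --- this is just the diminishing-returns form of submodularity of a matroid rank function, specialized to the graphic matroid. The only thing to be careful about is the bookkeeping in the telescoping, namely that the hypotheses $A_{i-1}\subseteq B_{i-1}$ and $e_i\notin B_{i-1}$ are preserved at each step, which is why I would introduce the indexed sets $A_i, B_i$ explicitly rather than arguing informally about ``adding the edges of $X$ one at a time.''
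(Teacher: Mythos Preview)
Your proof is correct. The paper does not actually prove this fact---it is stated as a standard property of the graphic matroid rank function with no argument given---so there is nothing to compare against. Your reduction to the single-edge case via telescoping, together with the observation that the marginal gain of an edge is $1$ exactly when its endpoints lie in different components (and that components only coarsen as edges are added), is a clean and complete justification.
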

\begin{fact}\label{fact:rank-diff-cut-parts}
    Consider a graph $G=(V,E)$ and vertex subset $U\subseteq V$, and suppose a multiway cut $\mathcal{P}$ of $G$ partitions $U$ into $\{U_i\}_{i=1}^k$, and all induced subgraphs $G[U_i]$ as well as $G[U]$ are connected. Then,
    \[r(E[U]) = |U|-1, \quad
    r(E[U]\setminus \partial \mathcal{P}) = |U|-k.\]
    It follows that
    \[k-1 = r(E[U]) - r(E[U]\setminus \partial \mathcal{P}).\]
\end{fact}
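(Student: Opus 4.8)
The plan is to reduce everything to the component-counting formula $r(F) = |V| - \mathcal{C}(F)$ recalled just above the statement, where $\mathcal{C}(F)$ denotes the number of connected components of the graph $(V,F)$ on the \emph{full} vertex set $V$. Both claimed equalities are then pure bookkeeping of components.

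First I would compute $r(E[U])$. The graph $(V, E[U])$ has no edge incident to any vertex of $V\setminus U$, so each of the $|V|-|U|$ vertices outside $U$ forms its own (isolated) component; meanwhile, since $G[U]$ is connected by hypothesis, the vertices of $U$ form a single component. Hence $\mathcal{C}(E[U]) = 1 + (|V|-|U|)$, and therefore $r(E[U]) = |V| - \mathcal{C}(E[U]) = |U|-1$.

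Next I would identify the edge set $E[U]\setminus\partial\mathcal{P}$. Any edge of $E[U]$ has both endpoints in $U$, and by definition it lies in $\partial\mathcal{P}$ exactly when its two endpoints lie in different sides of $\mathcal{P}$, i.e.\ (restricting $\mathcal{P}$ to $U$) in different parts $U_i$. Thus $E[U]\setminus\partial\mathcal{P} = \bigcup_{i=1}^k E[U_i]$, a disjoint union. In the graph $(V, E[U]\setminus\partial\mathcal{P})$, each induced subgraph $G[U_i]$ is connected, so the parts $U_1,\dots,U_k$ contribute exactly $k$ components, while the remaining $|V|-|U|$ vertices are again isolated. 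Hence $\mathcal{C}(E[U]\setminus\partial\mathcal{P}) = k + (|V|-|U|)$ and $r(E[U]\setminus\partial\mathcal{P}) = |U|-k$.

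Finally, subtracting the two identities yields $r(E[U]) - r(E[U]\setminus\partial\mathcal{P}) = (|U|-1) - (|U|-k) = k-1$, which is the last assertion. There is no genuine obstacle in this argument; the only points needing (slight) care are correctly accounting for the $|V|-|U|$ isolated vertices when the component-count formula is applied over all of $V$, and the observation that removing $\partial\mathcal{P}$ from $E[U]$ leaves precisely $\bigcup_i E[U_i]$ — both of which are immediate once we invoke the hypotheses that $G[U]$ and each $G[U_i]$ are connected.
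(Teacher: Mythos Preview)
Your proof is correct. The paper itself states this fact without proof, labeling it as ``standard'' alongside the submodularity of the rank function; your argument via the component-counting formula $r(F)=|V|-\mathcal{C}(F)$ is exactly the natural justification one would supply, and there is nothing further to compare.
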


\subsection{Cut Ratio and Skew-Density}
For any multiway cut $\mathcal{P}$, we call $\frac{d(\mathcal{P})}{|\mathcal{P}|-1}$ its {\em cut ratio}. Here, $|\cP|$ represents the number of subsets in the partition, i.e., the number of sides of $\cP$. A {\it minimum ratio cut} is a multiway cut with minimum cut ratio; its value is called the {\em strength} of the graph. An important property of a min-ratio cut is that all its sides are connected. 

\begin{fact}\label{fact:min-ratio-cut-connected}
Suppose $G$ is connected and $\mathcal{P}$ is the min-ratio cut of $G$. Then, the induced subgraph of each side $S\in \mathcal{P}$ is also connected.
\end{fact}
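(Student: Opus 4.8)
The plan is a standard refinement-and-contradiction argument. Suppose, for contradiction, that some side $S \in \mathcal{P}$ induces a disconnected subgraph $G[S]$. Let $S = S_1 \sqcup S_2 \sqcup \cdots \sqcup S_j$ be the decomposition of $S$ into the connected components of $G[S]$, where $j \ge 2$. Form a new multiway cut $\mathcal{P}'$ from $\mathcal{P}$ by replacing the single side $S$ with the $j$ sides $S_1, \ldots, S_j$ (keeping all other sides of $\mathcal{P}$ unchanged); this is again a valid partition of $V$. I would then compare the cut ratios of $\mathcal{P}$ and $\mathcal{P}'$.

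The two quantities to track are $d(\cdot)$ and the number of sides. First, because $S_1, \ldots, S_j$ are distinct connected components of $G[S]$, there is no edge of $G$ with both endpoints in $S$ but in different $S_i$'s; hence splitting $S$ creates no new cut edges and destroys none, so $\partial \mathcal{P}' = \partial \mathcal{P}$ and therefore $d(\mathcal{P}') = d(\mathcal{P})$. Second, $|\mathcal{P}'| = |\mathcal{P}| + (j-1)$, so $|\mathcal{P}'| - 1 = (|\mathcal{P}| - 1) + (j-1) > |\mathcal{P}| - 1$ since $j \ge 2$. Combining these, and using that $d(\mathcal{P}) > 0$ — which holds because $G$ is connected and $|\mathcal{P}| \ge 2$ force $\partial \mathcal{P} \ne \emptyset$, and edge weights are positive — we get
\[
\frac{d(\mathcal{P}')}{|\mathcal{P}'| - 1} = \frac{d(\mathcal{P})}{(|\mathcal{P}| - 1) + (j - 1)} < \frac{d(\mathcal{P})}{|\mathcal{P}| - 1},
\]
so $\mathcal{P}'$ has strictly smaller cut ratio than $\mathcal{P}$, contradicting the minimality of $\mathcal{P}$. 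Hence every side $S \in \mathcal{P}$ induces a connected subgraph.

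There is essentially no hard step here; the only point requiring a moment of care is confirming $d(\mathcal{P}) > 0$ so that strictly increasing the denominator strictly decreases the ratio (if one allowed zero-weight edges everywhere the claim could degenerate, but that is excluded in our setting). One could alternatively phrase the same argument contrapositively: among all minimum-ratio cuts, take one with the maximum number of sides; if a side were disconnected, the refinement above would yield a minimum-ratio cut with more sides (the ratio cannot decrease below the minimum, and the computation shows it does not increase), a contradiction — but the direct argument above is cleanest since we only need existence of connected sides for *the* (unique-ratio) min-ratio cut and the refinement strictly lowers the ratio, which is impossible.
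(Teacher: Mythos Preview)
Your proof is correct and essentially identical to the paper's own argument: assume a side is disconnected, refine it into its connected components to get a multiway cut with the same cut edge set but strictly more sides, and derive a strictly smaller ratio, contradicting minimality. The only extra detail you add is the explicit justification that $d(\mathcal{P})>0$, which the paper leaves implicit.
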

\begin{proof}
  Assume for contradiction that some $S\in \mathcal{P}$ is disconnected. Let $\{S_i\}_{i=1}^k, k\ge 2$ be the connected components of $S$. By replacing $S$ with $S_1, \ldots, S_k$ in $\mathcal{P}$, we get another multiway cut with the same cut edge set and strictly more sides. This implies the new cut will have strictly less ratio than $\mathcal{P}$, contradicting the assumption that $\mathcal{P}$ is a min ratio cut.
\end{proof}
For $S\subseteq V$ with $|S|\ge 2$, we define the \textit{skew-density} of $S$ as
\[\rho(S) = \frac{c(E[S])}{|S|-1}.\]
In particular, skew-density of singleton sets or the empty set is defined to be 0. 
Because singleton subgraphs have no edges, for any nonempty $S$ we have
\begin{equation}\label{eq:density-def-mult}
    c(E[S]) = \rho(S)\cdot (|S|-1).
\end{equation}

\subsection{Cut Hierarchy}

Consider the following recursive process. Given a connected graph $G$, find a min-ratio cut $\mathcal{P}$.  By \Cref{fact:min-ratio-cut-connected}, all sides of $\mathcal{P}$ are connected, so we can recursively run the same process on the induced subgraphs of all sides of $\mathcal{P}$. The recursion terminates at singleton subgraphs.
All sides of multiway cuts visited during this process form a laminar family.
We call this family the \emph{cut hierarchy} of $G$.

The cut hierarchy can be represented by a tree, where the root is $V$, and for each internal node $U$, all children of $U$ represent a min-ratio cut of $G[U]$. All leaves represent singleton sets.

\smallskip\noindent{\bf Canonical Cut Hierarchy.}
Note that the cut hierarchy is not necessarily unique because the min-ratio cut may not be unique. To avoid ambiguity in our algorithm, we always choose the min-ratio cut with most sides. We show that this min-ratio cut is unique, because of the following property:

\begin{lemma}\label{lem:min-ratio-cut-close-under-union}
    If $\mathcal{P}$ and $\mathcal{Q}$ are both min-ratio cuts, then the partition $\mathcal{W}$ formed by all connected components of $E\setminus(\partial \mathcal{P} \cup \partial \mathcal{Q})$ is also a min-ratio cut, and $\partial \mathcal{W} = \partial \mathcal{P} \cup \partial \mathcal{Q}$. 
\end{lemma}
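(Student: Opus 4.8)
The plan is to show the two min-ratio cuts $\mathcal{P},\mathcal{Q}$ can be "uncrossed" into their common refinement $\mathcal{W}$, with no loss in cut ratio. The key quantitative handle is the skew-density $\rho$ and a submodularity-style inequality for it on induced subgraphs. Concretely, I would first record the basic identity that for a multiway cut $\mathcal{R}$ of a connected graph $G$ all of whose sides are connected, Fact~\ref{fact:rank-diff-cut-parts} gives $|\mathcal{R}|-1 = r(E) - r(E\setminus\partial\mathcal{R})$, so the cut ratio of $\mathcal{R}$ equals $d(\mathcal{R})/(r(E)-r(E\setminus\partial\mathcal{R}))$. Since $\mathcal{P}$ and $\mathcal{Q}$ are min-ratio cuts, all their sides are connected by Fact~\ref{fact:min-ratio-cut-connected}; let $\tau=\str(G)$ denote the (common) minimum cut ratio.

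The heart of the argument is a "submodularity of min-ratio cuts" inequality. Define $\mathcal{W}$ as the partition into connected components of $(V, E\setminus(\partial\mathcal{P}\cup\partial\mathcal{Q}))$ and let $\mathcal{V}$ be the partition into connected components of $(V, E\setminus(\partial\mathcal{P}\cap\partial\mathcal{Q}))$ — the "meet" of $\mathcal{P}$ and $\mathcal{Q}$ in the lattice of connected partitions. First I would verify $\partial\mathcal{W}\subseteq\partial\mathcal{P}\cup\partial\mathcal{Q}$ (immediate: an edge internal to every side of $\mathcal{W}$ lies in a connected component of $E\setminus(\partial\mathcal{P}\cup\partial\mathcal{Q})$) and $\partial\mathcal{V}\supseteq\partial\mathcal{P}\cap\partial\mathcal{Q}$; actually $\partial\mathcal{W} = \partial\mathcal{P}\cup\partial\mathcal{Q}$ needs the reverse inclusion too, which follows once we know $\mathcal{W}$ is a min-ratio cut (any edge of $\partial\mathcal{P}$ or $\partial\mathcal{Q}$ that were internal to a side of $\mathcal{W}$ could be added back, contradicting minimality of $d$ among min-ratio cuts — or it follows directly from the rank bookkeeping below). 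The crucial step is the rank inequality
\[
  \bigl(r(E)-r(E\setminus\partial\mathcal{W})\bigr) + \bigl(r(E)-r(E\setminus\partial\mathcal{V})\bigr)
  \;\ge\; \bigl(r(E)-r(E\setminus\partial\mathcal{P})\bigr) + \bigl(r(E)-r(E\setminus\partial\mathcal{Q})\bigr),
\]
i.e. $(|\mathcal{W}|-1)+(|\mathcal{V}|-1)\ge (|\mathcal{P}|-1)+(|\mathcal{Q}|-1)$. This is where I expect the main obstacle to lie: it is essentially submodularity of $F\mapsto r(E\setminus F)$ together with the fact that $E\setminus\partial\mathcal{W} = (E\setminus\partial\mathcal{P})\cup(E\setminus\partial\mathcal{Q})$ and $E\setminus\partial\mathcal{V} = (E\setminus\partial\mathcal{P})\cap(E\setminus\partial\mathcal{Q})$, so it reduces to $r(A\cup B)+r(A\cap B)\le r(A)+r(B)$ for the rank function of the graphic matroid (Fact~\ref{fact:rank-submodular} in its standard form). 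I would need to be careful that "connected components of $E\setminus(\cdots)$" correctly corresponds to the partition induced by the edge set $E\setminus(\cdots)$, so that the count of sides is exactly $r(E)-r(E\setminus\partial\cdot)+1$; this is where the connectivity of the sides of $\mathcal{P},\mathcal{Q}$ is used.

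Combining, $d(\mathcal{W}) \le d(\mathcal{P})+d(\mathcal{Q}) - d(\mathcal{V})$ since $\partial\mathcal{W}\cup\partial\mathcal{V}\subseteq\partial\mathcal{P}\cup\partial\mathcal{Q}$ and $\partial\mathcal{W}\cap\partial\mathcal{V}\supseteq\partial\mathcal{P}\cap\partial\mathcal{Q}$ at the level of edge sets (again standard inclusion–exclusion on multisets of edges). Using $d(\mathcal{P})=\tau(|\mathcal{P}|-1)$, $d(\mathcal{Q})=\tau(|\mathcal{Q}|-1)$, $d(\mathcal{V})\ge\tau(|\mathcal{V}|-1)$ (as $\tau$ is the minimum cut ratio), and the rank inequality above, I get
\[
  d(\mathcal{W}) \;\le\; \tau\bigl((|\mathcal{P}|-1)+(|\mathcal{Q}|-1)-(|\mathcal{V}|-1)\bigr) \;\le\; \tau\,(|\mathcal{W}|-1),
\]
so $d(\mathcal{W})/(|\mathcal{W}|-1)\le\tau$, forcing equality: $\mathcal{W}$ is a min-ratio cut. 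Finally, equality throughout forces $d(\mathcal{V})=\tau(|\mathcal{V}|-1)$ and $d(\mathcal{W})=d(\mathcal{P})+d(\mathcal{Q})-d(\mathcal{V})$ with no slack, which (together with $\partial\mathcal{W}\subseteq\partial\mathcal{P}\cup\partial\mathcal{Q}$) pins down $\partial\mathcal{W}=\partial\mathcal{P}\cup\partial\mathcal{Q}$ exactly: no edge of $\partial\mathcal{P}\cup\partial\mathcal{Q}$ can be absent from $\partial\mathcal{W}$, because then the inclusion–exclusion accounting of edge weights would be strict and contradict the tight inequality. I should double check the corner case where $\mathcal{W}$ is the trivial one-side partition (only if $\partial\mathcal{P}\cup\partial\mathcal{Q}$ leaves $G$ connected, impossible when both are nontrivial cuts of a connected graph) so that $|\mathcal{W}|-1\ge 1$ and the ratio is well-defined.
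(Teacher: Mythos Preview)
Your overall strategy is correct and is essentially the paper's: introduce the ``meet'' partition $\mathcal{V}$ (the paper calls it $\mathcal{R}$), use inclusion--exclusion on $\partial\mathcal{P}$ and $\partial\mathcal{Q}$, bound $c(\partial\mathcal{P}\cap\partial\mathcal{Q})\ge \tau(|\mathcal{V}|-1)$, and finish with the counting inequality $(|\mathcal{W}|-1)+(|\mathcal{V}|-1)\ge(|\mathcal{P}|-1)+(|\mathcal{Q}|-1)$. Where you differ is in how you prove this counting inequality: you invoke submodularity of the graphic matroid rank directly on $A=E\setminus\partial\mathcal{P}$ and $B=E\setminus\partial\mathcal{Q}$, while the paper gives an ad-hoc combinatorial argument building spanning trees inside each side of $\mathcal{R}$. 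Your route is cleaner and more conceptual; the paper's is self-contained but longer.

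Two local issues to fix. First, your set identities are swapped: with $A=E\setminus\partial\mathcal{P}$ and $B=E\setminus\partial\mathcal{Q}$ one has $A\cap B=E\setminus(\partial\mathcal{P}\cup\partial\mathcal{Q})$ and $A\cup B=E\setminus(\partial\mathcal{P}\cap\partial\mathcal{Q})$; also you should write the side counts directly as $|\mathcal{W}|=n-r(A\cap B)$ and $|\mathcal{V}|=n-r(A\cup B)$ rather than going through $\partial\mathcal{W},\partial\mathcal{V}$, since $\partial\mathcal{V}$ may be a proper subset of $\partial\mathcal{P}\cap\partial\mathcal{Q}$. The swap is harmless because the submodularity inequality you need is symmetric in $A\cup B$ and $A\cap B$, but the write-up should be corrected. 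Second, your justification ``$\partial\mathcal{W}\cap\partial\mathcal{V}\supseteq\partial\mathcal{P}\cap\partial\mathcal{Q}$'' is false in general; the inequality $d(\mathcal{W})\le d(\mathcal{P})+d(\mathcal{Q})-d(\mathcal{V})$ follows instead from $\partial\mathcal{W}\subseteq\partial\mathcal{P}\cup\partial\mathcal{Q}$ together with $\partial\mathcal{V}\subseteq\partial\mathcal{P}\cap\partial\mathcal{Q}$ and ordinary inclusion--exclusion on $c(\partial\mathcal{P}\cup\partial\mathcal{Q})$.

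For the second conclusion $\partial\mathcal{W}=\partial\mathcal{P}\cup\partial\mathcal{Q}$, the paper proves both inclusions directly at the outset (any edge of $\partial\mathcal{P}$ separates two sides of $\mathcal{P}$, hence its endpoints are disconnected in $E\setminus(\partial\mathcal{P}\cup\partial\mathcal{Q})$). Your tightness argument also works once the chain of inequalities is set up correctly, but it relies on strictly positive edge weights; the direct argument is simpler and weight-free.
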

\begin{proof}
    We first prove $\partial \mathcal{W} = \partial \mathcal{P} \cup \partial \mathcal{Q}$.
    \begin{enumerate}
        \item For any edge $(u, v)$ in $\partial \mathcal{W}$, $u$ and $v$ are disconnected in $E\setminus(\partial \mathcal{P} \cup \partial \mathcal{Q})$. The edge $(u, v)$ must be deleted to disconnect $u, v$, so $(u, v)\in \partial \mathcal{P} \cup \partial \mathcal{Q}$. We conclude that $\partial \mathcal{W} \subseteq \partial \mathcal{P} \cup \partial \mathcal{Q}$. 
        \item For any edge $(u, v)\in \partial \mathcal{P}$, $u$ and $v$ are in different sides of $\mathcal{P}$. Then, any $u$-$v$ path must contain an edge in $\partial \mathcal{P}$, and there cannot be a $u$-$v$ path after deleting $\partial \mathcal{P}$. In other words, $u$ and $v$ are disconnected in $E\setminus(\partial \mathcal{P} \cup \partial \mathcal{Q})$, and $(u, v)\in \partial \mathcal{W}$. We conclude that $\partial \mathcal{P} \subseteq \partial \mathcal{W}$. $\partial \mathcal{Q} \subseteq \partial \mathcal{W}$ follows the same argument, so $\partial \mathcal{P} \cup \partial \mathcal{Q} \subseteq \partial \mathcal{W}$. 
    \end{enumerate}

    Next, we prove that $\mathcal{W}$ is a min ratio cut.
    Let $\pi$ be the ratio of min ratio cuts $\mathcal{P}$ and $\mathcal{Q}$.
    It suffices to prove that the ratio of $\mathcal{W}$ is at most $\pi$.

    We start by applying the inclusion-exclusion principle to get
    \[c(\partial \mathcal{P}\cup \partial \mathcal{Q}) =  c(\partial \mathcal{P})+c(\partial \mathcal{Q}) - c(\partial \mathcal{P}\cap \partial \mathcal{Q})\]
    Notice that the first three terms are cut values. We have
    \[d(\mathcal{W}) = d(\mathcal{P})+d(\mathcal{Q}) - c(\partial \mathcal{P}\cap \partial \mathcal{Q})\]
    The last term may not be a cut value. We relate it to another partition as follows.
    Let $\mathcal{R}$ be the partition formed by all connected components in $E\setminus (\partial \mathcal{P}\cap \partial \mathcal{Q})$.
By the same argument as point 1 above, $\partial \mathcal{R}\subseteq \partial \mathcal{P}\cap \partial \mathcal{Q}$. We have
    \[c(\partial \mathcal{P}\cap \partial \mathcal{Q})\ge c(\partial \mathcal{R}) \ge \pi(|\mathcal{R}|-1)\]
    When $\mathcal{R}$ is a multiway cut, the second inequality above is by minimality of $\pi$; when $\mathcal{R}$ only has one component, the second inequality degenerates to $0\ge 0$.

    Recall that $\pi=\frac{d(\mathcal{P})}{|\mathcal{P}|-1} = \frac{d(\mathcal{Q})}{|\mathcal{Q}|-1}$.
    So,
     \[d(\mathcal{W}) = d(\mathcal{P})+d(\mathcal{Q}) - c(\partial \mathcal{P}\cap \partial \mathcal{Q}) \le \pi(|\mathcal{P}|-1) +\pi(|\mathcal{Q}|-1)-\pi(|\mathcal{R}|-1)\]
    We claim that $|\mathcal{P}|+|\mathcal{Q}|-|\mathcal{R}| \le |\mathcal{W}|$. Then, we have $d(\mathcal{W})\le \pi(|\mathcal{W}|-1)$, which means the ratio of $\mathcal{W}$ is at most $\pi$.

    It remains to establish the claim. Consider any side $R_i$ of $\mathcal{R}$. 
    Let $\mathcal{P}_i\subseteq\mathcal{P}$ (resp.\ $\mathcal{Q}_i\subseteq\mathcal{Q},\mathcal{W}_i\subseteq\mathcal{W}$) be the sides of $\mathcal{P}$ (resp.\ $\mathcal{Q}, \mathcal{W}$) that are contained in $R_i$. Because $\mathcal{P}$, $\mathcal{Q}$, $\mathcal{W}$ are subdivisions of $\mathcal{R}$ (since $\partial \mathcal{R}\subseteq \partial \mathcal{P},\partial \mathcal{R}\subseteq \partial \mathcal{Q},\partial \mathcal{R}\subseteq \partial \mathcal{W}$), the subsets $\mathcal{P}_i$ (resp.\ $\mathcal{Q}_i, \mathcal{W}_i$) form a partition of $\mathcal{P}$ (resp.\ $\mathcal{Q}, \mathcal{W}$). In particular, $\sum_i|\mathcal{P}_i|=|\mathcal{P}|$, $\sum_i|\mathcal{Q}_i|=|\mathcal{Q}|$, and $\sum_i|\mathcal{W}_i|=|\mathcal{W}|$.
    
    As a side of $\mathcal{R}$, $R_i$ is connected after deleting $\partial \mathcal{P}\cap \partial \mathcal{Q}$. By definition of $\mathcal{W}$, each side of $\mathcal{W}$ is connected. Let $T_i$ be a tree spanning $R_i$ in $E\setminus (\partial \mathcal{P}\cap \partial \mathcal{Q})$ satisfying $|T_i\cap(\partial\mathcal{W})|=|\mathcal{W}_i|-1$, which can be constructed by starting with a spanning tree in each $\mathcal {W}_i$ and then connecting the spanning trees to each other. 
    Because $T_i$ connects $|\mathcal{P}_i|$ components of $\mathcal{P}$, $|T_i\cap \partial \mathcal{P}| \ge |\mathcal{P}_i|-1$. By the same argument, $|T_i\cap \partial \mathcal{Q}| \ge |\mathcal{Q}_i|-1$.
    Because no edge of $T_i$ is in both $\partial \mathcal{P}$ and $\partial \mathcal{Q}$, the sets $T_i\cap \partial \mathcal{P}$ and $T_i\cap \partial \mathcal{Q}$ are disjoint. In conclusion
    \[|\mathcal{W}_i|-1=|T_i\cap\partial\mathcal{W}| = |T_i\cap(\partial\mathcal{P}\cup\partial\mathcal{Q})| = |T_i\cap \partial \mathcal{P}| + |T_i\cap \partial \mathcal{Q}| \ge |\mathcal{P}_i|-1 + |\mathcal{Q}_i|-1\]
    Suppose $\mathcal{R}=\{R_1, \ldots, R_k\}$. Summing over $i$,
    \[|\mathcal{W}| - k = \sum_{i=1}^k (|\mathcal{W}_i|-1) \ge \sum_{i=1}^k (|\mathcal{P}_i|-1+|\mathcal{Q}_i|-1) = |\mathcal{P}|-k+|\mathcal{Q}|-k\]
    which implies the claim.
    \eat{ Fix an arbitrary vertex $v\in V$, and let $\mathcal{P}_{1u}$ be the event that $\mathcal{P}$ separates vertices 1 and $u$. \textcolor{blue}{Jason: why is the first equality below true? $\sum_u1[\mathcal{P}_{1u}] $ is the number of vertices outside vertex $1$'s component, which is not necessarily $|\mathcal{P}|-1$ ? Ruoxu: The sum is over equivalence classes partitioned by P and Q}
     \[|\mathcal{P}|-1+|\mathcal{Q}|-1 
     \le \sum_{u\ne 1}1[\mathcal{P}_{1u}] + 1[\mathcal{Q}_{1u}]
     =\sum_{u\ne 1}1[\mathcal{P}_{1u}\lor \mathcal{Q}_{1u}] + 1[\mathcal{P}_{1u}\land \mathcal{Q}_{1u}]
     = \sum_{u\ne 1} 1 + 1[\mathcal{R}_{1u}]
    = n-1 + |\mathcal{R}|-1\]
    So, $|E|\le \pi(n-1)$. In conclusion, $E=\partial \mathcal{P}\cup\partial \mathcal{Q}$ is also a min ratio cut.}
\end{proof}

Let $\mathcal{W}$ be the connected components after deleting the edge sets of all min-ratio cuts. 
By \Cref{lem:min-ratio-cut-close-under-union}, $\mathcal{W}$ is also a min-ratio cut.
$\mathcal{W}$ maximizes the number of sides among all min-ratio cuts.
We call $\mathcal{W}$ the \emph{maximal min-ratio cut}.

We use \emph{canonical cut hierarchy} to denote the hierarchy constructed by the recursive definition given above, where in each step we choose the maximal min-ratio cut. From now on, we mean the canonical cut hierarchy when we refer to the cut hierarchy of a graph.

\section{Structure of Cut Hierarchy}\label{sec:structure}

Of key importance to our cut hierarchy algorithm are the notions of star sets and dense cores. We define these in this section, and establish structural connections between them. We then algorithmically exploit these properties in our cut hierarchy algorithm, which appears in the next section.

\subsection{Star Sets and Dense Cores}

We define a \emph{star set} to be a vertex set that corresponds to an internal node of the canonical cut hierarchy whose children are all leaves. I.e., the maximal min-ratio cut of $G[S]$ is the all-singleton cut.

A single step of our cut hierarchy algorithm is to find a star set, which it can then contract and recurse. 
To find a star set, our algorithm finds a dense core instead, which is defined below.
Recall that $\rho(S)$ is the skew-density of $S$, defined as
$\rho(S) = \frac{c(E[S])}{|S|-1}$ when $|S|\ge 2$, and $\rho(S)=0$ when $|S|\le 1$.
A dense core is a vertex set that is skew-denser than all its subsets and supersets. Formally,

\begin{defn}
    A vertex set $S\subseteq V$ is called a \emph{dense-core} if $\forall W\subseteq S, \rho(W)\le \rho(S)$, and $\forall U\supsetneqq S, \rho(U)<\rho(S)$. 
\end{defn}

We say a vertex set is a skew-densest set if it has maximum skew-density among all vertex sets of the graph. We say a vertex set is a maximum skew-densest set if it maximizes size among all skew-densest sets. The following fact is straightforward to check from the definition.
\begin{fact}\label{fact:core-exist}
    A maximum skew-densest set is a dense core. Hence, every graph contains a dense core.
\end{fact}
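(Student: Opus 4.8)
The plan is to show that a maximum skew-densest set $S$ satisfies both defining conditions of a dense core: (i) $\rho(W) \le \rho(S)$ for all $W \subseteq S$, and (ii) $\rho(U) < \rho(S)$ for all $U \supsetneqq S$. Condition (i) is immediate: by definition $S$ has the maximum skew-density among \emph{all} vertex sets, so in particular $\rho(W) \le \rho(S)$ for every $W \subseteq S$. Condition (ii) has the weak inequality $\rho(U) \le \rho(S)$ for free from maximality, so the only real content is ruling out equality: if $U \supsetneqq S$ were also skew-densest, this would contradict $S$ being a \emph{maximum}-size skew-densest set. The second sentence of the fact, that every graph has a dense core, then follows since the maximum skew-densest set always exists (the skew-density function takes finitely many values, and e.g. any single edge gives a set with positive skew-density, so the max is attained on a nonempty, well-defined set).

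The one subtle point worth spelling out is why condition (ii) quantifies only over proper supersets $U$ with $|U| \ge |S|+1 \ge 3$, where $\rho(U)$ is defined by the formula rather than being $0$; since $S$ itself has $\rho(S) > 0$ (it contains at least one edge, as otherwise its skew-density would not be maximal unless the graph is edgeless, a degenerate case), any $U \supsetneqq S$ has $|U| \ge 2$ and the formula $\rho(U) = c(E[U])/(|U|-1)$ applies, so comparing $\rho(U)$ with $\rho(S)$ via the maximality of $S$ is legitimate. I would handle the edgeless graph separately or simply note it as a trivial/vacuous case.

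I do not anticipate any genuine obstacle here — this is a definitional unpacking, and the main ``work'' is just observing that \textbf{maximum-size} skew-densest (rather than merely skew-densest) is exactly what upgrades the weak inequality over supersets to a strict one. I would write it as a two- or three-line argument: first invoke global maximality of skew-density for both conditions' weak forms, then invoke maximality of size to forbid equality with a proper superset, then note existence of the maximum skew-densest set to conclude every graph has a dense core.
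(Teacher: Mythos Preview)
Your proposal is correct and matches the paper's intent: the paper itself gives no proof of this fact, stating only that it is ``straightforward to check from the definition,'' and your argument is exactly the natural unpacking --- global maximality of skew-density gives both weak inequalities, and maximality of size among skew-densest sets upgrades the superset inequality to strict.
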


The following fact shows that the induced subgraph of a dense core is connected. 
\begin{fact}\label{fact: core-connected}
    If $S$ is a dense core in a nonempty graph $G$, then $G[S]$ is connected.
\end{fact}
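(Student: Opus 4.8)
The plan is to argue by contradiction using both halves of the dense-core definition. First I would dispose of the trivial case $|S|\le 1$, where $G[S]$ is connected by convention, and then assume $|S|\ge 2$ and, for contradiction, that $G[S]$ is disconnected, say with connected components $S_1,\dots,S_k$ where $k\ge 2$. The key elementary observation is that $G[S]$ has no edges between distinct components, so $E[S]=\bigsqcup_{i=1}^k E[S_i]$, whence $c(E[S])=\sum_i c(E[S_i])$, and of course $|S|=\sum_i|S_i|$.

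Next I would bring in the ``subset'' half of the dense-core property. Each $S_i\subseteq S$, so $\rho(S_i)\le\rho(S)$; rewriting through \eqref{eq:density-def-mult} this reads $c(E[S_i])\le\rho(S)\,(|S_i|-1)$ — an inequality that also holds, trivially as $0\le 0$, when $|S_i|=1$. Summing over $i$ and using $c(E[S])=\sum_i c(E[S_i])$ gives $c(E[S])\le\rho(S)\,(|S|-k)$. On the other hand, \eqref{eq:density-def-mult} applied to $S$ itself gives $c(E[S])=\rho(S)\,(|S|-1)$. Comparing the two yields $\rho(S)\,(|S|-1)\le\rho(S)\,(|S|-k)$, i.e.\ $\rho(S)\,(k-1)\le 0$; since $k\ge 2$ and $\rho(\cdot)\ge 0$ always, this forces $\rho(S)=0$ (and hence $E[S]=\emptyset$, as edge weights are positive).

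It remains to rule out $\rho(S)=0$, and here I would invoke the ``superset'' half of the definition. If $S\ne V$, pick any $v\in V\setminus S$ and set $U=S\cup\{v\}\supsetneq S$; being a dense core forces $\rho(U)<\rho(S)=0$, which is impossible since skew-density is nonnegative. If instead $S=V$, then $G=G[S]$ is disconnected, contradicting the standing assumption (under which the cut hierarchy, and hence the notion of dense core, is used) that the ambient graph is connected. Either way we have a contradiction, so $G[S]$ must be connected.

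I do not expect a genuine obstacle here; the proof is short. The two points that need care are (i) stating the component inequality $c(E[S_i])\le\rho(S)(|S_i|-1)$ so that it is valid for singleton components (where the definition of $\rho$ is a special case), and (ii) closing the degenerate $\rho(S)=0$ branch — there the subset condition is satisfied only with equality and is therefore useless, so one genuinely must use a strict superset or the connectedness of $G$.
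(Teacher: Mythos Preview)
Your argument is essentially the paper's: decompose into components, sum the inequalities $c(E[S_i])\le\rho(S)(|S_i|-1)$, and compare with $c(E[S])=\rho(S)(|S|-1)$ to obtain $\rho(S)(k-1)\le 0$. The one slip is in closing the $S=V$ branch: you appeal to $G$ being connected, but the Fact only assumes $G$ is \emph{nonempty} (has at least one edge), not connected, and the dense-core definition does not require connectedness of the ambient graph. The paper handles this more cleanly: nonemptiness gives $\rho(V)>0$, and then whether $V=S$ or $V\supsetneq S$, the dense-core property yields $\rho(S)\ge\rho(V)>0$ in a single stroke. Your $S=V$ case is repaired the same way: $\rho(S)=\rho(V)=0$ would force $c(E)=0$, contradicting nonemptiness.
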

\begin{proof}
 Let $\{S_i\}_{i=1}^k, k\ge 1$ be the connected components of $S$. Because $S$ is a dense-core and $S_i\subseteq S$, we have $\rho(S_i)\le \rho(S)$ for each $S_i$. Combined with (\ref{eq:density-def-mult}), we have for all $i$,
    \[c(E[S_i]) = \rho(S_i)\cdot (|S_i|-1) \le \rho(S)\cdot (|S_i|-1)\]
    Taking the sum gives
    \[c(E[S]) = \sum_{i=1}^k c(E[S_i]) \le \rho(S)\cdot \bigg(\sum_{i=1}^k (|S_i| - 1)\bigg) = \rho(S)\cdot (|S|-k)\]
    Combined with $c(E[S])=\rho(S)\cdot (|S|-1)$ from (\ref{eq:density-def-mult}), we have 
    \[\rho(S)(|S|-1)\le \rho(S)(|S|-k) \implies \rho(S)(k-1)\le 0\]
    We assume the graph $G$ is nonempty, so $\rho(V)>0$. Either $V=S$ or $V\supsetneqq S$, in both cases we have $\rho(S)\ge \rho(V)$. Then $\rho(S)> 0$, and we conclude $k-1\le 0$. Because $k$ is a positive integer, we have $k=1$, which means $S$ is connected.
\end{proof}

\subsection{Dense cores are Star Sets}

Next, we show that dense cores and star sets are closely related.
We first show the relationship between a dense core and the min-ratio cut.

\begin{figure}
    \centering
 
\tikzset{
pattern size/.store in=\mcSize, 
pattern size = 5pt,
pattern thickness/.store in=\mcThickness, 
pattern thickness = 0.3pt,
pattern radius/.store in=\mcRadius, 
pattern radius = 1pt}
\makeatletter
\pgfutil@ifundefined{pgf@pattern@name@_7zh3hxdjf}{
\pgfdeclarepatternformonly[\mcThickness,\mcSize]{_7zh3hxdjf}
{\pgfqpoint{-\mcThickness}{-\mcThickness}}
{\pgfpoint{\mcSize}{\mcSize}}
{\pgfpoint{\mcSize}{\mcSize}}
{
\pgfsetcolor{\tikz@pattern@color}
\pgfsetlinewidth{\mcThickness}
\pgfpathmoveto{\pgfpointorigin}
\pgfpathlineto{\pgfpoint{0}{\mcSize}}
\pgfusepath{stroke}
}}
\makeatother

 
\tikzset{
pattern size/.store in=\mcSize, 
pattern size = 5pt,
pattern thickness/.store in=\mcThickness, 
pattern thickness = 0.3pt,
pattern radius/.store in=\mcRadius, 
pattern radius = 1pt}
\makeatletter
\pgfutil@ifundefined{pgf@pattern@name@_6cgr2dufv}{
\pgfdeclarepatternformonly[\mcThickness,\mcSize]{_6cgr2dufv}
{\pgfqpoint{-\mcThickness}{-\mcThickness}}
{\pgfpoint{\mcSize}{\mcSize}}
{\pgfpoint{\mcSize}{\mcSize}}
{
\pgfsetcolor{\tikz@pattern@color}
\pgfsetlinewidth{\mcThickness}
\pgfpathmoveto{\pgfpointorigin}
\pgfpathlineto{\pgfpoint{0}{\mcSize}}
\pgfusepath{stroke}
}}
\makeatother

 
\tikzset{
pattern size/.store in=\mcSize, 
pattern size = 5pt,
pattern thickness/.store in=\mcThickness, 
pattern thickness = 0.3pt,
pattern radius/.store in=\mcRadius, 
pattern radius = 1pt}
\makeatletter
\pgfutil@ifundefined{pgf@pattern@name@_w1m9gvu4f}{
\pgfdeclarepatternformonly[\mcThickness,\mcSize]{_w1m9gvu4f}
{\pgfqpoint{-\mcThickness}{-\mcThickness}}
{\pgfpoint{\mcSize}{\mcSize}}
{\pgfpoint{\mcSize}{\mcSize}}
{
\pgfsetcolor{\tikz@pattern@color}
\pgfsetlinewidth{\mcThickness}
\pgfpathmoveto{\pgfpointorigin}
\pgfpathlineto{\pgfpoint{0}{\mcSize}}
\pgfusepath{stroke}
}}
\makeatother

 
\tikzset{
pattern size/.store in=\mcSize, 
pattern size = 5pt,
pattern thickness/.store in=\mcThickness, 
pattern thickness = 0.3pt,
pattern radius/.store in=\mcRadius, 
pattern radius = 1pt}
\makeatletter
\pgfutil@ifundefined{pgf@pattern@name@_j64q8mp8h}{
\pgfdeclarepatternformonly[\mcThickness,\mcSize]{_j64q8mp8h}
{\pgfqpoint{0pt}{0pt}}
{\pgfpoint{\mcSize+\mcThickness}{\mcSize+\mcThickness}}
{\pgfpoint{\mcSize}{\mcSize}}
{
\pgfsetcolor{\tikz@pattern@color}
\pgfsetlinewidth{\mcThickness}
\pgfpathmoveto{\pgfqpoint{0pt}{0pt}}
\pgfpathlineto{\pgfpoint{\mcSize+\mcThickness}{\mcSize+\mcThickness}}
\pgfusepath{stroke}
}}
\makeatother

 
\tikzset{
pattern size/.store in=\mcSize, 
pattern size = 5pt,
pattern thickness/.store in=\mcThickness, 
pattern thickness = 0.3pt,
pattern radius/.store in=\mcRadius, 
pattern radius = 1pt}
\makeatletter
\pgfutil@ifundefined{pgf@pattern@name@_f2ffvty71}{
\pgfdeclarepatternformonly[\mcThickness,\mcSize]{_f2ffvty71}
{\pgfqpoint{0pt}{0pt}}
{\pgfpoint{\mcSize+\mcThickness}{\mcSize+\mcThickness}}
{\pgfpoint{\mcSize}{\mcSize}}
{
\pgfsetcolor{\tikz@pattern@color}
\pgfsetlinewidth{\mcThickness}
\pgfpathmoveto{\pgfqpoint{0pt}{0pt}}
\pgfpathlineto{\pgfpoint{\mcSize+\mcThickness}{\mcSize+\mcThickness}}
\pgfusepath{stroke}
}}
\makeatother
\tikzset{every picture/.style={line width=0.75pt}} 

\begin{tikzpicture}[x=0.75pt,y=0.75pt,yscale=-1,xscale=1]

\draw  [pattern=_7zh3hxdjf,pattern size=7.5pt,pattern thickness=0.75pt,pattern radius=0pt, pattern color={rgb, 255:red, 65; green, 117; blue, 5}][line width=1.5]  (393.63,200.62) .. controls (379.72,223.79) and (354.04,239.33) .. (324.67,239.33) .. controls (295.05,239.33) and (269.2,223.54) .. (255.36,200.06) -- (324.67,160.8) -- cycle ;
\draw  [pattern=_6cgr2dufv,pattern size=7.5pt,pattern thickness=0.75pt,pattern radius=0pt, pattern color={rgb, 255:red, 65; green, 117; blue, 5}][line width=1.5]  (241.96,177.88) .. controls (234.78,165.46) and (230.67,151.05) .. (230.67,135.68) .. controls (230.67,89.08) and (268.44,51.3) .. (315.05,51.3) .. controls (315.28,51.3) and (315.51,51.3) .. (315.74,51.3) -- (315.05,135.68) -- cycle ;
\draw  [pattern=_w1m9gvu4f,pattern size=7.5pt,pattern thickness=0.75pt,pattern radius=0pt, pattern color={rgb, 255:red, 65; green, 117; blue, 5}][line width=1.5]  (331.42,51.3) .. controls (331.42,51.3) and (331.42,51.3) .. (331.42,51.3) .. controls (378.02,51.3) and (415.8,89.08) .. (415.8,135.68) .. controls (415.8,151.05) and (411.68,165.46) .. (404.5,177.88) -- (331.42,135.68) -- cycle ;
\draw  [line width=1.5]  (279.78,93.2) .. controls (279.78,81.47) and (299.98,71.96) .. (324.89,71.96) .. controls (349.8,71.96) and (370,81.47) .. (370,93.2) .. controls (370,104.92) and (349.8,114.43) .. (324.89,114.43) .. controls (299.98,114.43) and (279.78,104.92) .. (279.78,93.2) -- cycle ;
\draw [color={rgb, 255:red, 208; green, 2; blue, 27 }  ,draw opacity=1 ]   (307.56,83.2) -- (343.56,83.2) ;
\draw [color={rgb, 255:red, 208; green, 2; blue, 27 }  ,draw opacity=1 ]   (308.56,93.53) -- (341.22,92.86) ;
\draw  [draw opacity=0][pattern=_j64q8mp8h,pattern size=6pt,pattern thickness=0.75pt,pattern radius=0pt, pattern color={rgb, 255:red, 74; green, 144; blue, 226}] (315.91,111.45) .. controls (313.78,111.81) and (311.54,112) .. (309.25,112) .. controls (293.95,112) and (281.56,103.55) .. (281.56,93.13) .. controls (281.56,82.71) and (293.95,74.27) .. (309.25,74.27) .. controls (311.54,74.27) and (313.78,74.46) .. (315.91,74.82) -- cycle ;
\draw  [draw opacity=0][pattern=_f2ffvty71,pattern size=6pt,pattern thickness=0.75pt,pattern radius=0pt, pattern color={rgb, 255:red, 74; green, 144; blue, 226}][line width=1.5]  (331.97,75.08) .. controls (334.86,74.38) and (337.98,74) .. (341.22,74) .. controls (356.52,74) and (368.91,82.44) .. (368.91,92.86) .. controls (368.91,103.28) and (356.52,111.73) .. (341.22,111.73) .. controls (337.52,111.73) and (333.99,111.23) .. (330.76,110.33) -- cycle ;
\draw    (265.33,32.67) -- (294.27,86.27) ;
\draw    (265.33,32.67) -- (350.27,82.27) ;
\draw [color={rgb, 255:red, 245; green, 166; blue, 35 }  ,draw opacity=1 ]   (277.33,142) -- (314,192.67) ;
\draw [color={rgb, 255:red, 245; green, 166; blue, 35 }  ,draw opacity=1 ]   (378,149.33) -- (339.33,190.67) ;
\draw [color={rgb, 255:red, 245; green, 166; blue, 35 }  ,draw opacity=1 ]   (284,128.67) -- (364,132) ;
\draw [color={rgb, 255:red, 245; green, 166; blue, 35 }  ,draw opacity=1 ]   (392,161.33) -- (355.33,198) ;
\draw [color={rgb, 255:red, 245; green, 166; blue, 35 }  ,draw opacity=1 ]   (258.67,153.33) -- (295.33,204) ;
\draw    (401.33,134) -- (455.33,184.67) ;
\draw    (370,208.67) -- (455.33,184.67) ;

\draw (368.67,75.93) node [anchor=north west][inner sep=0.75pt]  [font=\normalsize]  {$U$};
\draw (208.67,186.07) node [anchor=north west][inner sep=0.75pt]  [font=\large]  {$\mathcal{P}$};
\draw (316.67,95.73) node [anchor=north west][inner sep=0.75pt]  [font=\small,color={rgb, 255:red, 208; green, 2; blue, 27 }  ,opacity=1 ]  {$X$};
\draw (246,16.07) node [anchor=north west][inner sep=0.75pt]  [color={rgb, 255:red, 74; green, 144; blue, 226 }  ,opacity=1 ]  {$A$};
\draw (314.67,141.73) node [anchor=north west][inner sep=0.75pt]  [font=\small,color={rgb, 255:red, 245; green, 166; blue, 35 }  ,opacity=1 ]  {$C_{O}$};
\draw (460.67,176.4) node [anchor=north west][inner sep=0.75pt]    {$\textcolor[rgb]{0.25,0.46,0.02}{B}$};

\end{tikzpicture}
    \caption{Notations when $U$ crosses $\mathcal{P}$}
    \label{fig:core-is-star-onestep}
\end{figure}
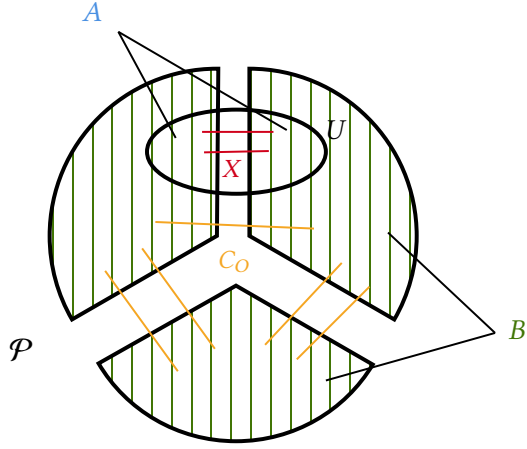
\begin{lemma}\label{lem:core-is-star-onestep}
    Let $\mathcal{P}$ be a min-ratio cut in a connected graph $G$. Suppose $U$ is a dense core in $G$ and $U\ne V$. Then, $U$ is contained in one side of $\mathcal{P}$.
\end{lemma}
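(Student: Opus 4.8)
The plan is to argue by contradiction. Suppose $U$ is \emph{not} contained in any single side of $\mathcal{P}$. Since $U$ is nonempty (otherwise it trivially lies in a side) and the sides of $\mathcal{P}$ partition $V$, the set $U$ must intersect at least two sides of $\mathcal{P}$; let $A_1,\dots,A_k$ with $k\ge 2$ be the sides meeting $U$, and put $U_i=U\cap A_i$, so $U=\bigsqcup_{i=1}^k U_i$ and $\mathcal{P}$ induces a $k$-way partition of $G[U]$. Write $\pi=\frac{d(\mathcal{P})}{|\mathcal{P}|-1}$ for the strength of $G$. The first (easy) ingredient is that $\rho(U)>\pi$: since $U\subsetneq V$, the dense-core property gives $\rho(U)>\rho(V)$, and $\rho(V)=\frac{c(E)}{|V|-1}$ is exactly the cut ratio of the all-singleton multiway cut, hence $\rho(V)\ge\pi$.

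The second ingredient is a lower bound on $\beta:=c(\partial\mathcal{P}\cap E[U])$, the weight of the $\mathcal{P}$-cut edges lying inside $U$. Every edge of $E[U]$ either lies inside some $U_i$ or joins two different parts $U_i,U_j$, and the latter edges are exactly those counted by $\beta$; hence $c(E[U])=\sum_{i=1}^k c(E[U_i])+\beta$. Because each $U_i\subseteq U$, the dense-core property together with \eqref{eq:density-def-mult} gives $c(E[U_i])=\rho(U_i)(|U_i|-1)\le\rho(U)(|U_i|-1)$ (this is valid even when $U_i$ is a singleton, by the convention $\rho=0$ on singletons), and summing over $i$ yields $\sum_i c(E[U_i])\le\rho(U)(|U|-k)$. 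Combining this with $c(E[U])=\rho(U)(|U|-1)$ from \eqref{eq:density-def-mult} gives $\beta\ge\rho(U)(k-1)$.

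The third ingredient upper-bounds $\beta$ using the minimality of the cut ratio $\pi$. Let $\hat U=\bigcup\{A\in\mathcal{P}:A\cap U\ne\emptyset\}\supseteq U$ be the union of the $k$ sides meeting $U$, and coarsen $\mathcal{P}$ by merging exactly these $k$ sides into the single set $\hat U$: this produces a multiway cut $\mathcal{Q}=\{\hat U\}\cup\{A\in\mathcal{P}:A\cap U=\emptyset\}$ with $|\mathcal{Q}|-1=|\mathcal{P}|-k$. An edge of $\partial\mathcal{P}$ ceases to be cut in $\mathcal{Q}$ precisely when both its endpoints lie in merged sides, i.e.\ when it lies in $E[\hat U]$; thus $\partial\mathcal{Q}=\partial\mathcal{P}\setminus E[\hat U]$ and $d(\mathcal{Q})=d(\mathcal{P})-c(\partial\mathcal{P}\cap E[\hat U])$. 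When $\hat U\ne V$, the bound $d(\mathcal{Q})\ge\pi(|\mathcal{Q}|-1)$ together with $d(\mathcal{P})=\pi(|\mathcal{P}|-1)$ rearranges to $c(\partial\mathcal{P}\cap E[\hat U])\le\pi(k-1)$; when $\hat U=V$ we have $k=|\mathcal{P}|$ and $c(\partial\mathcal{P}\cap E[\hat U])=d(\mathcal{P})=\pi(k-1)$, so the same bound holds. Since $U\subseteq\hat U$ we get $\beta=c(\partial\mathcal{P}\cap E[U])\le c(\partial\mathcal{P}\cap E[\hat U])\le\pi(k-1)$. Combining with $\beta\ge\rho(U)(k-1)$ and $k\ge 2$ forces $\rho(U)\le\pi$, contradicting $\rho(U)>\pi$.

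The point that needs care is the choice of coarsening. Merging only $U$ itself (rather than the full union $\hat U$ of the sides it meets) would leave a residual term coming from edges that leave $U$ but remain within a single side of $\mathcal{P}$; this term does not obviously vanish, and the argument stalls. Merging all of $\hat U$ makes the accounting exact, since an edge of $\partial\mathcal{P}$ stops being cut if and only if both its endpoints fall inside $\hat U$. The only remaining nuisance is the degenerate case $\hat U=V$, where $\mathcal{Q}$ has only one side and must be handled directly as above, plus the routine check that the singleton conventions for skew-density keep $c(E[U_i])\le\rho(U)(|U_i|-1)$ valid when some $U_i$ is a single vertex.
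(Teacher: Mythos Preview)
Your proof is correct, and it takes a genuinely different route from the paper's argument. The paper works with the graphic-matroid rank function: it sets $X=E[U]\cap\partial\mathcal{P}$ and $C_O=\partial\mathcal{P}\setminus X$, uses submodularity of rank (\Cref{fact:rank-submodular}) to relate $r(E\setminus C_O)-r(E\setminus\partial\mathcal{P})$ to $r(E[U])-r(E[U]\setminus\partial\mathcal{P})$, and then combines this with the min-ratio bound on the ``outer'' cut $C_O$ to obtain $c(X)\le\pi\cdot(r(E[U])-r(E[U]\setminus\partial\mathcal{P}))$. Your upper bound on $\beta$ accomplishes the same thing by a purely combinatorial coarsening: merge the $k$ sides of $\mathcal{P}$ that meet $U$ into a single block $\hat U$, apply the min-ratio inequality to the resulting partition $\mathcal{Q}$, and read off $c(\partial\mathcal{P}\cap E[\hat U])\le\pi(k-1)$ directly. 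This sidesteps the rank function and submodularity entirely, and also avoids the paper's need to take connected components of $G[U]\setminus\partial\mathcal{P}$ (you can work with the raw intersections $U\cap A_i$ because your density bounds only require the subset condition of a dense core, not connectivity). The paper's matroidal framing is perhaps more suggestive of generalizations, but your argument is shorter and more transparent for this specific lemma; the coarsening trick with $\hat U$ (rather than just $U$) is exactly the right move, and your handling of the boundary case $\hat U=V$ is clean.
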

\begin{proof}
    Assume for contradiction that $G[U]$ is separated into more than one side of $\mathcal{P}$. Because $G[U]$ is connected by \Cref{fact: core-connected}, this means $E[U]\cap \partial \mathcal{P}\ne\emptyset$.

    Let $r(\cdot)$ be the rank function of the graphic matroid.
    Recall that $r$ is submodular. So, for edge sets $A\subseteq B$ and $X$ disjoint from $B$, we have
    \[r(B\cup X) - r(B) \le r(A\cup X) - r(A)\]

    Substituting $A = E[U]\setminus \partial \mathcal{P}, B = E\setminus \partial \mathcal{P}, X = E[U]\cap \partial \mathcal{P}$ and denoting $C_O = \partial \mathcal{P}\setminus X$ (see \Cref{fig:core-is-star-onestep}), we have 
    \begin{equation}\label{eq:rank-submodular}
        r(E\setminus C_O) - r(E\setminus \partial \mathcal{P})  \le r(E[U]) - r(E[U]\setminus \partial \mathcal{P})
    \end{equation}

   In the min-ratio cut $\mathcal{P}$, every side is connected by \Cref{fact:min-ratio-cut-connected}. Applying \Cref{fact:rank-diff-cut-parts} on $U=V$ gives $|\mathcal{P}|-1 = r(E) - r(E\setminus \partial \mathcal{P})$.
  Denote $\pi$ to be the ratio of $\mathcal{P}$, i.e.,
    \[d(\mathcal{P}) = \pi\cdot (r(E)-r(E\setminus \partial \mathcal{P}))\]
    We claim
    \[c(C_O) \ge \pi \cdot (r(E) - r(E\setminus C_O))\]
    When $C_O$ does not separate anything in $G$, RHS is trivially 0. Otherwise, a subset of $C_O$ is the edge set of a multiway cut on $G$, whose ratio is at least $\pi$, establishing the claim.
    
    Taking the difference between the above two inequalities and applying (\ref{eq:rank-submodular}) gives
    \begin{equation}\label{eq:X-ratio}
        c(X) = d(\mathcal{P}) - c(C_O)
        \le \pi \cdot (r(E\setminus C_O)-r(E\setminus  \partial \mathcal{P})) \le \pi \cdot (r(E[U]) - r(E[U]\setminus  \partial \mathcal{P}))
    \end{equation}

    We assumed for contradiction that $U$ is separated into more than one side of $\mathcal{P}$.
    Let $\{U_i\}_{i=1}^k, k\ge 2$ be the connected components in $G[U]$ after deleting all cut edges $\partial \mathcal{P}$.
    Because $U$ is a dense core and $U_i$ are subsets of $U$, $\rho(U_i) \le \rho(U)$ for each $U_i$. Combined with (\ref{eq:density-def-mult}), we have for all $i$,
    \[c(E[U_i]) = \rho(U_i)\cdot(|U_i|-1) \le \rho(U) \cdot (|U_i|-1)\]
    Summing over all $i$,
    \begin{equation}\label{eq:U-minus-C}
        c(E[U]\setminus \partial \mathcal{P}) = \sum_{i=1}^k c(E[U_i]) \le \rho(U) \cdot (|U|-k) = \rho(U) \cdot r(E[U]\setminus \partial \mathcal{P})
    \end{equation}
    where the last equality is by \Cref{fact:rank-diff-cut-parts}.
    Combining (\ref{eq:X-ratio}) and (\ref{eq:U-minus-C}), we have
    \[c(E[U])=c(E[U]\setminus \partial \mathcal{P}) + c(X)
    \le \rho(U) \cdot r(E[U]\setminus \partial \mathcal{P}) + \pi \cdot (r(E[U]) - r(E[U]\setminus \partial \mathcal{P})) \]
    Notice that $\pi \le \rho(V)$ because $\rho(V)$ is the ratio of the all-singleton cut. Because $U$ is a dense core and $V\supsetneqq U$, we have $\rho(U)> \rho(V) \ge \pi$. By the assumption that $\mathcal{P}$ partitions $G[U]$ into multiple connected components, $r(E[U]) > r(E[U]\setminus \partial \mathcal{P})$. Therefore,
    \begin{align*}
        \rho(U)\cdot(|U|-1)=c(E[U]) &<  \rho(U) \cdot r(E[U]\setminus \partial \mathcal{P}) +\rho(U)  \cdot (r(E[U]) - r(E[U]\setminus \partial \mathcal{P})) \\
    &= \rho(U) \cdot r(E[U]) 
    \end{align*}
    Finally, noticing that $r(E[U]) = |U|-1$ because $G[U]$ is connected, we have $p(U)\cdot(|U|-1) < p(U)\cdot(|U|-1)$, which is a contradiction.
\end{proof}

\begin{lemma}\label{lem:core-is-star}
    If $S$ is a dense core, then $S$ is a star set of the canonical cut hierarchy.
\end{lemma}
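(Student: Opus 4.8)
The plan is to verify directly the two clauses in the definition of a star set: (a) that $S$ appears as a node of the canonical cut hierarchy of $G$, and (b) that the maximal min-ratio cut of the induced subgraph $G[S]$ is the all-singleton cut. We work under the standing assumption that $G$ is connected and nonempty, as for all statements about dense cores (cf.\ \Cref{fact: core-connected}). A preliminary remark: since $S$ is a dense core we have $\rho(S)\ge\rho(V)>0$, so $c(E[S])>0$ and hence $|S|\ge2$; thus once $S$ is shown to be a hierarchy node it is automatically an internal node, not a leaf.

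Part (b) I would prove directly from the skew-density condition, with no recursion. The all-singleton cut of $G[S]$ has cut ratio exactly $c(E[S])/(|S|-1)=\rho(S)$ by \eqref{eq:density-def-mult}. Let $\mathcal{Q}=\{T_1,\dots,T_k\}$ be an arbitrary multiway cut of $G[S]$, so $k\ge2$ and each $T_i\subseteq S$. For every $i$ we have $c(E[T_i])=\rho(T_i)(|T_i|-1)\le\rho(S)(|T_i|-1)$: this is the dense-core inequality when $|T_i|\ge2$, and holds trivially as $0\le0$ when $|T_i|=1$. Summing over $i$, using $\sum_i(|T_i|-1)=|S|-k$ together with $c(E[S])=\rho(S)(|S|-1)$, gives
\[
d_{G[S]}(\mathcal{Q})=c(E[S])-\sum_{i}c(E[T_i])\ \ge\ \rho(S)(|S|-1)-\rho(S)(|S|-k)=\rho(S)(k-1),
\]
so $\mathcal{Q}$ has cut ratio at least $\rho(S)$. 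Hence the all-singleton cut attains the minimum ratio, and since it has the largest number of sides of any multiway cut of $G[S]$, it is the (unique) maximal min-ratio cut.

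Part (a) I would prove by induction on $|V|$, using \Cref{lem:core-is-star-onestep}. If $S=V$ then $S$ is the root and we are done. Otherwise, by \Cref{lem:core-is-star-onestep} the dense core $S$ is contained in a single side $P$ of the maximal min-ratio cut $\mathcal{P}$ of $G$; since $|\mathcal{P}|\ge2$ we have $|P|<|V|$, and $G[P]$ is connected by \Cref{fact:min-ratio-cut-connected}. The key point is that $S$ remains a dense core in $G[P]$: for every $W\subseteq P$ the induced edge set $E[W]$ is the same in $G[P]$ and in $G$, so the skew-densities of all subsets and supersets of $S$ lying inside $P$ are unchanged, and the dense-core inequalities transfer verbatim. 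By the recursive definition of the canonical cut hierarchy, the subtree rooted at $P$ in the hierarchy of $G$ is precisely the canonical cut hierarchy of $G[P]$ --- here we use that we always choose the maximal min-ratio cut, which is the unique min-ratio cut with the most sides by \Cref{lem:min-ratio-cut-close-under-union}. The inductive hypothesis applied to $G[P]$ thus shows $S$ is a node of that subtree, hence of the hierarchy of $G$. Combining (a) and (b), $S$ is an internal node of the canonical cut hierarchy whose children form the maximal min-ratio cut of $G[S]$, which by (b) is the all-singleton cut, so all children are leaves and $S$ is a star set.

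The step I expect to be the main obstacle is the inductive descent in part (a): one has to be careful both that ``dense core'' is preserved under passing to the induced subgraph $G[P]$ on a min-ratio-cut side, and that the canonical cut hierarchy of $G$ genuinely restricts to the canonical cut hierarchy of $G[P]$ on the relevant subtree (which rests on the uniqueness of the maximal min-ratio cut). Part (b) is a short, self-contained density computation and should present no difficulty.
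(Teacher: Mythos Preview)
Your proposal is correct and takes essentially the same approach as the paper: induction on $|V|$ via \Cref{lem:core-is-star-onestep}, with the density computation showing the all-singleton cut is min-ratio in $G[S]$. The only difference is organizational---you factor out the density calculation as a standalone part (b), whereas the paper folds it into the base case of a single induction that directly proves ``$S$ is a star set''---but the logical content is identical.
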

\begin{proof}
    We prove by induction on $n=|V|$.
    
    The base case is $n=|S|$ or equivalently $V=S$. It suffices to show that the all-singleton cut is a min-ratio cut of $G$. If this is true, then the recursive min-ratio cut procedure will choose the all-singleton cut which is the maximal min-ratio cut of $G$. Then $S$ is partitioned into singletons, which means $S$ is a star set in the hierarchy.
    
    Because $S$ is a dense core, for all nonempty $W\subseteq S$,
    \[c(E[W]) = \rho(W) \cdot (|W|-1) \le \rho(S)\cdot (|W|-1) \]
    Consider any multiway cut $\mathcal{P}=\{W_i\}_{i=1}^k$ of $G[S]$. We have
    \[d(\mathcal{P}) = c(E[S]) - \sum_{i=1}^k c(E[W_i])
    \ge \rho(S)\cdot (|S|-1) - \rho(S)\cdot \bigg(\sum_{i=1}^k (|W_i|-1)\bigg) = \rho(S)\cdot(k-1) \]
    So the ratio of $\mathcal{P}$ is at least $\rho(S)$, which is the ratio of the all-singleton cut. In conclusion the all-singleton cut is a min-ratio cut.

    Next we consider the inductive case where $S\subsetneqq V$. Let $\mathcal{P}$ be the maximal min-ratio cut of $G$.
    By \Cref{lem:core-is-star-onestep}, $S$ is contained in one side $U\in \mathcal{P}$.
    Note that since $S$ is a dense core in $G$, $S$ is also a dense core in $G[U]$ for $U\subseteq V$ because the conditions on graph $G[U]$ is a subset of the conditions on $G$. Using the inductive hypothesis, we get that $S$ is a star set in the canonical hierarchy of $G[U]$, which is a subtree of the canonical hierarchy of $G[V]$. In conclusion $S$ is a star set in the canonical hierarchy of $G$.
\end{proof}

\section{Cut Hierarchy Algorithm}\label{sec:algorithm}
In this section, we design an algorithm (\Cref{alg:hierarchy}) that constructs the canonical cut hierarchy w.h.p.\ in $\tO(nm\log C)$ time plus $\tO(n\log C)$ max-flow calls, thereby establishing \Cref{thm:cut-hierarchy}.

\cuthierarchyAlg*

In each iteration of the algorithm, we identify and contract a star set in the canonical cut hierarchy of the current graph (see \Cref{fig:alg-illustration}). Recall that a star set represents a node in the hierarchy whose children are all leaves of the hierarchy. In other words, it is a minimal non-singleton set in the laminar family corresponding to the hierarchy.
We will show that contracting a star set does not affect the rest of the laminar family. Then, we can obtain the laminar family by collecting all sets found by the algorithm.

\begin{figure}
    \centering
    \includegraphics{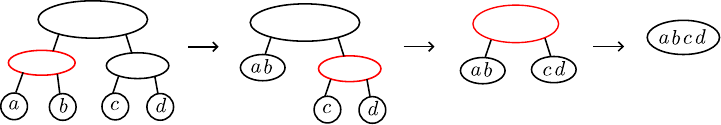}
    \caption{Illustration of iterative contraction algorithm. Red nodes represent the contracted star set in the hierarchy.}
    \label{fig:alg-illustration}
\end{figure}

\subsection{Contracting Star Sets}
We first observe that contracting a subset of a side does not affect the maximal min-ratio cut.
\begin{fact}\label{fact:contract-aligned-set-preserve-min-ratio-cut}
    Suppose $\mathcal{P}$ is the maximal min-ratio cut of $G$ with a side $U\in\mathcal{P}$, and $S\subseteq U$.
    Then, $\mathcal{P}/S$ is still the maximal min-ratio cut of $G/S$ after contracting $S$.
\end{fact}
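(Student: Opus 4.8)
The plan is to set up a correspondence between multiway cuts of $G/S$ and those multiway cuts of $G$ in which $S$ lies entirely inside one side, and to check that this correspondence preserves both the cut-edge set (hence the value $d$) and the number of sides (hence the cut ratio). Write $s$ for the vertex of $G/S$ obtained by contracting $S$. Given a multiway cut $\mathcal{Q}'$ of $G/S$, let $Q'_j$ be the side containing $s$, and let $\mathcal{Q}$ be the partition of $V$ obtained from $\mathcal{Q}'$ by replacing $Q'_j$ with $(Q'_j\setminus\{s\})\cup S$ and leaving every other side unchanged; conversely, any multiway cut of $G$ with $S$ inside a single side contracts to such a $\mathcal{Q}'$, and these two maps are mutually inverse. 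Since $S\subseteq U$ and $U$ is a single side of $\mathcal{P}$, applying this to $\mathcal{Q}'=\mathcal{P}/S$ recovers $\mathcal{Q}=\mathcal{P}$.

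First I would verify the two invariants of this correspondence. For the number of sides: lifting $s$ to $S$ replaces one element of one side by the $|S|$ elements of $S$, so $|\mathcal{Q}|=|\mathcal{Q}'|$ (this stays valid in the degenerate cases $S=U$, where the lifted side is all of $U$, and $|S|=1$, where contraction is trivial). For the cut-edge set: an edge with both endpoints in $S$ lies in $E[S]$ and is deleted by contraction, while in $\mathcal{Q}$ both its endpoints lie in the lifted side so it is never cut; an edge with exactly one endpoint in $S$ becomes an edge incident to $s$ in $G/S$, and it lies in $\partial\mathcal{Q}$ iff the corresponding edge lies in $\partial\mathcal{Q}'$; an edge disjoint from $S$ is untouched. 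Hence $\partial\mathcal{Q}$ and $\partial\mathcal{Q}'$ are in weight-preserving bijection, so $d_{G/S}(\mathcal{Q}')=d_G(\mathcal{Q})$ and the two cuts have the same ratio. The same bookkeeping applied to $\mathcal{P}$ shows $\partial(\mathcal{P}/S)=\partial\mathcal{P}$ and $|\mathcal{P}/S|=|\mathcal{P}|$, so $\mathcal{P}/S$ and $\mathcal{P}$ have equal cut ratio.

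With the correspondence in hand the conclusion follows in two steps. (i) $\mathcal{P}/S$ is a min-ratio cut of $G/S$: any multiway cut $\mathcal{Q}'$ of $G/S$ lifts to a multiway cut $\mathcal{Q}$ of $G$ of the same ratio, which is at least the strength of $G$, i.e.\ at least the ratio of $\mathcal{P}$, which equals the ratio of $\mathcal{P}/S$; since moreover $\mathcal{P}/S$ itself attains this ratio, it is a min-ratio cut of $G/S$, and in particular the strength of $G/S$ equals the strength of $G$. (ii) $\mathcal{P}/S$ has the maximum number of sides among min-ratio cuts of $G/S$: if $\mathcal{Q}'$ is a min-ratio cut of $G/S$, its lift $\mathcal{Q}$ has ratio equal to the strength of $G/S$, hence to the strength of $G$, so $\mathcal{Q}$ is a min-ratio cut of $G$ and therefore $|\mathcal{Q}'|=|\mathcal{Q}|\le|\mathcal{P}|=|\mathcal{P}/S|$, the middle inequality because $\mathcal{P}$ is the maximal min-ratio cut of $G$. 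Thus $\mathcal{P}/S$ is a min-ratio cut of $G/S$ attaining the maximum number of sides, and since by \Cref{lem:min-ratio-cut-close-under-union} the maximal min-ratio cut is the unique min-ratio cut with the most sides, $\mathcal{P}/S$ is the maximal min-ratio cut of $G/S$.

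The only delicate point is the edge-by-edge check of the correspondence (the three cases for an edge relative to $S$, together with the trivial degenerate cases $S=U$ and $|S|=1$); I do not expect any genuine obstacle beyond that bookkeeping.
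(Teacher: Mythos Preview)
Your proof is correct and follows essentially the same approach as the paper: both set up the ratio-preserving, side-count-preserving bijection between multiway cuts of $G/S$ and multiway cuts of $G$ that do not separate $S$, and then transfer the optimality of $\mathcal{P}$ through this bijection. Your version is simply more explicit about the edge-by-edge verification and splits the conclusion into the two steps (min-ratio, then maximal number of sides), whereas the paper states the bijection and its invariants in one sentence; your final appeal to \Cref{lem:min-ratio-cut-close-under-union} for uniqueness is fine, though the paper leaves this implicit since uniqueness of the maximal min-ratio cut was already established when the notion was introduced.
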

\begin{proof}
Let $\mathcal{C}_1$ be the family of multiway cuts in $G$ that do not separate $S$, and $\mathcal{C}_2$ be the family of multiway cuts in $G/S$.
The map $\phi:\mathcal{C}_1 \to \mathcal{C}_2$ defined by $\phi(\mathcal{Q})=\mathcal{Q}/S$ is a bijection. Note that $\phi$ preserves cut ratio because $\phi$ preserves the set of cut edges and the number of sides.

Because $\mathcal{P}\in \mathcal{C}_1$ and $\mathcal{P}$ is the maximal min-ratio cut of $G$, $\mathcal{P}$ maximizes the number of sides among the multiway cuts that minimize cut ratio in $\mathcal{C}_1$. After applying $\phi$, this property is preserved by $\mathcal{P}/S$ in $\mathcal{C}_2$, which means $\mathcal{P}/S$ is the maximal min-ratio cut of $G/S$.
\end{proof}

A simple corollary of this fact is that contracting a star set does not affect the rest of the laminar family. This validates our bottom-up contraction algorithm.

\begin{corollary}\label{lem:star-contraction-preserve-hierarchy}
    If we contract a star set $S$, then in the canonical cut hierarchy, the only change is that the node corresponding to $S$ is replaced by a leaf corresponding to the contracted vertex.
\end{corollary}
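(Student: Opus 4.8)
The plan is to prove this by induction on $|V|$, with \Cref{fact:contract-aligned-set-preserve-min-ratio-cut} doing the work at the top level of the hierarchy and the inductive hypothesis handling a proper induced subgraph. Concretely, I would show that the canonical cut hierarchy of $G/S$ is obtained from that of $G$ by deleting the (singleton) children of the node $S$ — turning $S$ into a leaf — and relabeling each node $W$ of the hierarchy with $W\supseteq S$ by $W/S$.

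First I would dispose of the case $S = V$. Here the definition of a star set says that the maximal min-ratio cut of $G$ is the all-singleton cut, so the canonical hierarchy of $G$ is just the root $V$ with its $|V|$ singleton leaves. Contracting $S$ collapses $G$ to a single-vertex graph, whose canonical hierarchy is a lone leaf, and this is exactly the asserted modification (node $S=V$ with its children removed).

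For the inductive step $S \subsetneq V$, let $\mathcal{P}$ be the maximal min-ratio cut of $G$, so the children of the root of the hierarchy are the sides of $\mathcal{P}$, and since $|\mathcal{P}|\ge 2$ every side is a proper subset of $V$. Since $S$ is a non-root node of the laminar hierarchy, it lies inside some side $U\in\mathcal{P}$; by \Cref{fact:contract-aligned-set-preserve-min-ratio-cut}, $\mathcal{P}/S$ is the maximal min-ratio cut of $G/S$, so the top level of the hierarchy of $G/S$ agrees with that of $G$ except that $U$ is replaced by $U/S$. The canonical hierarchy of $G/S$ is then assembled by recursing on $(G/S)[U'] = G[U']$ for the unchanged sides $U'\ne U$ and on $(G/S)[U/S] = (G[U])/S$. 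The key point is that $S$, being a descendant of $U$ in the hierarchy of $G$, is a node — and since its children are all leaves, a star set — of the canonical hierarchy of $G[U]$, which has strictly fewer vertices than $G$. So the inductive hypothesis applies to $G[U]$ and tells us that contracting $S$ there only turns node $S$ into a leaf inside the subtree rooted at $U$; splicing the pieces back together yields the claim.

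The only mildly delicate points, rather than genuine obstacles, are: (i) that taking an induced subgraph and contracting commute in the way used above, i.e.\ $(G/S)[U/S] = (G[U])/S$ and $(G/S)[U'] = G[U']$ when $U'\cap S = \emptyset$; and (ii) that the subtree of the canonical hierarchy of $G$ rooted at a side $U$ is precisely the canonical hierarchy of $G[U]$, which is what lets us pass from "$S$ is a star set of $G$ with $S\subseteq U$" to "$S$ is a star set of $G[U]$." Both follow directly from the recursive definition of the canonical cut hierarchy together with \Cref{fact:min-ratio-cut-connected}, so there is no real obstacle — all the substance is carried by \Cref{fact:contract-aligned-set-preserve-min-ratio-cut}.
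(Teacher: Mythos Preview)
Your proposal is correct and takes essentially the same approach as the paper: both arguments use \Cref{fact:contract-aligned-set-preserve-min-ratio-cut} to show that the maximal min-ratio cut is preserved at every relevant level, with the only difference being that the paper compares the recursive process directly at each node (disjoint from $S$, strictly containing $S$, and equal to $S$) rather than packaging this as a formal induction on $|V|$.
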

\begin{proof}
    We compare the recursive min-ratio cut process before and after the contraction. On a set $U$ disjoint from $S$, it will also choose the same maximal min-ratio cut. On a set $U\supsetneqq S$, it will choose the same maximal min-ratio cut up to contracting $S$ by \Cref{fact:contract-aligned-set-preserve-min-ratio-cut}.  The only difference is that originally the process recurses on $S$, while after contraction, $S$ is a singleton and hence a leaf in the hierarchy.
\end{proof}


\subsection{Algorithm Framework}
\label{sec:framework}
The algorithm runs in $O(n)$ iterations. Each iteration finds a star set and contracts it. The output is the laminar family of all sets found during the algorithm.

We give a high-level description before formally stating the algorithm. Recall that by \Cref{fact:core-exist}, the maximum skew-densest set is  a dense core, and hence a star set by \Cref{lem:core-is-star}. We can design an algorithm for finding the maximum skew-densest subgraph (call it $D$) based on previously known approaches to finding the densest subgraph. The bottleneck is that this algorithm uses a directed (global) min-cut subroutine, for which the best known algorithm has a running time of $m^{1+o(1)} \sqrt{n}$~\cite{CLNPQS21}. This is sufficient if $|D| \ge \Omega(\sqrt{n})$ since we can amortize the running time over the sets3 that we contract. But, for small $D$, we need a sharper running time bound. For this purpose, we show that a modification of the global directed min-cut algorithm can be used to identify $D$ in $m^{1+o(1)} k$ time, for a given upper bound $k \ge |D|$. So, if we are given an upper bound $k$ that (say) satisfies $k \le 2|D|$, we would be done. But, how we find such an upper bound? We can use guess-and-double, but for this to work, we need to identify the complementary case: $k < |D|$. In other words, the above algorithm returns some set as the presumptive $D$, and we need to verify if that is indeed the case. Unfortunately, we do not know how to do this efficiently. Instead, we settle for the simpler goal: identify if the set returned by the above algorithm is a dense core. For this latter task, we give an algorithm that runs in $m^{1+o(1)} k$ time. Note that this is sufficient since any dense core (whether or not the maximum skew-densest set) is a star set in the cut hierarchy.

Now, we formally describe the algorithm for each iteration (see \Cref{alg:hierarchy}).
The algorithm guesses a size parameter $k$ by doubling (starting from 2, i.e.\ $k=2, 4, 8, \ldots$, because star sets are not singleton).
We call a subroutine {\sc Find-Star} described in \Cref{sec:find-star} to find a vertex set $S_k$. If $|S_k|\le k/2$ or $|S_k| > k$, we reject it and go to the next value of $k$. Otherwise, we run a subroutine {\sc Verify-Core} described in \Cref{sec:verify-star} to decide whether $S_k$ is a dense core. If $S_k$ is verified as a dense core, we finish the iteration and return $S_k$. Otherwise, we go to the next value of $k$.

\begin{algorithm}
\caption{{\sc Construct-Canonical-Min-Ratio-Cut-Hierarchy}($G$)}
\label{alg:hierarchy}
$\mathcal{F}\gets \emptyset$\\
\While{$G$ is not contracted into a singleton}{
    \For{$k=2,4,8,\ldots, 2^{\lceil \log n\rceil}$}{
    $S_k\gets \text{\sc Find-Star}(G, k)$\\
    \If{$k/2< |S_k| \le k$ and $\text{\sc Verify-Core}(G, k, S_k)$}{
    Add $S_k$ to $\mathcal{F}$\\
    Contract $S_k$ in $G$\\
    Break for loop
    }
    }
}
Output the laminar family $\mathcal{F}$.
\end{algorithm}

In the rest of the section, we describe the two subroutines of finding a candidate set (\Cref{alg:find} in \Cref{sec:find-star}) and verifying a dense core  (\Cref{alg:verify} in \Cref{sec:verify-star}).
The results are summarized in \Cref{lem:alg-star} and \Cref{lem:alg-verify}.
Informally, we guarantee that {\sc Find-Star} can find the maximum skew-densest set when $k$ is large enough, and {\sc Verify-Core} can correctly verify whether the input set is a dense core.

\begin{restatable}[Find-Star]{lemma}{FindStar}
\label{lem:alg-star}
    Given a connected input graph $G$ with integer edge capacities at most $C$ and a parameter $k$,  \Cref{alg:find} outputs the maximum skew-densest set $D$ assuming $k\ge|D|$. The algorithm runs in  $\tO(mk\log C)$ time plus $\tO(\log C)$ calls to max-flow on digraphs with $O(m)$ nodes and $O(m)$ edges.
\end{restatable}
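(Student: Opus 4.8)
The plan is to exhibit a flow/cut formulation whose minimum value, as a function of a guessed density parameter $\tau$, detects whether there is a vertex set $S$ with $\rho(S)\ge\tau$, in the spirit of the classical Goldberg densest-subgraph reduction (and its skew-density variant used by Gabow). Concretely, one builds a directed graph on $V$ plus a source $s$ and sink $t$: put an arc $s\to v$ with capacity equal to (half) the weighted degree of $v$, replace each undirected edge by the usual pair of directed edges of capacity $c_e$ (or $c_e/2$, depending on the exact bookkeeping), and put an arc $v\to t$ with capacity $\tau$. For the skew-density normalization $\rho(S)=c(E[S])/(|S|-1)$ rather than $c(E[S])/|S|$, one additionally pins one designated vertex on the source side (or, equivalently, subtracts a $\tau$ term for the ``$-1$''), so that a finite $s$–$t$ cut of value $< (\text{const}) $ corresponds exactly to a set $S$ with $c(E[S]) - \tau(|S|-1) > 0$, i.e.\ $\rho(S)>\tau$. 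A binary search over $\tau$ — there are only $O(n^2 C)$ distinct candidate values of the ratio, so $\tO(\log C)$ rounds suffice — then finds the maximum skew-density $\rho(D)=\max_S\rho(S)$ and, taking the maximal minimum cut (source side), the set $D$ itself; this is the ``$\tO(\log C)$ max-flow calls'' part.

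The subtlety — and the reason the parameter $k$ appears — is the running-time target $\tO(mk\log C)$: a generic max-flow call on the $(n{+}2)$-vertex graph costs $m^{1+o(1)}$, but we are only allowed time proportional to $k$, not $n$. I would handle this the same way the paper's high-level description of \Cref{sec:find-star} indicates: replace each full max-flow call by an approximate/partial computation that only needs to certify the existence of a small violating set. Since $k\ge|D|$, the relevant min cut isolates at most $k$ vertices on the sink side; one can run an incremental/partial-sparsification version of the directed min-cut routine (the adaptation of \cite{CLNPQS21} proved in \Cref{sec:partial-sparsify}) that either returns a sink side of size $\le k$ or certifies there is none, in $m^{1+o(1)}\cdot k$ time. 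Feeding this into the binary search over $\tau$ gives the claimed $\tO(mk\log C)$ plus $\tO(\log C)$ max-flow calls (the max-flow calls being the exact verification of the final candidate, or the steps where the cut legitimately has size close to $k$).

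The key steps, in order: (i) set up the directed gadget graph $G_\tau$ and prove the cut–density correspondence, i.e.\ a finite $s$–$t$ cut in $G_\tau$ of value below the trivial threshold $\iff$ $\exists S$ with $\rho(S)>\tau$, and that the unique maximal minimum cut yields the maximum skew-densest set at $\tau=\rho(D)$; (ii) bound the number of distinct ratio values by $\poly(n,C)$ so the binary search terminates in $\tO(\log C)$ phases; (iii) replace exact min-cut in each phase by the size-$k$-bounded partial min-cut subroutine and argue correctness using $k\ge|D|$ (so the true answer has sink side of size $\le k$, hence is not missed), while a ``no small cut'' answer lets the search move on; (iv) assemble the running time. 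I expect step (iii) — correctly marrying the $k$-bounded directed min-cut primitive with the binary search so that ``false'' answers are never returned when $k\ge|D|$, and the time stays $m^{1+o(1)}k$ rather than $m^{1+o(1)}n$ — to be the main obstacle; steps (i)–(ii) are essentially the standard parametric-flow densest-subgraph argument adapted to the skew-density normalization.
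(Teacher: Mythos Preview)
Your proposal correctly identifies the three pillars of the argument: a Goldberg-type flow reduction, binary search on the density parameter $\tau$, and a size-$k$-bounded directed min-cut primitive to keep the per-iteration cost at $\tO(mk)$. But two steps, as written, would not go through.

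First, your handling of the skew normalization by ``pinning one designated vertex on the source side'' is not correct: the maximum skew-densest set $D$ need not contain any fixed vertex, so this would force you to try all $n$ choices of pinned vertex and blow the max-flow count from $\tO(\log C)$ to $\tO(n\log C)$. The paper instead uses your parenthetical alternative, made precise: one checks whether $\min_S\bigl(\tau|S|-c(E[S])\bigr)<\tau$, which is equivalent to $\exists S$ with $\rho(S)>\tau$, so no vertex is pinned and the threshold for the min-cut test is $\tau$ rather than $0$.

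Second, and more substantively, you are missing the step that converts the $s$--$t$ min-cut problem into a \emph{rooted global} $t$-mincut problem, which is what the size-bounded subroutine from \Cref{sec:partial-sparsify} actually solves. In the raw Goldberg network both $s$ and $t$ are present; the dense set $D$ sits on the $s$-side, and you cannot directly feed an $s$--$t$ min-cut instance to a routine that expects only a sink $t$ and a small source side. The paper resolves this with Gabow's residual trick: compute the $s$--$t$ max flow on $H$, pass to the residual graph, delete $s$ (all its outgoing arcs are saturated), and shortcut the edge-nodes to obtain $\tH$ on vertex set $V\cup\{t\}$ with $d^+_{\tH}(S)=\tau|S|-c(E[S])$. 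Now $D$ is the source side of a genuine $t$-mincut with $|D|\le k$, and the bounded-size subroutine applies. This max-flow on $H$ --- which has $O(m)$ \emph{nodes} because of the edge-nodes --- is performed once per binary-search iteration, and \emph{that} is where the ``$\tO(\log C)$ max-flow calls on $O(m)$-node digraphs'' in the statement come from, not from ``exact verification of the final candidate'' as you suggest. Without this residual step your plan (iii) has no graph on which to run the $k$-bounded routine.
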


\begin{restatable}[Verify-Core]{lemma}{VerifyCore}
\label{lem:alg-verify}
For an input graph $G$, a set of vertices $S$, and a parameter $k$, \Cref{alg:verify} returns true if and only if $S$ is a dense core in $G$ and $|S| \le k$. The running time is $O(k)$ calls to max-flow on digraphs with $O(m)$ nodes and $O(m)$ edges.
\end{restatable}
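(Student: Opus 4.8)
The plan is to reduce the two defining conditions of a dense core to a handful of density-check subproblems, each solved by one maximum-flow computation on a Goldberg-style auxiliary digraph. First I would dispose of the cheap parts: check $|S|\le k$, and return false if not; check $c(E[S])>0$ (else $S$ cannot be a dense core unless $S=V$ with $G$ edgeless, a degenerate case handled directly); then compute $\rho(S)=c(E[S])/(|S|-1)$ and scale all capacities by the integer $|S|-1\le n-1$, so the target density becomes the integer $c(E[S])\le\poly(n,C)$ and all auxiliary capacities stay $\poly(n,C)$. It then remains to verify (a) $\rho(W)\le\rho(S)$ for every $W\subseteq S$, and (b) $\rho(U)<\rho(S)$ for every $U\supsetneqq S$.

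For condition (a) I use the classical densest-subgraph network built on $G[S]$: a source with an arc of capacity $c_e$ into a node for each $e\in E[S]$, infinite-capacity arcs from that node to the two endpoints of $e$, and an arc of capacity $\rho(S)$ from each vertex of $S$ to the sink. For a vertex set $W\subseteq S$ placed on the source side, the cheapest consistent cut has value $c(E[S])-c(E[W])+\rho(S)\,|W|$, so after adding an infinite-capacity arc forcing a chosen $v\in S$ onto the source side, the minimum cut equals $c(E[S])+\min_{W\ni v}\bigl(\rho(S)\,|W|-c(E[W])\bigr)$; choosing $W=S$ shows this is at most $c(E[S])+\rho(S)$, and it is strictly smaller exactly when some $W\ni v$ has $c(E[W])>\rho(S)(|W|-1)$, i.e.\ $\rho(W)>\rho(S)$ (a singleton can never trigger this since $\rho(S)>0$). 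Every set violating (a) is nonempty and hence contains some $v\in S$, so running this test once for each of the $\le k$ vertices of $S$ detects a violation of (a) if and only if one exists.

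For condition (b) I first record a density-preserving bijection: after contracting $S$ into a single node $s$, the strict supersets $U\supsetneqq S$ in $G$ correspond to the sets $W=\{s\}\cup(U\setminus S)$ in $G/S$ with $s\in W$ and $|W|\ge2$, and from $c(E_{G/S}[W])=c(E_G[U])-c(E_G[S])$ together with $|W|-1=|U|-|S|$ one checks $\rho_G(U)<\rho(S)\iff\rho_{G/S}(W)<\rho(S)$. So (b) is equivalent to: no $W\ni s$ with $|W|\ge2$ in $G/S$ has $\rho_{G/S}(W)\ge\rho(S)$. Build the same Goldberg network on $G/S$ with parameter $\rho(S)$ and an infinite-capacity arc forcing $s$ to the source side; its minimum cut value is $c(E_{G/S})+\min_{W\ni s}\bigl(\rho(S)|W|-c(E_{G/S}[W])\bigr)$, at most $c(E_{G/S})+\rho(S)$ via $W=\{s\}$, with equality iff $\{s\}$ is itself a minimum cut. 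The delicate point is that (b) demands a \emph{strict} inequality, so a set $W\ni s$ of size $\ge2$ with $\rho_{G/S}(W)=\rho(S)$ exactly must still cause rejection even though it attains the same cut value as $\{s\}$. I handle this by computing one max flow and extracting the \emph{maximal} source-side minimum cut $R$ (the vertices that cannot reach the sink in the residual graph): one verifies that every $W\ni s$ with $|W|\ge2$ satisfies $\rho_{G/S}(W)<\rho(S)$ — i.e.\ (b) holds — if and only if $R\cap V(G/S)=\{s\}$. This is a single max-flow call.

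The algorithm returns true iff $|S|\le k$, no forced vertex $v\in S$ exposed a violation in step (a), and step (b) returned $R=\{s\}$; by the three equivalences above this is exactly the condition that $S$ is a dense core with $|S|\le k$. The cost is $|S|+1=O(k)$ max-flow calls, each on a Goldberg network with $O(|S|+|E[S]|)=O(m)$ nodes and $O(m)$ edges (respectively $O(n+m)=O(m)$ nodes and edges for the one on $G/S$), with all capacities integral and $\poly(n,C)$ after scaling. I expect the main obstacle to be the strict-versus-nonstrict subtlety in condition (b) — establishing precisely that the maximal source-side minimum cut is the right object, rather than the densest-subgraph machinery, which is standard — together with the bookkeeping that the scaling keeps the flow instances integral.
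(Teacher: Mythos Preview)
Your proposal is correct and follows the same overall scheme as the paper---verify the subset condition via a Goldberg network on $G[S]$ with threshold $\rho(S)$, and the strict-superset condition via the analogous network on $G/S$ with the contracted vertex forced to the source side.

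For condition~(a) the paper first checks that the $s$--$t$ max flow on $H_1$ equals $c(E[S])$, then passes to the residual-based network $\tilde H_1$ and computes a $t$-mincut there by iterating over all $|S|\le k$ possible sources; you skip $\tilde H_1$ and instead run $|S|$ flows on $H_1$ directly, each with one $v\in S$ pinned to the source side. Both cost $O(k)$ flow calls and certify the same inequality $c(E[W])\le\rho(S)(|W|-1)$ for all nonempty $W\subseteq S$.

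For condition~(b) your treatment is in fact more careful than the paper's written version. The paper tests whether the $s$--$t$ min-cut value in $H_2'$ strictly exceeds $c(E_{G/S})+\rho(S)$; but taking $X=\{v_S\}$ (just the contracted vertex) always gives a cut of value exactly $c(E_{G/S})+\rho(S)$, so that strict inequality can never hold, and the paper's necessity argument for the superset lemma tacitly assumes $|X|\ge 2$ when it invokes the strict bound. Your device---extract the \emph{maximal} source-side minimum cut $R$ from the residual graph of a single max flow and accept iff $R\cap V(G/S)=\{v_S\}$---correctly distinguishes the case where some non-trivial $W\ni v_S$ attains $\rho_{G/S}(W)=\rho(S)$ (which must cause rejection) from the case where no such $W$ exists, at no extra cost. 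So the strict-versus-nonstrict issue you flagged is real, and your resolution is the right one.
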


In the running time analysis, we use $F(n, m)$ to denote the time complexity of max-flow on digraph of $n$ nodes and $m$ edges. We assume $F(n, m)$ is monotone increasing and $F(n, m)\ge \Omega(m)$. The following is our main lemma, which establishes \Cref{thm:cut-hierarchy}, and follows from \Cref{lem:alg-star} and \Cref{lem:alg-verify}.

\begin{lemma}
    \Cref{alg:hierarchy} outputs the canonical cut hierarchy of $G$ w.h.p.\ in $\tO(nm\log C)$ time plus $\tO(n\log C)$ calls to max flow on digraphs with $O(m)$ edges and $O(m)$ vertices and integer edge weights at most $\poly(n, C)$.
\end{lemma}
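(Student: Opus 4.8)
The plan is to derive this lemma as a wrapper around \Cref{lem:alg-star} and \Cref{lem:alg-verify}, in three steps: (i) every iteration of the \textbf{while} loop of \Cref{alg:hierarchy} contracts a genuine star set; (ii) the family $\mathcal{F}$ it collects is exactly the laminar family of the canonical cut hierarchy; and (iii) the running-time and max-flow-call bounds follow by an amortization argument. For (i), the crucial point is that the inner \textbf{for} loop cannot exhaust its range of $k$ without succeeding. Let $D$ be the maximum skew-densest set of the current graph $G$; by \Cref{fact:core-exist} and \Cref{lem:core-is-star} it is a dense core and hence a star set, and $|D|\ge 2$. Let $k^\star$ be the smallest power of two with $k^\star\ge |D|$; then $k^\star\le 2^{\lceil\log n\rceil}$ since $|D|\le n$, so the loop reaches $k=k^\star$. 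There $k^\star\ge |D|$, so \Cref{lem:alg-star} gives $\textsc{Find-Star}(G,k^\star)=D$; the size test $k^\star/2<|D|\le k^\star$ holds by minimality of $k^\star$ (using $|D|\ge 2$ in the boundary case $k^\star=2$); and $\textsc{Verify-Core}(G,k^\star,D)$ returns true by \Cref{lem:alg-verify}. So the loop breaks by $k=k^\star$ at the latest, and by \Cref{lem:alg-verify} whatever set it breaks on is a dense core of $G$, hence a star set. Since star sets have size at least $2$, each iteration strictly decreases $|V(G)|$, so there are at most $n-1$ iterations; and since every contracted star set is connected (\Cref{fact: core-connected}), $G$ stays connected, maintaining the precondition of \Cref{lem:alg-star}.

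For (ii), I would induct on the iteration count using \Cref{lem:star-contraction-preserve-hierarchy}: contracting the star set found in an iteration replaces the corresponding node of the canonical cut hierarchy by a leaf and leaves the rest of the laminar family unchanged. Hence after all contractions the hierarchy has collapsed to a single vertex, and undoing the contractions in reverse reconstructs the canonical cut hierarchy of the input graph, whose internal nodes are precisely $V$ together with the un-contracted preimages of the sets stored in $\mathcal{F}$ (the root $V$ being recorded in the last iteration, when $G$ has two vertices). For the ``w.h.p.'' qualifier, the randomized subroutines invoked inside \Cref{lem:alg-star} fail with probability $1/\poly(n)$ per call, and the algorithm makes $\poly(n)$ calls in total, so a union bound gives overall correctness with high probability.

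For (iii), I would write $s_i$ for the size of the star set contracted in iteration $i$ and $k_i$ for the final value of $k$ examined in that iteration; the size test forces $k_i<2s_i$. Within iteration $i$ we run $\textsc{Find-Star}$ on the $O(\log k_i)$ powers of two up to $k_i$, for a total of $\sum_{k\le k_i}\tO(mk\log C)=\tO(mk_i\log C)$ time plus $\tO(\log C)$ max-flow calls per value of $k$, and we run $\textsc{Verify-Core}$ at most once per value of $k$, for $\sum_{k\le k_i}O(k)=O(k_i)$ max-flow calls. Since $\sum_i(s_i-1)$ equals the total decrease in $|V(G)|$, namely $n-1$, and there are at most $n-1$ iterations, $\sum_i s_i\le 2(n-1)$ and $\sum_i k_i=O(n)$. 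Therefore the total time is $\tO(m\log C)\cdot\sum_i k_i=\tO(nm\log C)$, the total number of $\textsc{Find-Star}$ max-flow calls is $O(n\log n)\cdot\tO(\log C)=\tO(n\log C)$, and the total number of $\textsc{Verify-Core}$ max-flow calls is $\sum_i O(k_i)=O(n)$, for $\tO(n\log C)$ max-flow calls overall; all of these max-flow instances are on digraphs with $O(m)$ nodes and $O(m)$ edges and $\poly(n,C)$-bounded integer weights, as guaranteed by \Cref{lem:alg-star} and \Cref{lem:alg-verify}. I do not expect a genuine obstacle here, since the lemma is a wrapper; the one point that needs care is precisely this amortization: a naive count lets the \textbf{for} loop run to its full length $2^{\lceil\log n\rceil}$ in every iteration and yields only $\tO(n^2 m\log C)$, so one must charge each iteration's work $\tO(mk_i\log C)=\tO(ms_i\log C)$ against the $\Omega(s_i)$ vertices it removes and use $\sum_i s_i=O(n)$ to collapse the bound to $\tO(nm\log C)$.
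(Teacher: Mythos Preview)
Your proposal is correct and follows essentially the same approach as the paper's proof: use \Cref{fact:core-exist}, \Cref{lem:core-is-star}, \Cref{lem:alg-star}, and \Cref{lem:alg-verify} to argue that each \textbf{while}-iteration succeeds (at the latest when $k$ first reaches $|D|$) and contracts a genuine dense core / star set, then invoke \Cref{lem:star-contraction-preserve-hierarchy} for correctness of the collected family, and finally amortize the per-iteration cost $\tO((m+F(m,m))k_i\log C)$ against the $\Omega(k_i)$ vertices removed to obtain the stated bounds. Your write-up is in fact somewhat more careful than the paper's in a couple of places (explicitly checking that $G$ stays connected via \Cref{fact: core-connected}, and spelling out the bookkeeping $\sum_i(s_i-1)=n-1\Rightarrow\sum_i k_i=O(n)$), but the argument is the same.
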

\begin{proof}
    We first show that each iteration can find and contract a set $S_k$ w.h.p. Let $D$ be the maximum skew-densest set of the current contracted graph, which is a dense core by \Cref{fact:core-exist}. Let $k_D$ be the smallest power of $2$ greater or equal to $|D|$. If the inner for loop breaks before $k_D$, the claim holds. Otherwise, consider the inner for loop when $k=k_D$.
    By \Cref{lem:alg-star}, {\sc Find-Star} guarantees to find $D$ w.h.p.\ when $k\ge |D|$. By \Cref{lem:alg-verify}, $D$ will be verified as a dense core. Also, $k_D/2 < |D| \le k_D$. So, the inner for loop will break when $k=k_D$. In conclusion, the inner for loop will break before or at $k_D$, which means some $S_k$ is contracted.

    Suppose the above high probability event happens. Then, each iteration can only terminate when {\sc Verify-Core} returns true for some set $S_k$. By \Cref{lem:alg-verify}, such an $S_k$ must be a dense core. Then, $S_k$ is a star set by \Cref{lem:core-is-star}.
    Thus, each iteration contracts a star set, and this does not affect the rest of the hierarchy by \Cref{lem:star-contraction-preserve-hierarchy}. After $O(n)$ iterations, the whole graph is contracted into a singleton, and all sets visited by the algorithm form the canonical cut hierarchy.

    Each iteration runs $O(\log n)$ steps of doubling $k$. For each $k$, the algorithm calls {\sc Find-Star} and {\sc Verify-Core}, which take $\tO(mk\log C + F(m, m)(k+\log C)) = \tO((m+F(m, m))k\log C)$ time by \Cref{lem:alg-star,lem:alg-verify}. The iteration contracts a set of size $> k/2$, hence decreases the vertex size of the contracted graph by at least $k/2$. So, the amortized time to contract a vertex is $\tO((m+F(m, m))\log C)$, and the total running time is $\tO(nm
    \log C + n\log C\cdot F(m, m))$. (The infinite edge weights can be simulated by a large weight of $O(mC)$.)
\end{proof}

\subsection{Goldberg's Network and Gabow's Variant}
\label{sec:networks}
 We need Goldberg's network and its variant due to Gabow as key building blocks of both subroutines {\sc Find-Star} and {\sc Verify-Core}.

Given an undirected graph $G$ with edge capacity $c$ and a parameter $\tau > 0$, we construct a directed graph $H$ that we call Goldberg's network as follows \cite{Goldberg84} (See \Cref{fig:goldberg}). To avoid ambiguity, we use \emph{nodes} and \emph{arcs} in place of vertices and edges for Goldberg's network $H$.
\begin{itemize}
 \item Start with a bipartite graph on node set $E\uplus V$ defined as follows: For each $e\in E$ and each endpoint $v$ of $e$ in $G$, add an arc in $H$ from node $e$ to node $v$ of capacity $\infty$.
 \item Add a node $s$ with an arc $(s,e)$ of capacity $c_e$ for each $e\in E$.
 \item Add a node $t$ with an arc $(v,t)$ of capacity $\tau$ for each $v\in V$.
\end{itemize}


\begin{figure}
    \centering
    \includegraphics{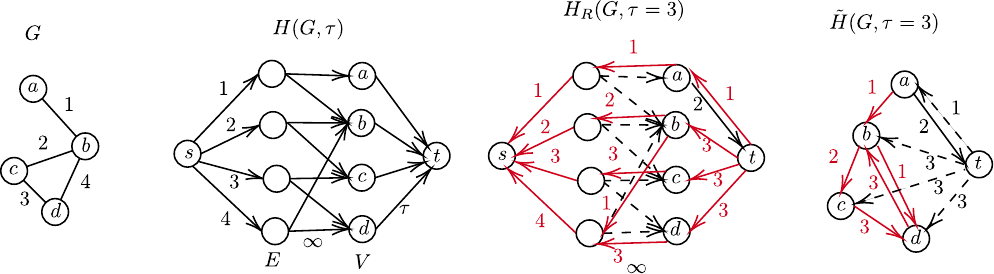}
    \caption{Example of $H$ and $\tilde{H}$ on input graph $G$ and $\tau=3$}
    \label{fig:goldberg}
\end{figure}

\begin{fact}\label{fact:H-size}
    $|V[H]| = O(m+n),\, |E[H]| = O(m+n)$.
\end{fact}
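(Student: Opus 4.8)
The plan is to read the bound directly off the construction of $H$; no real argument is needed beyond counting, so I would simply tally nodes and arcs separately.

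For the nodes: by definition the node set of $H$ is $E \uplus V \uplus \{s,t\}$, so $|V[H]| = |E| + |V| + 2 = m + n + 2 = O(m+n)$. For the arcs, I would account for the three families introduced in the construction in turn. The bipartite arcs contribute one arc $e \to v$ for each edge $e \in E$ and each of its (at most two) endpoints $v$, hence at most $2m$ in total. The source arcs $(s,e)$, one per $e \in E$, contribute exactly $m$ arcs. The sink arcs $(v,t)$, one per $v \in V$, contribute exactly $n$ arcs. Summing, $|E[H]| \le 2m + m + n = 3m + n = O(m+n)$. (If $G$ has parallel edges, each is a separate node of $H$ and the count is unaffected; a self-loop, if allowed, contributes a single arc to its endpoint, which only improves the bound.) There is no genuine obstacle here—the statement is immediate from the definition—and the only point to be mildly careful about is not double-counting when an edge's two endpoints coincide or when reasoning about parallel edges.
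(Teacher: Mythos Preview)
Your proposal is correct and is exactly the direct counting argument implicit in the paper; the paper states this fact without proof since it is immediate from the construction of $H$.
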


In this section, we require the concepts of (directed) $s$-$t$ flows, $s$-$t$ cuts, and $t$-cuts of directed graphs. An $s$-$t$ flow assigns a value $f(u,v)$ to each arc $(u,v)$ satisfying $0\le f(u,v)\le c(u,v)$. For simplicity, assume that $f(u,v)=0$ if $(u,v)$ is not an arc of the graph. A $s$-$t$ flow must satisfy the flow conservation constraints, namely $\sum_vf(u,v)=\sum_vf(v,u)$ for all vertices $u\notin\{s,t\}$. The value of the $s$-$t$ flow equals $\sum_vf(s,v)$, and the $s$-$t$ max flow is the $s$-$t$ flow of maximum value. Given an $s$-$t$ flow $f$, the residual graph of $f$ has the following capacities on arcs: for each pair of arcs $(u,v)$ and $(v,u)$, where a nonexistent arc is considered an arc of capacity $0$, the residual graph sets capacities $c(u,v)-f(u,v)+f(v,u)$ and $c(v,u)-f(v,u)+f(u,v)$ on the arcs $(u,v)$ and $(v,u)$, respectively.

We use $\partial^+S$ to denote the set of arcs from $S$ to $V\setminus S$.
An $s$-$t$ cut is the arc set $\partial^+S$ for some vertex set $S$ that contains $s$ but not $t$. We call $S$ the source side of the $s$-$t$ cut or $t$-cut. The value of $s$-$t$ cut $\partial^+S$ is the total capacity $d^+(S)=\sum_{e\in \partial^+S} c_e$. The $s$-$t$ mincut is the $s$-$t$ cut of minimum value, which is also the value of the $s$-$t$ max flow. A $t$-cut is the arc set $\partial^+S$ for some set of vertices $S$ not containing $t$. We also call $S$ the source side of the $t$-cut, and value and $t$-mincut are defined analogously. We sometimes abuse notation and refer to the $s$-$t$ cut or $t$-cut by the source side $S$ instead of the set of arcs $\partial^+S$.

We require the following fact about cuts in residual graphs, which follows from flow conservation and the construction of the residual graph.
\begin{fact}\label{fact:residual}
Consider a directed graph $H$, an $s$-$t$ flow of value $\tau$, and the residual graph $H_R$ of $f$. For any vertex set $S$ that contains $s$ but not $t$, we have $d^+_{H_R}(S)=d^+_H(S)-\tau$.
\end{fact}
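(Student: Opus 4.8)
The plan is to unfold the definition of the residual graph and compute $d^+_{H_R}(S)$ directly as a sum over ordered pairs $(u,v)$ with $u \in S$ and $v \notin S$, adopting throughout the convention that a missing arc $(u,v)$ is treated as an arc with $c(u,v) = f(u,v) = 0$ so that all sums over ordered pairs are well-defined. By construction the residual capacity of $(u,v)$ is $c(u,v) - f(u,v) + f(v,u)$, so I would write
\[
d^+_{H_R}(S) = \sum_{u \in S,\, v \notin S} \bigl(c(u,v) - f(u,v) + f(v,u)\bigr) = d^+_H(S) - \Bigl(\sum_{u\in S,\, v\notin S} f(u,v) \;-\; \sum_{u\in S,\, v\notin S} f(v,u)\Bigr),
\]
and observe that the parenthesized quantity is exactly the net flow leaving $S$.

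Next I would show that this net flow out of $S$ equals $\tau$. Since $s \in S$ and $t \notin S$, flow conservation holds at every $u \in S$ with $u \neq s$, i.e.\ total inflow equals total outflow at $u$. Summing the conservation identities over all $u \in S$, every arc with both endpoints in $S$ is counted once as outflow of its tail and once as inflow of its head, so these "internal" contributions cancel; what survives is precisely the net flow across $\partial^+ S$, which therefore equals the net flow out of $s$ — that is, the value of $f$, namely $\tau$. Substituting gives $d^+_{H_R}(S) = d^+_H(S) - \tau$.

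There is no real obstacle here: the statement follows from a one-line expansion plus a telescoping of the conservation equations. The only points that need care are bookkeeping: consistently treating absent (and antiparallel) arcs via the zero-capacity/zero-flow convention so the ordered-pair sums make sense, and checking that in the telescoping the internal-arc terms cancel cleanly, leaving exactly the boundary net flow. Once these conventions are fixed, the identity is immediate.
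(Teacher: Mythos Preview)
Your proposal is correct and is exactly the argument the paper alludes to: the paper does not give a proof of this fact, merely stating that it ``follows from flow conservation and the construction of the residual graph,'' which is precisely the expansion-plus-telescoping you carry out. Nothing to add.
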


We can characterize the value of any (directed) $s$-$t$ cut on Goldberg's network $H$ as follows.
\begin{fact}\label{fact:goldberg-cut-value}
    For any $s$-$t$ cut on $H$ with source side $S\ni s$, let $S_E=S\cap E$ and $S_V=S\cap V$ be the nodes in the source side from $E$ and from $V$ respectively, i.e., $S=\{s\}\cup S_E\cup S_V$. Then, the cut value of $S$ is
    $$d^+_H(S) = \begin{cases} c(E) - c(S_E) + \tau\cdot |S_V| & S_E \subseteq E[S_V]\\
    \infty & S_E \not\subseteq E[S_V]
    \end{cases}$$
\end{fact}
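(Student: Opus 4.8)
The plan is to do a straightforward case analysis on the structure of the source side $S = \{s\} \cup S_E \cup S_V$, based on which arcs of $H$ are cut by $\partial^+ S$. First I would observe that the arcs of $H$ fall into three types: the unit-capacity-$c_e$ arcs $(s,e)$; the infinite-capacity arcs $(e,v)$ for $v$ an endpoint of $e$; and the capacity-$\tau$ arcs $(v,t)$. An arc $(s,e)$ contributes $c_e$ to $d^+_H(S)$ exactly when $e \notin S_E$ (since $s \in S$ always); an arc $(v,t)$ contributes $\tau$ exactly when $v \in S_V$; and an arc $(e,v)$ contributes $\infty$ exactly when $e \in S_E$ but $v \notin S_V$.

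The key dichotomy is therefore whether some $e \in S_E$ has an endpoint outside $S_V$. If there is such an $e$, then $\partial^+ S$ contains an infinite-capacity arc, so $d^+_H(S) = \infty$; and the condition "$e \in S_E$ has an endpoint outside $S_V$ for some $e$" is exactly the negation of $S_E \subseteq E[S_V]$ (recall $E[S_V]$ is the set of edges of $G$ with both endpoints in $S_V$). This gives the second case. In the complementary case $S_E \subseteq E[S_V]$, no infinite arc is cut, and we simply sum the finite contributions: from the $(s,e)$ arcs we get $\sum_{e \notin S_E} c_e = c(E) - c(S_E)$, and from the $(v,t)$ arcs we get $\sum_{v \in S_V} \tau = \tau \cdot |S_V|$. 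Adding these yields $d^+_H(S) = c(E) - c(S_E) + \tau \cdot |S_V|$, which is the first case.

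I do not expect any real obstacle here — this is a direct unpacking of the construction of Goldberg's network, and the only thing to be careful about is the bookkeeping of which arcs are oriented out of $S$ versus into $S$ (arcs into $S$ do not count toward $d^+_H(S)$), and the fact that infinite-capacity arcs only go from $E$-nodes to $V$-nodes, never the other direction, so the only way to cut an infinite arc is to have an $E$-node in $S$ with a $V$-endpoint outside $S$. If anything deserves a sentence of care, it is noting that $\partial^+ S$ never contains an arc of the form $(t, \cdot)$ or $(\cdot, s)$ since $t$ has no outgoing arcs and $s$ has no incoming arcs, so the enumeration above is exhaustive.
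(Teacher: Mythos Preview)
Your proposal is correct and follows essentially the same approach as the paper's proof: both split on whether an infinite arc $(e,v)$ with $e\in S_E$, $v\notin S_V$ is cut (equivalently, whether $S_E\subseteq E[S_V]$), and in the finite case both sum the contributions $c(E)-c(S_E)$ from the $(s,e)$ arcs and $\tau\cdot|S_V|$ from the $(v,t)$ arcs. Your explicit enumeration of arc types and the remark about exhaustiveness are slightly more detailed than the paper's version but logically identical.
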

\begin{proof}
    We first prove that the cut value is finite if and only if $S_E\subseteq E[S_V]$, i.e., $S_E$ is a subset of edges in the induced subgraph of $G$ on $S_V$.

    Assume the cut value is finite. For every $e=(u, v)\in S_E$, it has two outgoing infinite arcs $(e, u)$ and $(e, v)$. Infinite arcs cannot appear in $\partial^+S$, so $u, v\in S_V$. It follows that $(u, v)\in E[S_V]$. Since this holds for every $e\in S_E$, we have $S_E\subseteq E[S_V]$.

    Assume the cut value is infinite, i.e., there exists an infinite arc $(e, u)$ in $\partial^+S$. Then, $e=(u, v)$ for some vertex $v$, and $e\in S_E, u\notin S_V$. This implies $e\notin E[S_V]$ and $S_E\not\subseteq E[S_V]$.

    Next, we compute the cut value when it is finite. There are two types of cut arcs: from $s$ to $E\setminus S_E$ and from $S_V$ to $t$. The first type contributes $c(E\setminus S_E) = c(E)-c(S_E)$. The second type contributes $\tau\cdot |S_V|$. Taking their sum gives the statement.
\end{proof}

As a corollary, we can characterize the $s$-$t$ min cut on $H$.
\begin{lemma}\label{lem:goldberg-mincut}
    The $s$-$t$ min cut on $H$ has source side $S=\{s\}\cup S_V\cup E[S_V]$, where
    $S_V = \arg\max_{X\subseteq V} (c(E[X])-\tau|X|)$.
\end{lemma}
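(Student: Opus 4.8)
The plan is to derive this as an immediate consequence of \Cref{fact:goldberg-cut-value}, which already gives a closed form for the value of every $s$-$t$ cut on $H$. The only $s$-$t$ cuts of finite value are those whose source side $S=\{s\}\cup S_E\cup S_V$ satisfies $S_E\subseteq E[S_V]$, so a minimum cut must be of this form. First I would fix the vertex part $S_V=S\cap V$ and optimize over the edge part $S_E=S\cap E$. Since all edge capacities $c_e$ are nonnegative, the quantity $c(E)-c(S_E)+\tau\cdot|S_V|$ is minimized by making $c(S_E)$ as large as possible, i.e.\ by taking the largest admissible $S_E$, which is $S_E=E[S_V]$. Hence for each choice of $S_V$ the best completion has value $c(E)-c(E[S_V])+\tau\cdot|S_V|$.

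Next I would optimize over $S_V\subseteq V$. We have
\[
\min_{S_V\subseteq V}\Big(c(E)-c(E[S_V])+\tau\cdot|S_V|\Big)
= c(E) - \max_{X\subseteq V}\Big(c(E[X])-\tau\cdot|X|\Big),
\]
so any maximizer $S_V$ of $c(E[X])-\tau|X|$ yields a minimum $s$-$t$ cut, and its source side is exactly $\{s\}\cup S_V\cup E[S_V]$ as claimed. (One should also note that allowing $S_E\subsetneq E[S_V]$ can only increase the value, and that cuts with $S_E\not\subseteq E[S_V]$ are infinite, so no better cut exists.)

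There is essentially no obstacle here: the lemma is a direct specialization of \Cref{fact:goldberg-cut-value}, and the only thing to be careful about is the monotonicity argument in $S_E$ (using $c\ge 0$) and the observation that the empty graph / $S_V=\emptyset$ case is handled uniformly since $E[\emptyset]=\emptyset$ makes the formula valid there too.
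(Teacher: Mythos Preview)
Your proposal is correct and essentially identical to the paper's own proof: both invoke \Cref{fact:goldberg-cut-value}, first fix $S_V$ and argue that the optimal edge part is $S_E=E[S_V]$ (the paper phrases this as ``adding an edge in $E[S_V]$ decreases the cut value''), and then minimize $c(E)-c(E[X])+\tau|X|$ over $X\subseteq V$. There is no meaningful difference in approach.
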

\begin{proof}
     For any $s$-$t$ cut on $H$ with source side $S\ni s$, denote $S_E=S\cap E$ and $S_V=S\cap V$.
     From \Cref{fact:goldberg-cut-value}, we have the following:
     For any fixed $S_V\subseteq V$, starting from $S_E=\emptyset$, the cut value $d^+_H(S)$ is decreased if we add an edge in $E[S_V]$ to $S_E$, and the cut value becomes infinity if we add an edge outside $E[S_V]$ to $S_E$. So, for fixed $S_V\subseteq V$, $d^+_H(S)$ is minimized when $S_E=E[S_V]$.
     So, the min cut value is $\min_{X\subseteq V}d^+_H(\{s\}\cup X\cup E[X])$.
     Because $d^+_H(\{s\}\cup X\cup E[X]) = c(E)-c(E[X])+\tau|X|$ by \Cref{fact:goldberg-cut-value} and $c(E)$ is a constant, the cut value is minimized when $c(E[X])-\tau|X|$ is maximized.
\end{proof}

Sometimes, we need to force some node $u\in V$ to be in the source side of the $s$-$t$ min cut. This can be achieved by adding an infinite capacity arc from $s$ to $u$.
\begin{lemma}\label{lem:goldberg-mincut-rooted}
    Construct $H'$ by adding to $H$ an infinite capacity arc from $s$ to $u\in V$.
    The $s$-$t$ min cut on $H'$ has source side $S=\{s\}\cup S_V\cup E[S_V]$, where
    $S_V = \arg\max_{u\in X\subseteq V} (c(E[X]) - \tau|X|)$.
\end{lemma}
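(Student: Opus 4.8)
The plan is to follow the proof of \Cref{lem:goldberg-mincut} almost verbatim, the only new ingredient being that the extra arc $(s,u)$ of infinite capacity forces $u$ onto the source side of any finite cut. So first I would analyze how adding this arc changes cut values. Take any $s$-$t$ cut of $H'$ with source side $S \ni s$. If $u \notin S$, then the arc $(s,u)$ lies in $\partial^+ S$, and since it has infinite capacity, $d^+_{H'}(S) = \infty$. Conversely, if $u \in S$, the arc $(s,u)$ goes from a vertex of $S$ to a vertex of $S$, so it is not a cut arc and contributes nothing; hence $d^+_{H'}(S) = d^+_H(S)$. Therefore the $s$-$t$ min cut of $H'$ is obtained by minimizing $d^+_H(S)$ over source sides $S$ with $\{s,u\} \subseteq S$.

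Next I would reuse \Cref{fact:goldberg-cut-value} directly. Writing $S = \{s\} \cup S_E \cup S_V$ with $S_E = S \cap E$ and $S_V = S \cap V$, a finite-value cut requires $S_E \subseteq E[S_V]$, and in that case $d^+_H(S) = c(E) - c(S_E) + \tau \cdot |S_V|$. Exactly as in the proof of \Cref{lem:goldberg-mincut}, for any fixed $S_V$ this quantity is minimized by taking $S_E = E[S_V]$, since enlarging $S_E$ within $E[S_V]$ only decreases the cut value while adding an edge outside $E[S_V]$ makes it infinite. Thus the minimum over all finite cuts with $u \in S_V$ equals $\min_{u \in X \subseteq V} d^+_H(\{s\} \cup X \cup E[X]) = \min_{u \in X \subseteq V}\bigl(c(E) - c(E[X]) + \tau|X|\bigr)$, the only change from the unrooted statement being the side constraint $u \in X$ inherited from the first paragraph.

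Finally, since $c(E)$ is a constant independent of $X$, minimizing $c(E) - c(E[X]) + \tau|X|$ over $u \in X \subseteq V$ is the same as maximizing $c(E[X]) - \tau|X|$ over $u \in X \subseteq V$. Taking $S_V$ to be this maximizer and setting $S = \{s\} \cup S_V \cup E[S_V]$ then gives the $s$-$t$ min cut of $H'$, as claimed. I do not anticipate any genuine obstacle: the argument is a line-by-line adaptation of \Cref{lem:goldberg-mincut}, and the only point requiring (a one-sentence) care is verifying that the new infinite arc genuinely restricts attention to source sides containing $u$ without otherwise perturbing any cut value, which is immediate from its infinite capacity.
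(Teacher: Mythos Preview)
Your proposal is correct and follows essentially the same approach as the paper's proof: first observe that the infinite arc $(s,u)$ forces $u\in S_V$ (otherwise the cut is infinite) while leaving $d^+_{H'}(S)=d^+_H(S)$ when $u\in S_V$, then reuse the optimization over $S_E$ from \Cref{lem:goldberg-mincut} to reduce to $\min_{u\in X\subseteq V}(c(E)-c(E[X])+\tau|X|)$, and finally convert to the stated $\arg\max$. The paper's proof is the same argument, just slightly more terse.
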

\begin{proof}
For any $s$-$t$ cut on $H$ with source side $S\ni s$, denote $S_E=S\cap E$ and $S_V=S\cap V$.
When $u\in S_V$, $d^+_{H'}(S)=d^+_{H}(S)$ because the new arc $(s, u)$ does not contribute to $\partial^+S$. When $u\notin S_V$, $d^+_{H'}(S)=\infty$ due to the new arc.

We have shown in \Cref{lem:goldberg-mincut} that $d^+_H(S)$ is minimized when $S_E=E[S_V]$ for fixed $S_V$. So, $d^+_{H'}(S)$ is also minimized when $S_E=E[S_V]$ for fixed $S_V$, and $d^+_{H'}(\{s\}\cup X\cup E[X]) = c(E)-c(E[X])+\tau|X|$ when $u\in X$.
Since $c(E)$ is a constant, $d^+_{H'}(S)$ is minimized when $S_V=\arg\max_{u\in X\subseteq V} (c(E[X]) - \tau|X|)$.
\end{proof}

Next, we construct a modified network $\tilde{H}$ based on Goldberg's network $H$ (See \Cref{fig:goldberg}). The construction is due to \cite{Gabow95}.

We start from the residual network $H_R$ of the $s$-$t$ max flow on $H$.
We first remove $s$ and all its incident arcs from $H_R$.
Then, we shortcut all edge nodes as follows.
Consider any edge node $e=(u, v)\in E$. In $H_R$, $e$ is incident to 4 arcs, among which $(e, u)$ and $(e, v)$ are infinite edges from $H$, and $(u, e)$ and $(v, e)$ are created by the residual network, possibly of capacity $0$. We perform the following:
\begin{enumerate}
    \item Replace arcs $(u, e), (e, v)$ by $(u, v)$, and set capacity $c_{\tilde{H}}(u, v)=c_{H_R}(u, e)$.
    \item Replace arcs $(v, e), (e, u)$ by $(v, u)$, and set capacity $c_{\tilde{H}}(v, u)=c_{H_R}(v, e)$.
    \item Remove the node $e$.
\end{enumerate}
After shortcutting all edge nodes, we get a modified network $\tilde{H}$. 
\begin{fact}\label{fact:tilde-H-size}
    $|V[\tilde{H}]|=O(n),\, |E[\tilde{H}]|=O(m+n)$.
\end{fact}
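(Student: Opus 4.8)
The plan is to track each construction step that produces $\tilde H$ from Goldberg's network $H$ and count nodes and arcs. First I would recall from Fact~\ref{fact:H-size} that $H$ has $|V[H]| = O(m+n)$ nodes and $|E[H]| = O(m+n)$ arcs; the residual network $H_R$ has exactly the same node set and the same arc count up to a constant factor (each arc of $H$ contributes at most one forward and one backward residual arc). The construction of $\tilde H$ then does two things: (i) it deletes the node $s$ together with its $m$ incident arcs $(s,e)$, and (ii) for every edge node $e = (u,v) \in E$ it replaces the (at most) four arcs incident to $e$ in $H_R - s$ by the two arcs $(u,v)$ and $(v,u)$, and then deletes $e$.

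For the node count: after removing $s$ and all the edge nodes, the only nodes remaining are the $n$ vertex nodes of $V$, so $|V[\tilde H]| = n = O(n)$, which is the first claim. For the arc count: the shortcutting step, applied to edge node $e=(u,v)$, destroys the (up to) four arcs incident to $e$ and creates at most two new arcs between $u$ and $v$; so the total number of arcs introduced by shortcutting over all $e \in E$ is at most $2m$. The arcs of $H_R - s$ that are not incident to any edge node are exactly the residual arcs corresponding to the $(v,t)$ arcs of $H$ (the forward arcs $(v,t)$ of capacity $\tau$ and their backward residuals), of which there are $O(n)$. Hence $|E[\tilde H]| \le 2m + O(n) = O(m+n)$, establishing the second claim.

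The only mild subtlety — and the place where a little care is needed rather than a genuine obstacle — is making sure the shortcutting is well-defined and does not blow up the arc count: one must observe that in $H_R$ the edge node $e$ has degree exactly four (two infinite forward arcs $(e,u),(e,v)$ from $H$, and two residual backward arcs $(u,e),(v,e)$, some possibly of capacity $0$), because no flow arc other than $(s,e)$, $(e,u)$, $(e,v)$ touches $e$; this is what guarantees that each $e$ yields at most two arcs of $\tilde H$. Given that, the counting is immediate, and I would also note in passing that parallel arcs $(u,v)$ produced from distinct edge nodes $e,e'$ with the same endpoints can be merged (summing capacities) if a simple graph is desired, which only decreases the count. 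No step here is hard; it is a direct bookkeeping argument on top of Fact~\ref{fact:H-size} and the definition of the residual graph.
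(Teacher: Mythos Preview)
Your proof is correct and is exactly the direct bookkeeping argument the paper intends (the paper states this fact without proof, as it follows immediately from the construction). One trivial slip: after removing $s$ and all edge nodes, the remaining nodes are $V\cup\{t\}$, not just $V$, so $|V[\tilde H]|=n+1$; this of course does not affect the $O(n)$ bound.
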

\begin{lemma}\label{lem:goldberg-modified-cut}
Suppose the $s$-$t$ max flow on $H$ is $c(E)=\sum_{e\in E}c(e)$.
Then for any cut $X\subseteq V$ in $\tilde{H}$,
$$d^+_{\tilde{H}}(X) = \tau|X|-c(E[X]).$$
\end{lemma}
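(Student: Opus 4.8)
The plan is to trace a cut $X\subseteq V$ of $\tilde H$ back through the two operations used to build $\tilde H$ from $H$ — passing to the residual network $H_R$ of the maximum flow $f$, deleting $s$, and shortcutting the edge nodes — and to match it with the canonical cut $S=\{s\}\cup X\cup E[X]$ of Goldberg's network $H$, whose value is already computed in \Cref{fact:goldberg-cut-value}. First I would record the one structural consequence of the hypothesis that the maximum $s$-$t$ flow on $H$ has value $c(E)=\sum_e c_e$: every source arc $(s,e)$ is then saturated, so in $H_R$ the only residual arcs incident to $s$ are the arcs $(e,s)$ of capacity $c_e$ (all pointing into $s$) and arcs out of $s$ of capacity $0$. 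Hence deleting $s$ from $H_R$ removes no arc of $\partial^+(\{s\}\cup X\cup E[X])$ — the arcs $(e,s)$ either are internal to that set (when $e\in E[X]$) or point into it — so $d^+_{H_R}(\{s\}\cup X\cup E[X])$ is unchanged by the deletion of $s$. Also, flow conservation at an edge node $e=(u,v)$ gives $f(e,u)+f(e,v)=f(s,e)=c_e$, and this is precisely how the capacity is split between the two shortcut arcs of $\tilde H$: $c_{\tilde H}(u,v)=c_{H_R}(u,e)=f(e,u)$ and $c_{\tilde H}(v,u)=c_{H_R}(v,e)=f(e,v)$.

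Next I would show that shortcutting a single edge node $e=(u,v)$ does not change the value of the cut with source side $X\cup E[X]$ in the current network, for any $X\subseteq V$; note the $t$-incident arcs are untouched by shortcutting, so they contribute equally before and after. There are three cases according to how many endpoints of $e$ lie in $X$. If $e\in E[X]$, then $e$ is on the source side and all four arcs around it, as well as the two shortcut arcs $(u,v),(v,u)$ replacing them, are internal to the source side. If $u,v\notin X$, they are all outside the source side. If exactly $u\in X$ and $v\notin X$, then the only arc crossing $\partial^+$ before shortcutting is $(u,e)$ (from $u$ in the source side to $e$ outside it), contributing $c_{H_R}(u,e)=f(e,u)$; afterwards it is replaced by the arc $(u,v)$ of the same capacity $f(e,u)$, while the reverse shortcut arc $(v,u)$ points into $X$ in both pictures and the arc $(e,v)$ (now absorbed) was internal. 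So the cut value is preserved in every case; iterating over all edge nodes yields $d^+_{\tilde H}(X)=d^+_{H_R}(\{s\}\cup X\cup E[X])$.

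Finally I would assemble the chain of equalities: by \Cref{fact:residual} applied to the flow $f$ of value $c(E)$, $d^+_{H_R}(\{s\}\cup X\cup E[X])=d^+_{H}(\{s\}\cup X\cup E[X])-c(E)$; and by \Cref{fact:goldberg-cut-value} with $S_V=X$ and $S_E=E[X]\subseteq E[X]$, $d^+_H(\{s\}\cup X\cup E[X])=c(E)-c(E[X])+\tau|X|$. Combining these gives $d^+_{\tilde H}(X)=\tau|X|-c(E[X])$, as claimed.

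I expect the only delicate point to be the bookkeeping in the case analysis for the shortcutting step — in particular getting the orientation of the two shortcut arcs right, so that exactly the $f(e,u)$ term (and not the $f(e,v)$ term) is the one crossing $\partial^+X$ when $u\in X\not\ni v$ — together with checking that deleting $s$ is harmless. If this lifting argument feels fragile, an equivalent and fully self-contained alternative is to compute $d^+_{\tilde H}(X)$ directly as $\sum_{v\in X}\bigl(\tau-f(v,t)\bigr)+\sum_{e=(u,v):\,u\in X,\,v\notin X}f(e,u)$, then substitute the flow-conservation identity $f(v,t)=\sum_{e\ni v}f(e,v)$ at each $v\in X$; the contributions of edges in $E[X]$ collapse to $c(E[X])$ via $f(e,u)+f(e,v)=c_e$, and the contributions over $\partial X$ cancel between the two sums, leaving $\tau|X|-c(E[X])$.
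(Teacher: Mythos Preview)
Your proposal is correct and follows essentially the same route as the paper: lift $X$ to the canonical cut $S=\{s\}\cup X\cup E[X]$ in $H$, use \Cref{fact:goldberg-cut-value} and \Cref{fact:residual} to get $d^+_{H_R}(S)=\tau|X|-c(E[X])$, observe that removing $s$ costs nothing because all $(s,e)$ arcs are saturated, and then do the same three-case check that shortcutting each edge node preserves the cut value. Your additional direct-computation alternative via flow conservation is a nice self-contained variant the paper does not include, but the main argument matches.
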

\begin{proof}
Consider any $X\subseteq V$. Let $S=\{s\}\cup X\cup E[X]$. By \Cref{fact:goldberg-cut-value},
$d^+_H(S) = c(E) - c(E[X])+\tau|X|$.
$S$ is an $s$-$t$ cut, so its cut value in the residual network is, by \Cref{fact:residual},
\[d^+_{H_R}(S) = d^+_H(S) - c(E) = \tau|X| - c(E[X]).\]

Note that $d_H^+(s)=c(E)$ by construction of $H$. So, the assumption implies that all $(s, e)$ edges are saturated in the flow. Then, $(s, e)$ edges have no capacity in $H_R$, and removing $s$ from $S$ does not affect $d^+_{H_R}(S)$.

Next, we consider the effect of shortcutting $e=(u, v)\in E$. If $S\supseteq\{u, v\}$ or $S\cap \{u, v\}=\emptyset$, then $\partial^+_{H_R}(S)$ is not affected. Otherwise, assume $S\cap \{u, v\}=\{u\}$ (the case of $S\cap \{u, v\}=\{v\}$ is symmetric). Then $e$ is not in $E[X]$. Among the 4 arcs incident to $e$, only $(u, e)$ is in $\partial^+_{H_R}(S)$. After shortcutting, only $(u, v)$ is in $\partial^+_{\tilde{H}}(S)$. So, the cut value does not change. In conclusion $d^+_{\tilde{H}}(X) = d^+_{H_R}(S) = \tau|X|-c(E[X])$.
\end{proof}

\subsection{The {\sc Find-Star} Subroutine}\label{sec:find-star}
In this section, we design an algorithm that outputs some vertex set given $G$ and $k$ in $\tO(mk\log C+F(m, m)\log C)$ time. Moreover, if $k$ is larger than the size of maximum skew-densest set, then the algorithm must output the maximum skew-densest set.

The algorithm needs the following subroutines. They are adapted from the global directed min cut algorithm \cite{CLNPQS21}. We include a proof of \Cref{lem:alg-sparsify} in \Cref{sec:partial-sparsify} for completeness.

\begin{lemma}[Lemmas 2.6 and 2.7 of \cite{CLNPQS21}]\label{lem:size-bounded-mincut}
    There exists an algorithm that, given a capacitated digraph $G$ with a fixed root vertex $t$, finds a $t$-mincut w.h.p.\ when the source side of a $t$-mincut has size $\le k$. The algorithm runs in $\tO(mk)$ time plus polylogarithmic calls to max flow on digraphs with $O(m)$ edges and $O(n)$ vertices and integer edge weights at most $\poly(n, C)$.
\end{lemma}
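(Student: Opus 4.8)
Since this lemma is imported from \cite{CLNPQS21} (their Lemmas 2.6 and 2.7), strictly speaking the proof is a citation; but here is the plan I would follow to re-derive it.

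\textbf{Step 1: reduce to locating one vertex of the small side.} I would first reduce ``find a $t$-mincut'' to ``find some vertex $v$ in the source side $S^\star$ of a $t$-mincut.'' If $v\in S^\star$ and $|S^\star|\le k$, then the minimum cut $\partial^+(S)$ over all $S$ with $v\in S$, $t\notin S$ has value exactly the $t$-mincut value: it is $\le d^+(S^\star)$ because $S^\star$ is feasible, and $\ge$ the $t$-mincut value because it is itself a $t$-cut. Hence a single max-flow from $v$ to $t$ --- on a digraph with $O(m)$ edges, $O(n)$ nodes, and weights $\le\poly(n,C)$ --- recovers a $t$-mincut together with its source side, once some $v\in S^\star$ is known.

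\textbf{Step 2: locate such a vertex with few max-flow calls.} The obstacle is that $S^\star$ can be as small as a constant, so a uniformly random vertex lands in $S^\star$ only with probability $\le k/n$, which would force $\Omega(n/k)$ max-flow calls. The remedy is the partial-sparsification machinery of \cite{CLNPQS21}: after guessing the $t$-mincut value $\lambda$ up to a constant factor ($O(\log(nC))$ guesses, $\lambda\le\poly(n,C)$) and applying their edge-sampling construction, one gets w.h.p.\ a sparse instance on the same vertex set in which the cut defining $S^\star$ survives and $S^\star$ has only $\tO(1)$ outgoing arcs. Inside such an instance one can, in $\tO(mk)$ total time, produce a family of $\tO(k)$-vertex candidate sets that provably contains $S^\star$ (via a local-flow search seeded at each vertex, together with the recursive sparsify-and-contract step of \cite{CLNPQS21}); it then suffices to compute max-flow from a seed of each candidate to $t$, and batching these with shared residual-graph bookkeeping keeps the number of genuine max-flow calls polylogarithmic. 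Finally $O(\log n)$ independent repetitions push the failure probability below $1/\poly(n)$. The $\lambda$-guessing is monotone in the right direction: when $\lambda$ is below the true value no $t$-cut of value $\le\lambda$ with a $\le k$-vertex side exists, so the routine can safely declare the smallest successful guess correct.

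\textbf{Main obstacle.} The crux --- and the reason the bound carries an extra $\tO(mk)$ rather than just $\tO(m)$ --- is the localization of $S^\star$ to $\tO(k)$ candidates in a \emph{directed} graph: one cannot appeal to undirected cut sparsifiers, nor to the posimodularity-based uncrossing behind undirected isolating cuts, because $d^+(\cdot)$ is submodular but not symmetric. I would carry out this step directly following \cite{CLNPQS21}, or, if it can be made to work, replace it with a directed isolating-cuts subroutine specialized to $t$-cuts sharing the common sink $t$. The remaining ingredients --- the reduction of Step 1, the value-guessing with its monotonicity, and the probability amplification --- are routine.
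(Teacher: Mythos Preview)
You are right that the paper does not prove this lemma; it is imported by citation. However, the paper's Appendix (\Cref{sec:partial-sparsify}) proves the closely related \Cref{lem:alg-sparsify} in full, and that proof exposes the technique behind \cite{CLNPQS21}, which is \emph{not} the one you sketch in Step~2. Your Step~1 is fine, but Step~2 has real gaps: partial sparsification does not make $S^\star$ have $\tO(1)$ outgoing arcs (it makes the $t$-mincut \emph{value} in the sparsifier $O(k\log n/\eps^2)$ after scaling, see \Cref{lem:partial-sparsify}); a ``local-flow search seeded at each vertex'' is $n$ seeds, not $\tO(k)$, so the $\tO(mk)$ budget is unaccounted for; and the ``batching with shared residual-graph bookkeeping'' that is supposed to collapse the max-flow calls to polylogarithmic is left unspecified---as you yourself note, directed isolating cuts do not go through because $d^+$ is submodular but not posimodular.

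The actual argument is arborescence packing plus $1$-respecting cuts. After the partial sparsification of \Cref{lem:partial-sparsify}, the target cut is a $(1+\eps)$-approximate $t$-mincut in a graph $G_0$ whose $t$-mincut value $\lambda$ is $O(k\log n)$. One then computes a fractional $t$-arborescence packing of value $\ge(1-\eps)\lambda$ via the MWU packing framework (\Cref{lem:tree-packing}); this is where the $\tO(mk)$ term comes from, since the number of MWU iterations is $\tO(\lambda)\le\tO(k)$ and each is a minimum-cost arborescence in $\tO(m)$ time. Sampling $O(\log n)$ arborescences from this packing, a Markov argument shows w.h.p.\ one of them $1$-respects the target cut. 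Finally, for each sampled arborescence one invokes the $1$-respecting $t$-mincut routine of \Cref{lem:1-resp-mincut-alg}, which needs only polylogarithmic max-flow calls. So the polylog max-flow budget is spent on $1$-respecting cuts, not on enumerating source-vertex candidates as you propose.
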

\begin{lemma}[Minimum Directed Cut]\label{lem:alg-sparsify}
    There exists an algorithm that,  given a capacitated digraph $G$ with a fixed root vertex $t$ and a number $\tau > 0$, assuming there exists a $t$-cut $S$ with $d^+(S)< \tau$ and $|S|\le k$, finds a $t$-cut $S$ with $d^+(S)< \tau$ w.h.p.
    The algorithm runs in $\tO(mk)$ time plus polylogarithmic calls to max flow on digraphs with $O(m)$ edges and $O(n)$ vertices and integer edge weights at most $\poly(n, C)$.
\end{lemma}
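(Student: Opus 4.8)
\noindent\emph{Proof plan.} The plan is to establish \Cref{lem:alg-sparsify} by reducing it to polylogarithmically many minimum-cut computations through random vertex sampling, following the directed minimum-cut machinery of \cite{CLNPQS21}. The guiding observation is that, in contrast to \Cref{lem:size-bounded-mincut}, the $t$-cut we output need not have a small source side --- it only needs value below $\tau$. Hence we cannot simply call the size-bounded minimum $t$-cut routine (the genuine minimum $t$-cut may have a huge source side even when a size-$\le k$ cut of value $<\tau$ exists), but this same relaxation is exactly what lets a sampling-based approach succeed with only one max-flow per ``guess''.

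Let $S$ be the hypothesized $t$-cut with $d^+(S)<\tau$ and $1\le|S|\le k$. First I would dispose of the case $|S|=1$ by scanning, in $\tO(m)$ time, every vertex $v\ne t$ and outputting $\{v\}$ if $d^+(\{v\})<\tau$. Otherwise I run $O(\log k)$ phases; in phase $i$ I sample each vertex of $V\setminus\{t\}$ independently with probability $2^{-i}$, repeating $\Theta(\log n)$ times for amplification, to obtain a terminal set $R$. For the phase with $2^{-i}\in[1/(2|S|),\,1/|S|]$ --- which exists because $|S|\le k$ --- with constant probability exactly one vertex $r^\star\in S$ lands in $R$; conditioned on this, $S$ is a legal cut separating $r^\star$ from $(R\setminus\{r^\star\})\cup\{t\}$, since $R\setminus\{r^\star\}$ is disjoint from $S$ and $t\notin S$. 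I then run a directed analogue of the isolating-cuts procedure on the terminal set $R\cup\{t\}$: by the usual divide-and-conquer over a binary labeling of the terminals, using $O(\log|R|)$ max-flow calls (each on a digraph with $O(m)$ edges and $O(n)$ vertices obtained by contracting terminal classes) together with $\tO(m)$ bookkeeping, it returns for every $r\in R$ the minimum-value set $U_r$ with $r\in U_r$ and $\big((R\cup\{t\})\setminus\{r\}\big)\cap U_r=\emptyset$. The uncrossing step here needs only submodularity of $d^+$, not symmetry, because we only ever uncross toward the source side containing $r$ (if $U$ is a minimum isolating set for $r$ and $P$ is the source side of some bipartition cut with $r\in P$, then submodularity gives $d^+(U\cap P)=d^+(U)$ and $d^+(U\cup P)=d^+(P)$, so the minimal $U$ satisfies $U\subseteq P$). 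Each $U_r$ is automatically a legal $t$-cut (it contains $r\ne t$ and excludes $t$), so I test $d^+(U_r)<\tau$ and output the first one that passes. Since $S$ witnesses that $d^+(U_{r^\star})\le d^+(S)<\tau$, the procedure succeeds with high probability after the amplification.

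For the running time, each phase makes $\tO(1)$ max-flow calls on digraphs with $O(m)$ edges, $O(n)$ vertices and $\poly(n,C)$-bounded weights, and the $U_r$ being pairwise disjoint means all the cut-value tests together cost only $\tO(m)$; summing over the $O(\log k)$ phases and the $\Theta(\log n)$ repetitions keeps the non-flow work within (in fact well below) the stated $\tO(mk)$ bound, matching the size-bounded minimum-cut routine of \cite{CLNPQS21}. The main obstacle, and the reason this needs its own argument rather than a black-box invocation of \Cref{lem:size-bounded-mincut}, is making the directed isolating-cuts step simultaneously correct and efficient in this thresholded regime: one must check that the uncrossing argument survives the loss of symmetry (sketched above), keep every intermediate max-flow instance within the $O(m)$-edge/$O(n)$-vertex/$\poly(n,C)$-weight budget after all the contractions, and transfer the partial-sparsification bookkeeping of \cite{CLNPQS21} that underpins the hidden polylogarithmic factors; getting these details to line up, rather than any single new idea, is the crux.
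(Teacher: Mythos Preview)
Your approach is genuinely different from the paper's, and it has a real gap.

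The paper proves \Cref{lem:alg-sparsify} by an adaptation of the \emph{sparsification\,+\,tree packing\,+\,1-respecting cut} pipeline of \cite{CLNPQS21}: (i) a partial sparsifier (random rounding of capacities plus auxiliary $v\!\to\! t$ edges) brings the $t$-mincut down to $O(k\log n)$ while keeping some $<\!\tau$ cut a $(1+\eps)$-approximate $t$-mincut; (ii) a multiplicative-weights fractional packing of $t$-arborescences with $O(k\log n)$ iterations of min-cost arborescence supplies the $\tO(mk)$ term; (iii) $O(\log n)$ sampled arborescences are fed to a 1-respecting min-$t$-cut routine costing polylog max-flows. Your proposal instead samples vertices and runs a directed isolating-cuts procedure; this would beat the $\tO(mk)$ bound if it worked.

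The gap is in the isolating-cuts step. Your uncrossing argument is correct as far as it goes: if $r$ lies in the \emph{source} group of a bipartition and $P$ is the source side of the corresponding directed min-cut, submodularity of $d^+$ gives $U^\star_r\subseteq P$. What you do not get is the companion statement for a terminal $r'$ in the \emph{sink} group: showing $U^\star_{r'}\cap P=\emptyset$ (equivalently $U^\star_{r'}\subseteq\overline P$) would need $d^+(\overline P)=d^+(P)$, i.e.\ symmetry/posimodularity, which fails for directed cuts. Computing both directions per bit does not rescue disjointness either, because the two source sides $P_{j,0}$ and $P_{j,1}$ can overlap. (Concretely: take $V=\{r,r',u,v,t\}$ with arcs $r\!\to\!u$ weight $1000$, $u\!\to\!r'$ and $r'\!\to\!v$ and $v\!\to\!u$ each weight $100$, and $u\!\to\!t,v\!\to\!t,r\!\to\!t,r'\!\to\!t$ each weight $1$; then the minimal min $\{r\}$-to-$\{r',t\}$ cut source side is $\{r,u\}$ while the minimal min $\{r'\}$-to-$\{r,t\}$ cut source side is $\{r',u,v\}$, which intersect in $u$.) Without disjointness the phase-2 refinements can no longer be charged to a single copy of the edge set, so you are back to $|R|$ full max-flow calls rather than polylogarithmically many --- and $|R|\approx n/|S|$ can be $\Theta(n)$ when $|S|$ is small. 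This is exactly the regime where the $\tO(mk)$ budget is tightest, and it is why the paper resorts to arborescence packing instead.
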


The algorithm performs a binary search for $\tau$ up to a precision of $n^{-3}$. Let $\tau^*$ be the maximum $\tau$ (in binary search) that the following procedure succeeds. The algorithm outputs the $t$-mincut of $\tilde{H}(G, \tau^*)$ using \Cref{lem:size-bounded-mincut}.
\begin{enumerate}
    \item Construct Goldberg's network $H$ from $G$ with parameter $\tau$.
    \item Compute the $s$-$t$ max flow on $H$. If the flow value $< c(E)$, then return success.
    \item Construct the modified network $\tilde{H}$.
    \item Call \Cref{lem:alg-sparsify} on input $(\tilde{H}, t, \tau, k)$ to find a set $U$.
    \item If $d^+_{\tilde{H}}(U)< \tau$, return success. Otherwise, return failure.
\end{enumerate}

\begin{algorithm}
\caption{{\sc Find-Star}($G, k$)}
\label{alg:find}
$\tau_L \gets 0, \tau_R\gets c(E)$\\
\While{$\tau_R-\tau_L \ge n^{-3}$}{
$\tau\gets \frac{\tau_L+\tau_R}{2}$\\
Construct Goldberg's network $H$ from $G$ with parameter $\tau$\\
\eIf{the $s$-$t$ max flow value on $H$ $< c(E)$}{
$\tau_L = \tau$
}{
Construct the modified network $\tilde{H}$ from $H$\\
Call \Cref{lem:alg-sparsify} on input $(\tilde{H}, t, \tau, k)$ to find a set $U$\\
\eIf{$d^+_{\tilde{H}}(U)< \tau$}{$\tau_L = \tau$}{$\tau_R = \tau$}
}
}
Output the $t$-mincut in $\tilde{H}$ of parameter $\tau_L$ using \Cref{lem:size-bounded-mincut}
\end{algorithm}

\begin{lemma}\label{lem:success-iff-tau-less-rhoD}
Let $D$ be the maximum skew-densest subgraph. Suppose $k\ge |D|$. The following holds w.h.p.: The binary search procedure on input $\tau$ succeeds if and only if $\tau < \rho(D)$.
\end{lemma}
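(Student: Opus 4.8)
The plan is to translate the procedure's two possible success conditions into inequalities on skew-densities, using the cut characterizations in \Cref{lem:goldberg-mincut} and \Cref{lem:goldberg-modified-cut}, and then compare with $\rho(D)=\max_{X\subseteq V}\rho(X)$. Recall that on input $\tau$ the procedure declares success in exactly one of two ways: (i) at the max-flow test, when the $s$--$t$ max flow on $H=H(G,\tau)$ is strictly less than $c(E)$; or (ii) at the final test, when the call to \Cref{lem:alg-sparsify} returns a $t$-cut $U$ of $\tilde H$ with $d^+_{\tilde H}(U)<\tau$ --- and this second test is reached only when the max flow on $H$ equals $c(E)$ (it is always at most $c(E)$, witnessed by the $s$--$t$ cut $\{s\}$). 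First I would record the two translations. By \Cref{lem:goldberg-mincut}, the $s$--$t$ max flow on $H$ is $c(E)-\max_{X\subseteq V}\bigl(c(E[X])-\tau|X|\bigr)$, so it is $<c(E)$ iff some nonempty $X$ has $c(E[X])>\tau|X|$; since $c(E[X])=\rho(X)(|X|-1)$, this requires $|X|\ge 2$ and forces $\rho(X)>\tau\cdot\tfrac{|X|}{|X|-1}>\tau$. When the max flow equals $c(E)$, \Cref{lem:goldberg-modified-cut} applies and gives $d^+_{\tilde H}(X)=\tau|X|-c(E[X])$ for every nonempty $t$-cut source side $X\subseteq V$; hence such an $X$ has $d^+_{\tilde H}(X)<\tau$ iff $c(E[X])>\tau(|X|-1)$, which (the $|X|=1$ case would read $0>0$ and is thus excluded) again means $|X|\ge 2$ and $\rho(X)>\tau$.

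For the forward direction --- success $\Rightarrow\tau<\rho(D)$ --- the argument is deterministic. If success is declared at the max-flow test, the first translation exhibits some $X$ with $\rho(X)>\tau$, so $\rho(D)\ge\rho(X)>\tau$. If success is declared at the final test, the returned set $U$ is explicitly checked to satisfy $d^+_{\tilde H}(U)<\tau$, so by the second translation $\rho(U)>\tau$, and again $\rho(D)>\tau$.

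For the reverse direction --- $\tau<\rho(D)\Rightarrow$ success, w.h.p.\ --- I would split on the max-flow test. If the max flow on $H$ is $<c(E)$, success is declared immediately. Otherwise the max flow equals $c(E)$, $\tilde H$ is well-defined, and I would take the candidate $t$-cut $X=D$: it is nonempty with $|D|\ge 2$, and $d^+_{\tilde H}(D)=\tau|D|-\rho(D)(|D|-1)<\tau|D|-\tau(|D|-1)=\tau$, using $\rho(D)>\tau$. Since $|D|\le k$ by hypothesis, the precondition of \Cref{lem:alg-sparsify} is met, so its algorithm returns, with high probability, a $t$-cut $U$ with $d^+_{\tilde H}(U)<\tau$, and the final test declares success. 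Conversely, when $\tau\ge\rho(D)$ the two translations show the max-flow test cannot fire and no nonempty $t$-cut of $\tilde H$ has value $<\tau$ (for nonempty $X$, $d^+_{\tilde H}(X)=\tau|X|-\rho(X)(|X|-1)\ge\tau$ since $\rho(X)\le\rho(D)\le\tau$), so whatever set the \Cref{lem:alg-sparsify} call returns, the explicit check $d^+_{\tilde H}(U)<\tau$ fails and the procedure fails --- deterministically.

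I expect the proof to be largely bookkeeping once the two cut characterizations are in place. The one structural point to get right is the role of the max-flow test: it exists precisely to dispose of the case where the max flow on $H$ dips below $c(E)$, so that \Cref{lem:goldberg-modified-cut} (which assumes max flow $=c(E)$) is legitimately applicable at the final test; and the hypothesis $k\ge|D|$ is exactly what licenses the \Cref{lem:alg-sparsify} call in the reverse direction. A minor subtlety worth noting explicitly is that the two success conditions detect slightly different thresholds ($c(E[X])>\tau|X|$ versus $\rho(X)>\tau$), but since both imply $\rho(X)>\tau$ no case is lost, and the ``only if'' direction remains randomness-free.
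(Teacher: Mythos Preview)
Your proposal is correct and follows essentially the same approach as the paper's proof: both directions are argued by translating the two success conditions (max-flow test and $\tilde H$ cut test) into skew-density inequalities via \Cref{lem:goldberg-mincut} and \Cref{lem:goldberg-modified-cut}, then invoking \Cref{lem:alg-sparsify} with the witness $D$ for the w.h.p.\ reverse direction. Your additional explicit treatment of the case $\tau\ge\rho(D)$ is just the contrapositive of the forward direction you already proved, so it is redundant but harmless.
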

\begin{proof}
   (Necessity) Assume the algorithm succeeds. There are two cases: success at step 2 or step 5. In both cases, we show $\tau < \rho(U)$ for some $U$, which implies $\tau< \rho(D)$ because $D$ is the skew-densest set.

    Suppose the algorithm succeeds at step 2. Let $S$ be the source side of the $s$-$t$ min cut in $H$ computed at step 2. By \Cref{lem:goldberg-mincut} and \Cref{fact:goldberg-cut-value}, $S=\{s\}\cup U_1\cup E[U_1]$ for some $U_1\subseteq V$, and $d_H^+(S) = c(E)-c(E[U_1])+\tau|U_1|$. According to step 2, we have $d_H^+(S) < c(E)$ or equivalently $c(E[U_1]) -\tau|U_1| > 0$. Then $|U_1|\ge 2$ and we can apply (\ref{eq:density-def-mult}) to get
    \[c(E[U_1]) = \rho(U_1)(|U_1|-1) > \tau|U_1|>\tau(|U_1|-1)\]
    which implies $\tau < \rho(U_1)$.

    Suppose the algorithm succeeds at step 5. Since it didn't succeed at step 2, the $s$-$t$ max flow on $H$ is $c(E)$. Let $U_2$ be the set found at step 4. By \Cref{lem:goldberg-modified-cut}, $d^+_{\tilde{H}}(U_2) = \tau|U_2|-c(E[U_2])$. According to step 5, we have $d^+_{\tilde{H}}(U_2) < \tau$. Then $|U_2|\ge 2$ and we can apply (\ref{eq:density-def-mult}) to get
    \[\tau(|U_2|-1) < c(E[U_2]) = \rho(U_2)(|U_2|-1)\]
     which implies $\tau < \rho(U_2)$.

    (Sufficiency) Assume $\tau < \rho(D)$. We need to show that the algorithm succeeds w.h.p.

    If the algorithm succeeds at step 2, we are done. Next assume it didn't, which means the $s$-$t$ max flow on $H$ is $c(E)$. By \Cref{lem:goldberg-modified-cut}, \[d^+_{\tilde{H}}(D) = \tau|D|-c(E[D]) = \tau|D|-\rho(D)(|D|-1) = \tau + (\tau-\rho(D))(|D|-1)< \tau\]
    So there exists a $t$-cut $D$ in $\tilde{H}$ with value $< \tau$ and $|D|\le k$.
    By \Cref{lem:alg-sparsify}, step 4 finds a cut of value $< \tau$ w.h.p., and the algorithm will succeed at step 5.
\end{proof}

\begin{lemma}\label{lem:alg-find-densest-set}
    Suppose the input graph $G$ is nonempty and has integer edge capacities. Let $D$ be the maximum skew-densest subgraph of $G$. Suppose $\rho(D)-n^{-3} < \tau < \rho(D)$.  Let $H$ and $\tilde{H}$ be Goldberg's network and the modified network of $G$ with parameter $\tau$.
    Then, $D$ is the $t$-mincut of $\tilde{H}$.
\end{lemma}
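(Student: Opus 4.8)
The plan is to compute the values of $t$-cuts in $\tilde H$ via \Cref{lem:goldberg-modified-cut}, which says $d^+_{\tilde H}(X) = \tau|X| - c(E[X])$ for every $X \subseteq V$, and then show $D$ is a minimizer. The only delicate point beyond bookkeeping is that \Cref{lem:goldberg-modified-cut} requires the $s$-$t$ max flow on $H$ to equal $c(E)$, so I would establish that first; this is where the lower bound $\tau > \rho(D) - n^{-3}$ (together with integrality of the capacities) enters.

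To certify the hypothesis: since $G$ is nonempty with integer weights, $c(E) \ge 1$, so $\rho(D) \ge \rho(V) \ge \frac{1}{n-1}$ and hence $\tau > \rho(D) - n^{-3} > \frac 1n > n^{-2}$ (the statement is vacuous when $n \le 1$). For any $X$ with $|X| \ge 2$, combining $\rho(X) \le \rho(D) < \tau + n^{-3}$ with $|X| - 1 \le n - 1$ gives $c(E[X]) = \rho(X)(|X|-1) < \tau(|X|-1) + n^{-3}(n-1) < \tau(|X|-1) + n^{-2} < \tau|X|$; for $|X| \le 1$ this is trivial. So $\max_{X \subseteq V}\big(c(E[X]) - \tau|X|\big) \le 0$, and by \Cref{fact:goldberg-cut-value} and \Cref{lem:goldberg-mincut} the minimum $s$-$t$ cut of $H$, hence the max flow, equals $c(E)$.

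With that in hand, \Cref{lem:goldberg-modified-cut} applies and, since the nodes of $\tilde H$ are $V \cup \{t\}$, the source side of any $t$-cut is a nonempty $X \subseteq V$ with $d^+_{\tilde H}(X) = \tau - g(X)$ where $g(X) := c(E[X]) - \tau(|X|-1) = (\rho(X)-\tau)(|X|-1)$ (and $g(X) = 0$ for $|X|=1$). So it suffices to show $D$ maximizes $g$. Since $\rho(D) \ge \rho(V) > 0$ and singletons have skew-density $0$, we have $|D| \ge 2$ and $g(D) = (\rho(D)-\tau)(|D|-1) > 0$. If $\rho(X) \le \tau$ then $g(X) \le 0 < g(D)$. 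If $\rho(X) > \tau$, then $0 \le \rho(D) - \rho(X) < \rho(D) - \tau < n^{-3}$; but $\rho(X)$ and $\rho(D)$ are rationals with denominators in $\{1,\dots,n-1\}$, and distinct such numbers differ by at least $\frac{1}{(n-1)^2} > n^{-3}$, forcing $\rho(X) = \rho(D)$. Then $X$ is itself skew-densest, so $|X| \le |D|$ by maximality of $D$, and since $\rho(D) - \tau > 0$ we get $g(X) = (\rho(D)-\tau)(|X|-1) \le g(D)$. Thus $g(X) \le g(D)$ in all cases, so $D$ is a minimum $t$-cut of $\tilde H$; chasing the equality cases further shows the minimum $t$-cuts of $\tilde H$ are precisely the maximum skew-densest sets of $G$, so uniqueness of the $t$-mincut matches uniqueness of $D$.

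The main obstacle is the rational-gap step: "$D$ is densest" alone does not suffice, and one genuinely needs that no vertex set has skew-density in the open interval $(\tau, \rho(D))$ --- which holds because distinct skew-densities are rationals separated by more than $n^{-3}$, the chosen binary-search precision, given integer capacities. The other point that is easy to miss is that one must verify the $c(E)$-max-flow hypothesis of \Cref{lem:goldberg-modified-cut} before invoking it, which is the sole reason the lower bound on $\tau$ is needed.
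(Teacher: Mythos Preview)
Your proof is correct and follows essentially the same approach as the paper's: first verify the $s$-$t$ max flow on $H$ equals $c(E)$ (using integrality and the lower bound on $\tau$) so that \Cref{lem:goldberg-modified-cut} applies, then rewrite $d^+_{\tilde H}(X) = \tau + (\tau-\rho(X))(|X|-1)$ and use the rational-gap argument to conclude that any $X$ with $\rho(X)>\tau$ is skew-densest, whence $D$ minimizes the cut by maximality of $|D|$. The only cosmetic differences are your introduction of the auxiliary function $g(X)$ and your slightly sharper gap bound $\frac{1}{(n-1)^2}$ versus the paper's $n^{-2}$.
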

\begin{proof}
    Let $S=\{s\}\cup U_1\cup E[U_1]$ be the $s$-$t$ min cut in $H$.
    We claim that $U_1=\emptyset$. Assume for contradiction that $U_1$ is nonempty. Then by (\ref{eq:density-def-mult}),
    \[c(E[U_1]) = \rho(U_1)(|U_1|-1) \le \rho(D)(|U_1|-1)\]
    Combined with \Cref{fact:goldberg-cut-value},
    \begin{align*}
        d^+_H(S) &= c(E) - c(E[U_1]) + \tau |U_1|\\
        &\ge c(E) - \rho(D)(|U_1|-1) + (\rho(D)-n^{-3})|U_1|\\
        &\ge c(E)+\rho(D) - n^{-2}
    \end{align*}
    Since edge capacities are integral, the total edge capacity in $G$ is $\ge 1$, so $\rho(D)\ge \rho(V) \ge \frac{1}{n} > n^{-2}$.
    But $\partial^+_H(\{s\})$ has smaller cut value $c(E)$, which contradicts the assumption that $S$ is a $s$-$t$ min cut. We conclude that the $s$-$t$ mincut in $H$ is $\{s\}$, which has value $c(E)$.

    Since the $s$-$t$ max flow in $H$ also has value $c(E)$, we can apply \Cref{lem:goldberg-modified-cut} to obtain, for any nonempty $U\subseteq V$,
    \begin{equation}\label{eq:residual-cut-to-density}
        d^+_{\tilde{H}}(U) = \tau |U|-c(E[U])
    = \tau |U|-\rho(U)(|U|-1)=\tau + (\tau-\rho(U))(|U|-1)
    \end{equation}
    In particular, $d^+_{\tilde{H}}(D) < \tau$ because $\rho(D)>\tau$ and $|D|\ge 2$.

Recall that we need to show that $D$ is the $t$-mincut of $\tilde{H}$. For any set $U\subseteq V$ with $\rho(U) < \tau$ or $|U|=1$, we have $d^+_{\tilde{H}}(U) \ge \tau$ by (\ref{eq:residual-cut-to-density}), so $U$ is not the $t$-mincut.
    Next, we consider the remaining sets $U\subseteq V$ which satisfy $\rho(U)\ge \tau$ and $|U|\ge 2$. Because skew-density is in a set of fractional numbers with denominators in $[1, n-1]$, the gap between two different values of skew-density is at least $n^{-2}$.
    So, there cannot be another skew-density between $\rho(D)-n^{-3}$ and $\rho(D)$. It follows that $\rho(U)\ge \rho(D)$, and $U$ is a skew-densest set. Among the skew-densest sets $U$ with $\rho(U)=\rho(D)>\tau$, $d^+_{\tilde{H}}(U)$ decreases as $|U|$ increases by (\ref{eq:residual-cut-to-density}). So, $d^+_{\tilde{H}}(U)$ is minimized by the maximum skew-densest set, which is exactly $D$.
\end{proof}

We are now ready to prove \Cref{lem:alg-star}.

\FindStar*

\begin{proof}
    By \Cref{lem:success-iff-tau-less-rhoD}, the binary search succeeds if and only if $\tau < \rho(D)$. The algorithm finds the maximum successful $\tau^*$ up to a precision  of $n^{-3}$, so $\rho(D)-n^{-3} < \tau^* < \rho(D)$. By \Cref{lem:alg-find-densest-set}, for $\tilde{H}$ constructed from $(G, \tau^*)$, the $t$-mincut is $D$. The algorithm finds such a $t$-mincut by calling \Cref{lem:size-bounded-mincut}.

    The algorithm runs $O(\log (nC))$ iterations of binary search because the total edge capacity is at most $mC$ and the precision is $n^{-3}$. In each iteration, steps 1 and 3 take near-linear time to construct $H$ and $\tilde{H}$. Step 2 runs a max flow on $H$ with $O(m)$ vertices and edges, which takes $F(m, m)$ time. Step 4 takes $\tO(mk+F(n, m))$ time according to \Cref{lem:alg-sparsify}. Finally, after the binary search terminates, the algorithm calls  \Cref{lem:size-bounded-mincut} which takes $\tO(mk+F(n, m))$ time. In conclusion the running time of an iteration is $\tO(mk+F(m, m))$.
    The total running time is $\tO(mk\log C+F(m, m)\log C)$.
\end{proof}

\subsection{The {\sc Verify-Core} Subroutine}\label{sec:verify-star}
In this section, given $S\subseteq V$ and $k$ such that $|S|\le k$, we give an algorithm that can verify whether $S$ is a dense core in $O(k\cdot F(m, m))$ time.

The algorithm verifies whether $S$ is denser than its subsets using a network $\tilde{H}_1$ constructed from the induced subgraph $G[S]$, and verifies whether $S$ is denser than its supersets using a network $H_2'$ constructed from the contracted graph $G/S$.

\begin{algorithm}
\caption{{\sc Verify-Core}($G, k, S$)}
\label{alg:verify}
\If{$|S| > k$}{\Return False}
Construct $H_1$ to be Goldberg's network on $G[S]$ with parameter $\tau = \rho(S)$.\\
Construct $\tilde{H}_1$ to be the modified network of $H_1$.\\
Construct $H_2$ to be Goldberg's network on $G/S$ with parameter $\tau = \rho(S)$.\\
Construct $H_2'$ by add an infinite-capacity edge from $s$ to (contracted) $S$ on $H_2$.\\
\Return (the $s$-$t$ max flow value in $H_1$ $= c(E[S])$) ~ AND ~ (the $t$-mincut value in $\tilde{H_1}$ $ \ge \rho(S)$) ~ AND ~ (the $s$-$t$ max flow value in $H_2'$ $> |E[V/S]| + \rho(S)$)
\end{algorithm}

\begin{lemma}\label{lem:verify-correct-subsets}
    Let $H_1$ be Goldberg's network of $G[S]$ and $\rho(S)$. Let $\tilde{H}_1$ is the modified network of $H_1$.
    Then, $\rho(W) \le \rho(S)$ for all $W\subseteq S$ if and only if the $s$-$t$ max flow value in $H_1$ is $c(E[S])$, and the $t$-mincut in $\tilde{H_1}$ is at least $\rho(S)$.
\end{lemma}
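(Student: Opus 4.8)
The plan is to characterize, via Goldberg's network and its variant, exactly when $S$ is denser than all of its subsets. Recall from \Cref{lem:goldberg-mincut} that the $s$-$t$ min cut of $H_1$ (Goldberg's network on $G[S]$ with parameter $\tau = \rho(S)$) has value $c(E[S]) - \max_{X \subseteq S}\bigl(c(E[X]) - \tau|X|\bigr)$. So the $s$-$t$ max flow in $H_1$ equals $c(E[S])$ if and only if $\max_{X\subseteq S}\bigl(c(E[X]) - \rho(S)|X|\bigr) \le 0$, i.e., $c(E[X]) \le \rho(S)|X|$ for all $X \subseteq S$. This is a slightly stronger-looking condition than $\rho(W) \le \rho(S)$ for all $W \subseteq S$ (which by \eqref{eq:density-def-mult} says $c(E[W]) \le \rho(S)(|W|-1)$ for all $|W| \ge 2$), since the min-cut condition involves $|X|$ rather than $|X|-1$. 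The discrepancy is exactly the set $S$ itself: $c(E[S]) = \rho(S)(|S|-1) > \rho(S)\cdot 0$ is fine, but we also need to rule out $c(E[X]) = \rho(S)|X|$ for proper subsets. This is where the second condition comes in.

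Here is the intended logic. First I would show that if $\rho(W)\le \rho(S)$ for all $W\subseteq S$, then the max flow in $H_1$ equals $c(E[S])$: for any $X\subseteq S$ with $|X|\ge 2$ we have $c(E[X]) = \rho(X)(|X|-1) \le \rho(S)(|X|-1) < \rho(S)|X|$ (using $\rho(S)>0$, which holds since $G[S]$ is nonempty — $|S|\ge 2$ because otherwise $S$ is trivially never a dense core, or we handle $|S|\le 1$ separately), and for $|X|\le 1$ the quantity $c(E[X])-\rho(S)|X|$ is $0$ or $-\rho(S)<0$; hence the maximum is $0$, attained at $X=\emptyset$, and the min cut is $\{s\}$ with value $c(E[S])$. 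Since the max flow in $H_1$ is then $c(E[S]) = \sum_{e\in E[S]}c(e)$, the hypothesis of \Cref{lem:goldberg-modified-cut} is met for $\tilde{H}_1$, giving $d^+_{\tilde{H}_1}(U) = \rho(S)|U| - c(E[U])$ for every $U\subseteq S$. For nonempty $U$ this equals $\rho(S) + (\rho(S)-\rho(U))(|U|-1)$ by \eqref{eq:density-def-mult}, which is $\ge \rho(S)$ exactly because $\rho(U)\le\rho(S)$; for $U=\emptyset$ the cut value is $0$, but the $t$-mincut source side is always nonempty (it must separate from $t$ a nonempty set, or one can argue the empty cut is degenerate). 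So the $t$-mincut of $\tilde{H}_1$ is $\ge \rho(S)$, giving the forward direction.

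For the converse, suppose the max flow in $H_1$ is $c(E[S])$ and the $t$-mincut of $\tilde{H}_1$ is $\ge \rho(S)$. From the max-flow condition and \Cref{lem:goldberg-mincut} we get $c(E[X]) \le \rho(S)|X|$ for all $X\subseteq S$, which already handles any $W$ with $|W|\ge 1$ satisfying $\rho(W) = c(E[W])/(|W|-1) \le \rho(S)|W|/(|W|-1)$ — not quite what we want for small-density-gap cases. The cleaner route: since the max flow in $H_1$ is $c(E[S])$, \Cref{lem:goldberg-modified-cut} applies to $\tilde{H}_1$ and gives $d^+_{\tilde{H}_1}(W) = \rho(S)|W| - c(E[W]) = \rho(S) + (\rho(S)-\rho(W))(|W|-1)$ for every nonempty $W\subseteq S$ (using \eqref{eq:density-def-mult}). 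The $t$-mincut condition says this is $\ge \rho(S)$ for all such $W$, i.e., $(\rho(S)-\rho(W))(|W|-1)\ge 0$, which for $|W|\ge 2$ forces $\rho(W)\le\rho(S)$, and for $|W|\le 1$ the claim $\rho(W)=0\le\rho(S)$ is automatic. That completes the equivalence.

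The main obstacle I anticipate is handling the boundary/degenerate cases cleanly: (i) ensuring $\rho(S)>0$ so that the strict inequalities go through, which needs $|S|\ge 2$ and $G[S]$ to have at least one edge (and one should note that if $G[S]$ is edgeless or a single vertex, $S$ cannot be a dense core in any reasonable sense, or the networks degenerate); (ii) the empty set $U=\emptyset$ in $\tilde{H}_1$, where the "cut value" is $0 < \rho(S)$ — one must either argue that a $t$-mincut always has nonempty source side, or note that the $t$-mincut is by convention taken over nonempty sets, consistent with how \Cref{lem:size-bounded-mincut} and the surrounding development treat $t$-cuts. Both are routine once stated, but they are the places where the proof could silently go wrong, so I would be careful to address them explicitly.
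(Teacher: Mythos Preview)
Your proposal is correct and follows essentially the same route as the paper: both directions hinge on \Cref{lem:goldberg-modified-cut}, which (once the max flow in $H_1$ equals $c(E[S])$) gives $d^+_{\tilde H_1}(W)=\rho(S)|W|-c(E[W])$, so that the $t$-mincut bound $\ge\rho(S)$ is equivalent to $c(E[W])\le\rho(S)(|W|-1)$ for all nonempty $W\subseteq S$. Your boundary worries are non-issues in the paper's framing---non-strict inequalities suffice (so $\rho(S)>0$ is not needed), and $t$-cuts are over nonempty source sides by convention.
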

\begin{proof}
   Let $\tau=\rho(S)$. The condition for $W=\emptyset$ is trivial, so we only consider nonempty $W$. Note that $\rho(W)\le \rho(S) \iff c(E[W]) \le \tau(|W|-1)$, so the condition is equivalent to
   \begin{equation}\label{eq:denser-than-subsets}
       \forall W\subseteq S\text{ s.t.\ }W\ne\emptyset,\, c(E[W]) \le \tau(|W|-1)
   \end{equation}

    (Sufficiency) Assume the $s$-$t$ max flow value in $H_1$ is $c(E[S])$ and the $t$-mincut in $\tilde{H}_1$ is $\ge \tau$. By \Cref{lem:goldberg-modified-cut}, for any nonempty $W\subseteq S$,
    \[d^+_{\tilde{H}_1}(W) =  \tau|W|-c(E[W])\ge \tau\]
    which implies (\ref{eq:denser-than-subsets}).

    (Necessity) Assume (\ref{eq:denser-than-subsets}) holds.
    By \Cref{fact:goldberg-cut-value}, for any nonempty $W\subseteq S$,
    \[d^+_{H_1}(\{s\}\cup W\cup E[W]) =  c(E)-c(E[W]) + \tau|W|
    \ge c(E)+\tau\]
     where the last inequality uses (\ref{eq:denser-than-subsets}). Then by \Cref{lem:goldberg-mincut}, the $s$-$t$ min cut in $H_1$ is $\partial^+\{s\}$, which has value $c(E[S])$.



    Now we can apply \Cref{lem:goldberg-modified-cut}. For any nonempty $W\subseteq S$,
    \[d^+_{\tilde{H}_1}(W) =  \tau|W|-c(E[W])\ge \tau\]
     where the last inequality uses (\ref{eq:denser-than-subsets}). So, the $t$-mincut in $\tilde{H}_1$ is $\ge \tau$.

\end{proof}

\begin{lemma}\label{lem:verify-correct-supsets}
    Let $H_2$ be Goldberg's network of $G[V/S]$ and $\rho(S)$. Construct $H_2'$ by adding an infinite-capacity edge from $s$ to (the contracted vertex) $S$ in $H_2$.
    Then, $\rho(U) < \rho(S)$ for all $U\supsetneqq S$ if and only if the $s$-$t$ max flow value in $H_2'$ is $> |E[V/S]| + \rho(S)$.
\end{lemma}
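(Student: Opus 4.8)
The plan is to translate the superset condition on $S$ into a statement about the minimum $s$-$t$ cut of $H_2'$, paralleling the proof of \Cref{lem:verify-correct-subsets}. Throughout write $G' = G/S$, let $E' = E[V/S]$ be its edge set, let $\bar s$ be the image of the contracted set $S$ as a node of $G'$, and set $\tau = \rho(S)$. If $S = V$ there are no proper supersets and $G'$ is a single vertex, so the claim is vacuous; assume $S \subsetneqq V$.

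The first step is a dictionary between proper supersets of $S$ in $G$ and $\bar s$-containing sets in $G'$: the map $U \mapsto U/S$ is a bijection from $\{U : U \supsetneqq S\}$ onto $\{X \subseteq V/S : \bar s \in X,\ |X| \ge 2\}$. Contracting $S$ deletes exactly the edges of $E_G[S]$, so $c(E'[U/S]) = c(E_G[U]) - c(E_G[S])$ and $|U/S| - 1 = |U| - |S|$. Combining this with the normalization $c(E_G[S]) = \rho(S)(|S|-1)$ from \eqref{eq:density-def-mult}, a short computation gives, for $X = U/S$,
\[
  \rho(U) < \rho(S) \iff c(E_G[U]) < \tau\,(|U|-1) \iff c(E'[X]) < \tau\,(|X|-1) \iff c(E'[X]) - \tau|X| < -\tau .
\]
Hence ``$\rho(U) < \rho(S)$ for every $U \supsetneqq S$'' is equivalent to ``$c(E'[X]) - \tau|X| < -\tau$ for every $X \subseteq V/S$ with $\bar s \in X$, $|X| \ge 2$.'' Since $X = \{\bar s\}$ attains $c(E'[X]) - \tau|X| = -\tau$, this says precisely that $\{\bar s\}$ is the \emph{unique} maximizer of $X \mapsto c(E'[X]) - \tau|X|$ over $\{X \subseteq V/S : \bar s \in X\}$, with maximum value $-\tau$.

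Next I would feed this into the cut characterization of Goldberg's network. Since $H_2'$ is $H_2$ with $\bar s$ forced to the source side, \Cref{lem:goldberg-mincut-rooted} (applied with root $\bar s$) together with \Cref{fact:goldberg-cut-value} shows that a minimum $s$-$t$ cut of $H_2'$ has source side $\{s\} \cup X \cup E'[X]$ with $X = \arg\max_{\bar s \in X'}(c(E'[X']) - \tau|X'|)$, and that its value — the $s$-$t$ max-flow value of $H_2'$ — equals $c(E') - \max_{\bar s \in X'}(c(E'[X']) - \tau|X'|)$. Taking $X' = \{\bar s\}$ exhibits a cut $\{s,\bar s\}$ of value $c(E') + \tau$, so the max-flow value is at most $c(E') + \tau$, with equality iff $\rho(U) \le \rho(S)$ for all $U \supsetneqq S$; and by the previous paragraph the full (strict) statement of the lemma holds iff in addition this maximum is attained only at $\{\bar s\}$, i.e.\ iff $\{s,\bar s\}$ is the \emph{only} minimum $s$-$t$ cut. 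This is the asserted equivalence, the threshold in the lemma being exactly the value $c(E')+\tau$ of the trivial cut $\{s,\bar s\}$.

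The delicate point, and the main obstacle, is distinguishing ``strictly less dense'' from ``at most as dense'': a proper superset of $S$ of density \emph{exactly} $\rho(S)$ produces a second minimum $s$-$t$ cut of the same value $c(E')+\tau$, so the max-flow value alone does not see it. I would handle this via submodularity of $X \mapsto \tau|X| - c(E'[X])$ (the modular $\tau|X|$ minus the supermodular $c(E'[\cdot])$): its minimizers among sets containing $\bar s$ form a lattice with a unique maximal element, which is exactly the source side returned by the min-cut routine of \Cref{lem:goldberg-mincut-rooted}; so verifying the strict condition reduces to checking that this canonical source side equals $\{s,\bar s\}$, which the single max-flow computation on $H_2'$ supplies along with the cut it certifies. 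I expect the remaining bookkeeping — confirming that the algorithm's comparison implements exactly this test, that the $H_2'$ part costs one max-flow call on a digraph with $O(m)$ nodes and edges (\Cref{fact:H-size}), and combining with the $H_1$/$\tilde{H}_1$ analysis of \Cref{lem:verify-correct-subsets} — to be routine, and to yield \Cref{lem:alg-verify}.
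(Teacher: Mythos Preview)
Your translation of the superset condition into the min-cut structure of $H_2'$ is exactly the paper's route: set up the bijection $U\mapsto U/S$, rewrite $\rho(U)<\rho(S)$ as $c(E'[X])-\tau|X|<-\tau$, and read off the min-cut via \Cref{lem:goldberg-mincut-rooted} and \Cref{fact:goldberg-cut-value}. So the core argument matches.

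More importantly, the ``delicate point'' you flag is real, and it is not a gap in your proof but in the lemma itself. The cut with source side $\{s,\bar s\}$ always has value $c(E')+\tau$, so the $s$-$t$ max-flow in $H_2'$ is \emph{never} strictly greater than $c(E')+\tau$; the stated equivalence with a strict ``$>$'' is therefore false in the necessity direction. The paper's own proof makes precisely the slip you anticipate: in the necessity direction it takes the minimizing $X\ni\bar s$ and applies the hypothesis \eqref{eq:denser-than-supsets}, which is only available for $U\supsetneqq S$, without excluding the case $X=\{\bar s\}$ (i.e.\ $U=S$), where the cut value is exactly $c(E')+\tau$. Your proposed repair---check that the \emph{maximal} min-cut source side returned by the flow equals $\{s,\bar s\}$, using that the minimizers form a lattice---is a correct way to salvage both the lemma and the test in \Cref{alg:verify}. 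An equivalent fix is to replace ``$>$'' by ``$\ge$'' (which then characterizes $\rho(U)\le\rho(S)$ for all $U\supsetneqq S$) together with a check that the maximal minimizer is the singleton.
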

\begin{proof}
We first show that for $U\supsetneqq S$, $\rho(U)<\rho(S)\iff |E[U/S]| < \rho(S)(|U|-|S|)$.

Assume $\rho(U)<\rho(S)$. Then $\rho(S)>0$ and $|U|>|S|\ge 2$. By (\ref{eq:density-def-mult}),
\[|E[U]|=\rho(U)(|U|-1) < \rho(S)(|U|-1)\]
\[|E[U/S]| = |E[U]|-|E[S]| < \rho(S)(|U|-1)-\rho(S)(|S|-1) = \rho(S)(|U|-|S|)\]

Assume $|E[U/S]| < \rho(S)(|U|-|S|)$. Then $\rho(S)>0$ and $|U|>|S|\ge 2$. By (\ref{eq:density-def-mult}),
\[|E[U]|=|E[S]|+|E[U/S]| < \rho(S)(|S|-1)+\rho(S)(|U|-|S|) = \rho(S)(|U|-1)\]
Combined with $|E[U]|=\rho(U)(|U|-1)$, we have $\rho(U)<\rho(S)$.

Let $\tau=\rho(S)$. Now, the condition is equivalent to
\begin{equation}\label{eq:denser-than-supsets}
    \forall U\supsetneqq S,\, |E[U/S]| < \tau(|U|-|S|)
\end{equation}

(Sufficiency) Assume the min cut value in $H_2'$ is $> |E[V/S]| + \tau$. For any $U\supsetneqq S$, let $X=U/S$. By \Cref{fact:goldberg-cut-value},
\[d^+_{H_2'}(\{s\}\cup X\cup E[X]) = |E[V/S]|-|E[X]| + \tau|X| = |E[V/S]| -|E[U/S]|+\tau(|U|-|S|+1)\]
Combined with the assumption that $d^+_{H_2'}(\{s\}\cup X\cup E[X]) \ge($min cut value in $H_2') > |E[V/S]| + \tau$, we have $|E[U/S]| < \tau(|U|-|S|)$.

(Necessity) Assume (\ref{eq:denser-than-supsets}) holds.
By \Cref{lem:goldberg-mincut-rooted}, the $s$-$t$ min cut on $H_2'$ has source side $\{s\}\cup X\cup E[X]$ for some $X$ that contains the contracted $S$. Expand the contracted $S$ in $X$ to form $U$.

By \Cref{fact:goldberg-cut-value}, the min cut value is
\[|E[V/S]|-|E[X]| + \tau|X| = |E[V/S]| -|E[U/S]|+\tau(|U|-|S|+1) > |E[V/S]| + \tau\]
where the last inequality uses (\ref{eq:denser-than-supsets}).
\end{proof}

We are now ready to prove \Cref{lem:alg-verify}.

\VerifyCore*

\begin{proof}
    The correctness follows from \Cref{lem:verify-correct-subsets,lem:verify-correct-supsets}.

    Next, we analyze the running time. Constructing $H_1, H_2$ and $H_2'$ takes $O(m)$ time. Constructing $\tilde{H}_1$ calls a max flow on $H_1$ and takes $O(F(m, m))$ time. Computing $s$-$t$ max flow in $H_1$ and $H_2$ takes $O(F(m, m))$ time. Finally, to compute $t$-mincut in $\tilde{H}_1$, we iterate over all $s\in S$, and take the minimum among all $s$-$t$ mincuts. Because $|S|\le k$, the running time is $O(k\cdot F(n, m))$. In conclusion the total running time is $O(k\cdot F(m, m))$.
\end{proof}

\section{A Faster Algorithm for Arboricity}\label{sec:arboricity-algoritm}

\newcommand{\tGamma}{\tilde{\Gamma}}

In this section, we give a faster algorithm for computing the arboricity $\Gamma(G)$ of a graph $G$, which builds on ideas presented in the previous section. Nash-Williams~\cite{nash-williams} showed  that arboricity is equal to the integer ceiling of the maximum skew-density of any induced subgraph in $G$. We denote this maximum skew-density $\tGamma(G)$, i.e., $\Gamma(G) = \lceil \tGamma(G)\rceil$.
We will show that a variant of \Cref{alg:find} computes $\Gamma(G)$ in $\tO(1)$ calls to two subroutines -- directed global min-cut and max-flow -- and $\tO(m)$ additional time outside these oracle calls. We restate the formal theorem below.



\arboricityAlg*

In the directed global min-cut problem, the goal is to find a set of vertices $S\subset V$ that minimizes the value of the in-cut, i.e., the sum of capacities of directed edges from $V\setminus S$ to $S$. Instead of reducing arboricity to the directed global min-cut problem, it is easier for exposition to reduce arboricity to the related problem of finding a $\bar{t}$-minimum cut. Formally, a $\bar{t}$-minimum cut in a directed graph $G=(V\cup\{t\},E,c)$ is a subset $S\subset V$ that minimizes the sum of edge capacities going from $V\cup\{t\}\setminus S$ to $S$. That is, a $\bar{t}$-minimum cut is a minimum cut in the graph under the restriction that the vertex subset defining the cut cannot contain $t$. The next lemma makes the simple observation that this is essentially equivalent to the directed global min-cut problem:
\begin{fact}\label{fact:t-mincut-to-mincut}
    Finding a $\bar{t}$-minimum cut requires one call to a directed global min-cut oracle. Conversely, finding a directed global min-cut requires two calls to a $\bar{t}$-minimum cut oracle.
\end{fact}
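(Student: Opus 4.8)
The plan is to treat the two directions separately, each via a one-line gadget that pins the distinguished vertex to the correct side of the cut. Throughout, for a vertex set $S$ in a digraph write $\partial^- S$ for the set of arcs entering $S$ (so $\partial^- S=\partial^+(V\setminus S)$) and $d^-(S)$ for its total capacity; recall that the directed global min-cut problem asks for a nonempty proper $S$ minimizing $d^-(S)$, while a $\bar t$-minimum cut asks for a nonempty $S\subseteq V$ (equivalently, a nonempty proper $S$ with $t\notin S$) minimizing $d^-(S)$.

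\textbf{$\bar t$-mincut via one global min-cut call.} Given $G=(V\cup\{t\},E,c)$, I would form $G'$ by adding an arc $(v,t)$ of capacity $\Lambda:=c(E)+1$ for every $v\in V$ (this simulates $\infty$ inside the integer-weight model, as elsewhere in the paper). The two observations to record are: (i) for any $S$ with $t\notin S$, none of the new arcs enters $S$, so $d^-_{G'}(S)=d^-_G(S)$; in particular every singleton $\{v\}$ with $v\in V$ still has value at most $c(E)<\Lambda$; and (ii) for any nonempty proper $S$ with $t\in S$ there is some $v\notin S$, and the arc $(v,t)$ enters $S$, so $d^-_{G'}(S)\ge\Lambda$. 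Hence a directed global min-cut of $G'$ is attained by some $S$ with $t\notin S$, and such an $S$ is exactly a $\bar t$-minimum cut of $G$ with the same value; one oracle call suffices.

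\textbf{Global min-cut via two $\bar t$-mincut calls.} Given a digraph $G=(V,E,c)$, I would fix an arbitrary $r\in V$ and split the candidate cuts $\emptyset\ne S\subsetneq V$ according to whether $r\in S$. If $r\notin S$, then $S$ ranges exactly over the feasible sides of a $\bar r$-minimum cut of $G$ (treating $r$ as the forbidden vertex), so one oracle call returns the cheapest such $S$. If $r\in S$, let $G^R$ be $G$ with every arc reversed; then $d^-_G(S)=d^-_{G^R}(V\setminus S)$, and $V\setminus S$ is nonempty and avoids $r$, so one $\bar r$-minimum cut call on $G^R$ returns the cheapest such $S$ after complementing. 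Every candidate cut lies in one of the two cases, so returning the cheaper of the two answers yields a directed global min-cut; this is two oracle calls.

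\textbf{Main obstacle.} There is essentially no obstacle here; the only two points needing care are (i) in the first direction, orienting the gadget arcs \emph{into} $t$ rather than out of it, so that cuts already having $t$ on the correct side keep their value, and (ii) in the second direction, remembering that in a digraph $d^-(S)\ne d^-(V\setminus S)$ in general, which is precisely why the reversal $G\mapsto G^R$ is needed to capture the case $r\in S$. Both become immediate once written down.
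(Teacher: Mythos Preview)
Your proof is correct and follows essentially the same approach as the paper: for the first direction you add high-capacity arcs $(v,t)$ so that any global in-mincut must avoid $t$, and for the second direction you fix an arbitrary vertex, call the $\bar t$-mincut oracle once on $G$ and once on the arc-reversed graph, and take the cheaper. The only difference is cosmetic---you spell out the capacity bound $\Lambda=c(E)+1$ and the identity $d^-_G(S)=d^-_{G^R}(V\setminus S)$ explicitly, whereas the paper leaves these implicit.
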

\begin{proof}
    To find a $\bar{t}$-minimum cut, take the graph $G=(V\cup\{t\},E,c)$ and add edges $(v,t)$ for each $v\in V$ with capacities $c(v,t)=\infty$. (We can emulate an edge of infinite capacity by setting its capacity larger than the total capacity of the graph.) Then any minimum (in-)cut $S$ must not include the vertex $t$, i.e., $S\subseteq V$, which corresponds exactly to the $\bar{t}$-minimum cut. To find a directed global min cut in $G=(V,E,c)$, choose an arbitrary vertex $t\in V$ and compute the $\bar{t}$-minimum cut in $G$ and in the graph where every edge direction is reversed, and return the smaller of these two cuts. Since $t$ is either on the side of the minimum cut or it is not, this finds the global minimum cut.
\end{proof} 

We now give our algorithm for computing arboricity in \Cref{alg:arboricity}. There is an outer binary search procedure, where in each step, the guessed value of arboricity is denoted $\tau$. To test if arboricity exceeds $\tau$ or is smaller than it, we use two subroutine calls -- maximum flow on Goldberg's network with parameter $\tau$ and $\bar{t}$-minimum cut on the modified version of Goldberg's network from the previous section.

\begin{algorithm}
\caption{{\sc Compute-Arboricity}($G$)}
\label{alg:arboricity}
$\tau_L \gets 0, \tau_R\gets c(E)$\\
\While{$\tau_R-\tau_L \ge n^{-3}$}{
$\tau\gets \frac{\tau_L+\tau_R}{2}$\\
Construct Goldberg's network $H$ from $G$ with parameter $\tau$\label{line:alg-arboricity-build1}\\
\eIf{the $s$-$t$ max flow value on $H$ $< c(E)$\label{line:alg-arboricity-cond1}}{
$\tau_L = \tau$ \label{line:alg-arboricity-case1}
}{
Construct the modified network $\tilde{H}$ from $H$\label{line:alg-arboricity-build2}\\
\eIf{$\bar{t}$-minimum cut of $\tilde{H}<\tau$\label{line:alg-arboricity-cond2}}{$\tau_L = \tau$\label{line:alg-arboricity-case2}
}{$\tau_R = \tau$\label{line:alg-arboricity-case3}
}
}
}
Output $\lceil \tau_L\rceil$.
\end{algorithm} 

In the next lemma, we argue correctness of a single iteration of the binary search procedure in \Cref{alg:arboricity}:

\begin{lemma}\label{lem:success}
    The binary search procedure on input $\tau$ sets $\tau_L=\tau$ if and only if $\tau < \tGamma(G)$. 
\end{lemma}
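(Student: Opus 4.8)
The plan is to reuse the structure of \Cref{lem:success-iff-tau-less-rhoD}, with the simplification that \Cref{alg:arboricity} computes an \emph{exact} $\bar t$-minimum cut of $\tilde{H}$ (namely $\min_{\emptyset\ne X\subseteq V} d^+_{\tilde{H}}(X)$) rather than the randomized size-bounded cut used in \Cref{alg:find}, so the argument is deterministic and carries no size parameter. Fix the value $\tau$ of the current iteration and let $D$ be a skew-densest set of $G$, so $\rho(D)=\tGamma(G)$ and $|D|\ge 2$ (the lemma is vacuous unless $c(E)>0$, in which case $\rho(D)\ge\rho(V)>0$). The procedure sets $\tau_L=\tau$ in exactly two situations: \textbf{(a)} the $s$-$t$ max flow on $H$ has value $<c(E)$; or \textbf{(b)} the $s$-$t$ max flow on $H$ equals $c(E)$ and the $\bar t$-minimum cut of $\tilde{H}$ has value $<\tau$. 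So it suffices to show that ``(a) or (b)'' holds if and only if $\tau<\rho(D)$.

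For the ``only if'' direction I would exhibit in each case a vertex set of skew-density strictly larger than $\tau$, which gives $\rho(D)=\tGamma(G)>\tau$. In case (a): by \Cref{lem:goldberg-mincut} and \Cref{fact:goldberg-cut-value} the $s$-$t$ min cut of $H$ has source side $\{s\}\cup U_1\cup E[U_1]$ and value $c(E)-c(E[U_1])+\tau|U_1|$; this being $<c(E)$ forces $c(E[U_1])>\tau|U_1|\ge 0$, so $U_1$ spans an edge and $|U_1|\ge 2$, and by \eqref{eq:density-def-mult}, $\rho(U_1)(|U_1|-1)=c(E[U_1])>\tau|U_1|>\tau(|U_1|-1)$, i.e.\ $\rho(U_1)>\tau$. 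In case (b): since the max flow on $H$ equals $c(E)$, \Cref{lem:goldberg-modified-cut} applies, and the $\bar t$-minimum cut is realized by a nonempty $U_2\subseteq V$ with $d^+_{\tilde{H}}(U_2)=\tau|U_2|-c(E[U_2])<\tau$; hence $c(E[U_2])>\tau(|U_2|-1)\ge 0$, which rules out $|U_2|=1$ (a single vertex spans no edge), so $|U_2|\ge 2$ and $\rho(U_2)>\tau$ as before.

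For the ``if'' direction, assume $\tau<\rho(D)$. If the $s$-$t$ max flow on $H$ is $<c(E)$ then (a) holds. Otherwise it equals $c(E)$, so \Cref{lem:goldberg-modified-cut} applies to the source side $D$ and, using \eqref{eq:density-def-mult} and $|D|\ge 2$,
\[
d^+_{\tilde{H}}(D)=\tau|D|-c(E[D])=\tau|D|-\rho(D)(|D|-1)=\tau+(\tau-\rho(D))(|D|-1)<\tau ,
\]
since $\tau-\rho(D)<0$. Thus the $\bar t$-minimum cut of $\tilde{H}$ is at most $d^+_{\tilde{H}}(D)<\tau$, so (b) holds. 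Combining the two directions proves the equivalence.

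I expect no real obstacle here: the structural content is entirely supplied by \Cref{lem:goldberg-mincut} and \Cref{lem:goldberg-modified-cut}, and the only care needed is bookkeeping --- correctly identifying ``setting $\tau_L=\tau$'' with the disjunction (a)$\lor$(b), and checking in each branch that the witnessing set is non-singleton so that dividing by $|U|-1$ is legitimate and yields a genuine skew-density bounded above by $\tGamma(G)$.
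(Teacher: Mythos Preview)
Your proposal is correct and follows essentially the same approach as the paper: both split into the two cases (max flow $<c(E)$ versus $\bar t$-mincut $<\tau$), invoke \Cref{lem:goldberg-mincut} and \Cref{fact:goldberg-cut-value} in the first case and \Cref{lem:goldberg-modified-cut} in the second, and for the converse use the skew-densest set $D$ together with the identity $d^+_{\tilde H}(D)=\tau+(\tau-\rho(D))(|D|-1)$. Your treatment is slightly more explicit about verifying $|U_1|,|U_2|\ge 2$ before dividing by $|U|-1$, but the argument is otherwise identical.
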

\begin{proof}
    If the algorithm sets $\tau_L=\tau$, there are two cases: Line \ref{line:alg-arboricity-case1} and Line \ref{line:alg-arboricity-case2}. In both cases, we show $\tau < \rho(U)$ for some $U$, which implies $\tau< \tGamma(G)$ because $\tGamma(G)$ is the value of the skew-densest set.

    Suppose the algorithm sets $\tau_L=\tau$ on Line \ref{line:alg-arboricity-case1}. Let $S$ be the source side of the $s$-$t$ min cut in $H$. By \Cref{lem:goldberg-mincut} and \Cref{fact:goldberg-cut-value}, $S=\{s\}\cup U_1\cup E[U_1]$ for some $U_1\subseteq V$, and $d_H^+(S) = c(E)-c(E[U_1])+\tau|U_1|$. Since the \textbf{if} statement on Line \ref{line:alg-arboricity-cond1} succeeds, we have $d_H^+(S) < c(E)$ or equivalently $c(E[U_1]) -\tau|U_1| > 0$. Then we can apply (\ref{eq:density-def-mult}) to get
    \[\rho(U_1)(|U_1|-1) = c(E[U_1]) > \tau|U_1|>\tau(|U_1|-1)\]
    which implies $\tau < \rho(U_1)$.

    Suppose the algorithm sets $\tau_L=\tau$ in Line \ref{line:alg-arboricity-case2}. Since this didn't happen at Line \ref{line:alg-arboricity-case1}, the $s$-$t$ max flow on $H$ is $c(E)$. Let $U_2$ be the $\bar{t}$-minimum cut of $\tilde{H}$ found on Line \ref{line:alg-arboricity-cond2}. By \Cref{lem:goldberg-modified-cut}, $d^+_{\tilde{H}}(U_2) = \tau|U_2|-c(E[U_2])$. Since the \textbf{if} statement on Line \ref{line:alg-arboricity-cond2} succeeds, we have $d^+_{\tilde{H}}(U_2) < \tau$. Then we can apply (\ref{eq:density-def-mult}) to get
    \[\tau(|U_2|-1) < c(E[U_2]) = \rho(U_2)(|U_2|-1)\]
     which implies $\tau < \rho(U_2)$.

    Next, assume $\tau < \tGamma(G)$. We need to show that the algorithm sets $\tau_L=\tau$. If the algorithm does so at Line \ref{line:alg-arboricity-case1}, we are done. Assume it didn't, which means the $s$-$t$ max flow on $H$ is $c(E)$. Let $G[D]$ be the induced subgraph of maximum skew-density ($|D| \ge 2$). By Nash-Williams~\cite{nash-williams}, we have $\rho(D) = \tGamma(G) > \tau$. By \Cref{lem:goldberg-modified-cut}, \[d^+_{\tilde{H}}(D) = \tau|D|-c(E[D]) = \tau|D|-\rho(D)(|D|-1) = \tau + (\tau-\rho(D))(|D|-1)< \tau.\]
    Thus, there exists a $\bar{t}$-cut $D$ in $\tilde{H}$ with value $< \tau$, so the minimum $\bar{t}$-cut value of $\tilde{H}$ will be found to be $<\tau$ in Line \ref{line:alg-arboricity-cond2}, as desired.
\end{proof}

We now conclude the proof of \Cref{thm:arboricity-intro}:

\begin{proof}[Proof of \Cref{thm:arboricity-intro}.]
    By \Cref{lem:success}, each iteration of the binary search sets $\tau_L=\tau$ if and only if $\tau < \tGamma(G)$. Let the values of $\tau_L$ and $\tau_R$ in the final iteration of binary search be denoted $\tau_L^*$ and $\tau_R^*$; then, $\tau_R^* -\tau_L^* < n^{-3}$ and $\tGamma(G)\in [\tau_L^*, \tau_R^*]$. Note that $\tGamma(G)$ can be written as a fraction of two integers with denominator in the range $\{1, \ldots, n-1\}$. Any two distinct rational numbers that can be written in this form must differ by at least $n^{-2}$. It follows there is only one candidate value of $\tGamma(G)$ in the range $[\tau_L^*, \tau_R^*]$. Moroever, every integer is a candidate. Now, there are two cases. If $[\tau_L^*, \tau_R^*]$ contains an integer, then $\tGamma(G)$ must be equal to this integer since the latter is the unique candidate in this range. Therefore, $\Gamma(G) = \tGamma(G) = \lceil \tau_L^*\rceil$. In the other case, there is no integer in the range $[\tau_L^*, \tau_R^*]$. In this case, since $\tGamma(G)\in [\tau_L^*, \tau_R^*]$, we have that $\Gamma(G) = \lceil \tGamma(G) \rceil = \lceil \tau_L^* \rceil$. Therefore, $\Gamma(G) = \lceil \tau_L^*\rceil$ in both cases.

    Now, we bound the running time of \Cref{alg:arboricity}.
    Recall that we denote $F(m,n)$ to be the running time of max flow on a directed graph with $m$ edges and $n$ vertices. Similarly, denote $M(m,n)$ to be the running time of global min cut on a directed graph with $m$ edges and $n$ vertices.
    The algorithm runs $O(\log (nC))$ iterations of binary search because the total edge capacity is at most $n^2 C$ and the precision is $n^{-3}$. In each iteration, Lines \ref{line:alg-arboricity-build1} and \ref{line:alg-arboricity-build2} take $O(m  + F(m,n))$ time to construct $H$ and $\tilde{H}$, as shown in the proof of \Cref{lem:alg-verify}. Line \ref{line:alg-arboricity-cond1} runs a max flow on $H$ with $O(m)$ vertices and $O(m)$ edges by \Cref{fact:H-size}. Line \ref{line:alg-arboricity-cond2} can be computed by two calls to directed min-cut on graphs with $m$ edges and $n$ vertices according to \Cref{fact:t-mincut-to-mincut,fact:tilde-H-size}. In conclusion the running time of an iteration is $O(m + M(m,n)+F(m, m))$. The total running time is $O((m + M(m,n) + F(m, m))\log (nC))$.
\end{proof}

\section{Ideal Tree Packing and Fractional Spanning Tree}\label{sec:entropy}
In this section, we describe an application of the cut hierarchy to an ideal tree packing, and in turn, to the max-entropy fractional spanning tree.

\subsection{Ideal Load and Ideal Tree Packing}
Given a connected undirected graph $G$ with edge weights $c_e$, Thorup~\cite{Thorup01, Thorup08} defines the \emph{ideal load} as a function on edges. We restate the definition below based on the cut hierarchy.

For each node $p$ in the cut hierarchy, the node is associated with a vertex set $S_p$. Let $G_p = (V_p, E_p)$ be the graph formed from the induced subgraph $G[S_p]$ by contracting all vertex sets associated with children of $p$ in the cut hierarchy. By the definition of the cut hierarchy, we have that the min-ratio cut in $G_p$ is the all-singleton cut.
For each edge $e\in E$, we associate $e$ with the deepest node in the cut hierarchy whose induced subgraph contains $e$; we denote this node $p(e)$.

The ideal load is defined by
$$\ell(e) = \frac{c_e}{\sigma(G_{p(e)})} = \frac{c_e}{c(E_{p(e)}) / (|V_{p(e)}|-1)}.$$

The following lemma shows that the ideal load can be computed from the cut hierarchy in $O(m)$ time.
\begin{lemma}\label{lem:cut-hierarchy-to-ideal-load}
Given the cut hierarchy of an undirected graph $G$, the ideal load can be computed in $O(m)$ time.
\end{lemma}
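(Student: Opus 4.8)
The plan is to reduce the whole computation to a single batch of lowest-common-ancestor queries on the hierarchy tree. Represent the cut hierarchy as the rooted tree $T$ from \Cref{sec:prelim}: the root is $V$, the leaves are the singletons (identify the leaf of $v$ with the vertex $v$), and the children of an internal node $p$ are the sides of the maximal min-ratio cut of $G[S_p]$. Since the children of any node partition $S_p$, $T$ is a laminar tree on $n$ leaves in which every internal node has at least two children, so it has at most $2n-1$ nodes; in particular its size is $O(n)=O(m)$ as $G$ is connected.

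First I would identify $p(e)$ for every edge. For $e=(u,v)$, $p(e)$ is by definition the deepest node $q$ with $e\in E[S_q]$, i.e.\ with $u,v\in S_q$; since $u\in S_q$ iff $q$ is an ancestor of the leaf $u$, the deepest such $q$ is exactly $\lca_T(u,v)$. All $m$ of these queries can be answered in $O(m+n)=O(m)$ total time with any standard linear-preprocessing, constant-query LCA structure (e.g.\ Euler tour plus RMQ) or with Tarjan's offline LCA algorithm.

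Next I would compute, for each internal node $p$, the quantities $c(E_p)$ and $|V_p|$ entering $\sigma(G_p)=c(E_p)/(|V_p|-1)$. By definition $E_p$ is the edge set of $G_p$, namely the edges of $G[S_p]$ surviving contraction of the children of $p$; equivalently $e=(u,v)\in E_p$ iff $u,v\in S_p$ but $u,v$ lie in distinct children of $p$, iff $\lca_T(u,v)=p$, iff $p(e)=p$. Hence $\{E_p\}_{p\text{ internal}}$ is precisely the partition of $E$ according to the value $p(\cdot)$, so one pass over the edges---bucketing each $e$ under $p(e)$ and adding $c_e$ to a running total---yields all $c(E_p)$ in $O(m)$ time, while $|V_p|$ is just the number of children of $p$, read off $T$ in $O(n)$ time. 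Each internal node has at least two children, so $|V_p|-1\ge 1$ and $\sigma(G_p)$ is well defined; a final pass over the edges outputs $\ell(e)=c_e/\sigma(G_{p(e)})=c_e\,(|V_{p(e)}|-1)/c(E_{p(e)})$. Every step is $O(m)$ given $T$, proving the lemma.

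There is essentially no obstacle here: the only points that need genuine (but routine) verification are the two identifications $p(e)=\lca_T(u,v)$ and $E_p=\{e:p(e)=p\}$, both immediate from the definitions of induced subgraph, contraction, and the cut hierarchy; granting these, the $O(m)$ bound is just batched linear-time LCA plus two linear scans of the edge list. (We take the hierarchy to be given in the $O(n)$-size tree representation of \Cref{sec:prelim}; if it were instead supplied as an explicit list of the sets $S_p$, one would first build $T$ from that list in time linear in its total size, which then dominates the input size anyway.)
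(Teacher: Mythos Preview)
Your proof is correct and follows essentially the same approach as the paper: identify $p(e)$ as $\lca_T(u,v)$ via batched linear-time LCA, then bucket edge weights at $p(e)$ to obtain each $c(E_p)$, read off $|V_p|$ as the child count, and output $\ell(e)$. You supply more justification (the $O(n)$ bound on $|T|$, the explicit verification that $E_p=\{e:p(e)=p\}$), but the argument is the same.
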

\begin{proof}
   For each edge $e=(u, v)$, $p(e)$ is the LCA of $u$ and $v$ (as leaves of the cut hierarchy) in the cut hierarchy. We can use Tarjan's offline LCA to compute $p(e)$ for all $e\in E$ in $O(m)$ time.

    Note that $\sigma(G_p)$ is the ratio of total weight of edges in $G_p$ over the number of children of $p$ minus 1. By scanning all edges and accumulate $c_e$ to $p(e)$, we can obtain the total weight of edges in $G_p$ for all nodes $p$. In conclusion all $\sigma(G_p)$ and hence all $\ell(e)$ can be computed in $O(m)$ time.
\end{proof}

Importantly, the ideal loads on the edges define a fractional spanning tree~\cite{Thorup01}. So, there exists a convex combination of spanning trees whose average matches the ideal loads. We show that we can efficiently sample from such a distribution using the cut hierarchy.

\begin{lemma}
    Given the cut hierarchy of a connected undirected graph $G$, we can construct a data structure in $\tO(n^3 m)$ time and $O(nm)$ space from which we can sample in $O(n)$ time a spanning tree that belongs to a fixed ideal tree packing.
\end{lemma}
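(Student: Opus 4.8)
The plan is to build the data structure node by node along the cut hierarchy, and to produce a sample by choosing one ``local'' spanning tree at each node and taking their union. Fix the canonical cut hierarchy of $G$ once, and as above write $p(e)$ for the deepest node whose induced subgraph contains $e$, so that $e\in E_{p(e)}$ and $e\notin E_q$ for every other node $q$; all $p(e)$ are computed in $O(m)$ time as in \Cref{lem:cut-hierarchy-to-ideal-load}. The first step is to observe that for every internal node $p$ the vector $x^{(p)}:=\ell|_{E_p}$ is a fractional spanning tree of $G_p$: it is nonnegative, $x^{(p)}(E_p)=c(E_p)/\sigma(G_p)=|V_p|-1$, and for every $S\subseteq V_p$ with $2\le|S|\le|V_p|$ the multiway cut separating $S$ from each singleton of $V_p\setminus S$ has cut ratio at least $\sigma(G_p)$ --- because by definition of the cut hierarchy the all-singleton cut is a min-ratio cut of $G_p$ --- which rearranges to $x^{(p)}(E_p[S])=c(E_p[S])/\sigma(G_p)\le|S|-1$. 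By Edmonds' description of the spanning tree polytope this yields $x^{(p)}\in\mathbb{T}(G_p)$, and since $x^{(p)}(E_p)=|V_p|-1$ it is a convex combination of spanning trees (not merely forests) of $G_p$.

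The second step is to store, at each internal node $p$, a convex decomposition $x^{(p)}=\sum_j\lambda^{(p)}_j\mathbf{1}_{F^{(p)}_j}$ of $x^{(p)}$ into at most $|E_p|$ spanning trees $F^{(p)}_j\subseteq E_p$ of $G_p$. Such a decomposition is obtained from any routine that writes a point of the spanning tree polytope as a convex combination of few spanning trees; concretely, running the maximal fractional tree-packing algorithm of Gabow and Manu~\cite{GabowManu95} on $G_p$ returns a packing of total value $\sigma(G_p)$ supported on $O(|E_p|)$ spanning trees, and since $\sum_{e\in E_p}c_e=\sigma(G_p)(|V_p|-1)$ forces every edge to be used to its full capacity, dividing that packing by $\sigma(G_p)$ is exactly a decomposition of $x^{(p)}$. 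Storing the trees $F^{(p)}_j$ (each with $|V_p|-1$ edges) together with an alias table for $(\lambda^{(p)}_j)_j$ costs $O(|V_p|\,|E_p|)$ space per node; since $\sum_p|E_p|=m$ and $|V_p|\le n$, the total space is $O(nm)$, and using the $\tO(n^3m)$-time packing routine the total construction time is $\sum_p\tO(|V_p|^3|E_p|)=\tO(n^3m)$.

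The third step is the sampling procedure: independently at each internal node $p$, draw an index $j$ with probability $\lambda^{(p)}_j$ from the alias table ($O(1)$ time), set $F_p:=F^{(p)}_j$, and output $T:=\bigcup_p F_p$; copying each $F_p$ takes $O(|V_p|)$ time, and $\sum_p|V_p|$ equals the number of nodes of the hierarchy having a parent, which is $O(n)$, so sampling runs in $O(n)$ time. Correctness has two parts. First, $T$ is always a spanning tree of $G$: a short count gives $|T|=\sum_p(|V_p|-1)=n-1$, and a bottom-up induction on the hierarchy shows $T$ connects every pair of leaves, since for leaves $u,v$ with least common ancestor $p$ an $F_p$-path in $G_p$ between the children of $p$ containing $u$ and $v$ lifts, edge by edge via the inductive hypothesis inside each child's subhierarchy, to a $u$--$v$ path in $T$. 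Second, since $e$ belongs to $E_q$ only for $q=p(e)$, we have $\mathbf{1}[e\in T]=\mathbf{1}[e\in F_{p(e)}]$, so by construction of the decomposition and independence of the per-node choices $\Pr[e\in T]=x^{(p(e))}_e=\ell(e)$; hence the law of $T$ is a convex combination of spanning trees with edge-load vector $\ell$, i.e.\ an ideal tree packing, and the data structure samples a spanning tree from this fixed packing.

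I expect the main obstacle to be the second step --- establishing that the edge-weight-proportional point $x^{(p)}$ admits an \emph{efficiently computable} decomposition into few spanning trees of $G_p$ with the claimed aggregate cost; the polytope membership of the first step and the laminar ``stitching'' of the third are routine. One could try to avoid invoking an external packing algorithm by peeling off spanning trees greedily --- repeatedly remove a spanning tree of the current support carrying the largest coefficient that keeps the residual point in a scaling of $\mathbb{T}(G_p)$ --- but bounding the number of iterations and the per-iteration work seems to require exactly the min-ratio-cut structure of $G_p$ from \Cref{lem:min-ratio-cut-close-under-union}, so citing a known decomposition routine is the cleaner route.
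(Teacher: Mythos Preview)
Your proof is correct and follows essentially the same approach as the paper: for each internal node $p$ of the hierarchy, run Gabow--Manu on $G_p$ to obtain a maximal fractional tree packing supported on $O(|E_p|)$ spanning trees (necessarily using every edge to capacity since the all-singleton cut is the min-ratio cut of $G_p$), store these, and sample by drawing one tree per node and taking the union. Your explicit verification that $\ell|_{E_p}$ lies in the spanning tree polytope of $G_p$ is a slight elaboration the paper omits, but the construction, the space and time accounting via $\sum_p|E_p|=m$ and $\sum_p|V_p|=O(n)$, and the correctness argument (spanning by induction up the hierarchy, edge count $n-1$, marginal $\ell(e)$ because $e$ lives only in $G_{p(e)}$) are otherwise the same.
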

\begin{proof}
    For each node $p$ of the cut hierarchy, we construct a maximal fractional tree packing $\mathbb{T}_p$ in $G_p$.
    (By \Cref{fact:min-ratio-cut-connected}, each $G_p$ is connected, so $\mathbb{T}_p$ is a linear combination of spanning trees in $G_p$.)
    Let $n_p, m_p$ be the number of vertices and edges in $G_p$ respectively.
    The construction time is $\tO(n_{p}^3m_{p})$ for each $G_p$ by using the fractional tree packing algorithm of Gabow and Manu~\cite{GabowManu95}. Since $n_p$ equals the number of children of node $p$ in the cut hierarchy, and the total number of nodes in the cut hierarchy is $\le 2n$ since they represent a laminar family of subsets of $V$, we can conclude that $\sum_p n_{p}\le 2n$. Similarly, since every edge appears in only one of the graphs $G_p$, we can conclude that $\sum_p m_{p}\le m$. So, the total construction time is $\tO(n^3 m)$.
    Gabow-Manu algorithm guarantees that each $\mathbb{T}_p$ consists of at most $m_p$ distinct trees, so the space is at most $\sum_p m_{p} n_{p} = O(nm)$.
    
    To draw a sample, we simply draw a random tree from each $\mathbb{T}_p$ (each tree is drawn with probability proportional its weight in the packing), and take the union (in terms of the edges) of all trees drawn. We claim that the resulting graph is a spanning tree. First, by bottom-up induction on the cut hierarchy, we have that each $S_p$ (the vertex set corresponding to node $p$) is connected by the sample. As a result, the sample spans the whole graph. Second, the number of edges in the union is the sum over $p$ of rank difference of $S_p$ and all children of $p$. The sum telescopes to be the rank difference of whole graph and all singletons, which is $n-1$. Because the sample has $n-1$ edges and is spanning, it must be a spanning tree.

    We now compute the probability that ane edge $e$ is chosen in the sample.
    Notice that $e$ only appears in one of the graphs $G_p$, namely $G_{p(e)}$. So, we only need to consider the packing $\mathbb{T}_{p(e)}$, which has value $\sigma(G_{p(e)})$.
    Because the min-ratio cut of $G_{p(e)}$ is the all-singleton cut, all edges are fully used in the maximum fractional tree packing, i.e., $e$ is used in a subset of trees with total value $c_e$. So, the probability that $e$ is chosen in $\mathbb{T}_{p(e)}$ is $c_e/\sigma(G_{p(e)})=\ell(e)$, as desired.
\end{proof}

\subsection{Ideal Load as Maximum Entropy Fractional Tree}
In this section, we show that the fractional spanning tree defined by the ideal loads maximizes the entropy function in the spanning tree polytope. This could be a natural alternate definition of ideal load.

To validate this observation, we view each integer-weighted edge $e$ as $c_e$ parallel, unweighted edges. Then, each unweighted edge has ideal load $\ell(e) = 1/\sigma(G_{p(e)})$.
Denote $x_e$ to be the value of edge $e$ in the spanning tree polytope.
The Shannon entropy of the (normalized) edge values is $H(x) := \sum_e \frac{x_e}{n-1}\ln \frac{n-1}{x_e} = \ln (n-1) - \frac{1}{n-1}\sum_e x_e\ln x_e$. So, maximizing entropy is equivalent to minimizing $\sum_e x_e \ln x_e$ in the spanning tree polytope.

\begin{theorem}\label{thm:entropy}
    In an unweighted multigraph, the ideal load is the maximizer of entropy of the edge values in the spanning tree polytope.
\end{theorem}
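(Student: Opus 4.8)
The plan is to recast the statement as a convex optimization problem and produce a KKT certificate at the ideal load. Since $t\mapsto t\ln t$ is strictly convex, $f(x):=\sum_e x_e\ln x_e$ is strictly convex, and (as already observed) maximizing $H$ over the spanning tree polytope is the same as minimizing $f$ there; so $f$ has a unique minimizer over $\mathbb{T}=\{x\in\RR_{\ge0}^E : x(E)=n-1,\ x(E[S])\le|S|-1\ \forall\,\emptyset\ne S\subsetneq V\}$ (writing $x(F):=\sum_{e\in F}x_e$), and it suffices to check that the ideal load $\ell$, with $\ell(e)=1/\sigma(G_{p(e)})$ in the unweighted multigraph, satisfies the KKT conditions: primal feasibility $\ell\in\mathbb{T}$; existence of a multiplier $\lambda\in\RR$ for $x(E)=n-1$ and multipliers $\mu_S\ge0$ for $S\subsetneq V$ with $\mu_S=0$ whenever $\ell(E[S])<|S|-1$; and the stationarity equation $\ln\ell(e)+1+\lambda+\sum_{S\subsetneq V:\,e\in E[S]}\mu_S=0$ for every edge $e$ (the multipliers for $x\ge0$ vanish because $\ell>0$).

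First I would record the laminar tightness facts. For an internal node $q$ of the cut hierarchy, $G_q$ has $|V_q|$ vertices (one per child of $q$) and $|E_q|$ edges (exactly the edges $e$ with $p(e)=q$), and $\sigma(G_q)=|E_q|/(|V_q|-1)=\sigma(G[S_q])$; hence $\sum_{e:\,p(e)=q}\ell(e)=|E_q|/\sigma(G_q)=|V_q|-1$. Summing this over the subtree of the hierarchy rooted at $q$ telescopes to $\ell(E[S_q])=|S_q|-1$, so every hierarchy set is a tight constraint of $\mathbb{T}$; in particular $\ell(E)=n-1$. That $\ell(E[S])\le|S|-1$ also for arbitrary $S$, i.e.\ $\ell\in\mathbb{T}$, is due to Thorup~\cite{Thorup01}; a self-contained alternative is to note that the all-singleton cut being a min-ratio cut of $G_q$ forces $c(E_q[W])\le\sigma(G_q)(|W|-1)$ for all $W\subseteq V_q$ (otherwise uncontracting a denser $W$ would lower the cut ratio of $G_q$, an argument parallel to the base case of \Cref{lem:core-is-star}), and then running a bottom-up induction over the hierarchy intersected with $S$.

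Next I would choose the multipliers: $\mu_S:=0$ unless $S=S_q$ for a non-root internal node $q$, in which case $\mu_{S_q}:=\ln\sigma(G_q)-\ln\sigma(G_{\mathrm{parent}(q)})$. Complementary slackness is immediate from the tightness fact. The crux is dual feasibility $\mu_{S_q}\ge0$, equivalently the monotonicity $\sigma(G[S_q])\ge\sigma(G[S_p])$ for every child $q$ of $p$ in the hierarchy. I would prove this by cut refinement: if $\sigma(G[S_q])<\sigma(G[S_p])$, take a min-ratio cut $\mathcal{Q}$ of $G[S_q]$ and splice it into the children-partition of $S_p$; the cut edges split as a disjoint union (those between children of $p$, plus those of $\mathcal{Q}$ inside $S_q$), so the resulting multiway cut of $G[S_p]$ has ratio equal to a strict convex combination of $\sigma(G[S_p])$ and the smaller value $\sigma(G[S_q])$ — hence strictly below $\sigma(G[S_p])$, contradicting minimality in $G[S_p]$. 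Finally, stationarity: for an edge $e$, the sets $S\subsetneq V$ with $e\in E[S]$ and $\mu_S\ne0$ are exactly the non-root internal ancestors of $p(e)$ (including $p(e)$ itself), so $\sum_{S\subsetneq V:\,e\in E[S]}\mu_S$ telescopes along this chain to $\ln\sigma(G_{p(e)})-\ln\sigma(G_{\mathrm{root}})$; substituting $\ln\ell(e)=-\ln\sigma(G_{p(e)})$ collapses the stationarity equation to $\lambda=\ln\sigma(G_{\mathrm{root}})-1$, a value independent of $e$. With this single $\lambda$ all KKT conditions hold, so $\ell$ minimizes $f$ and therefore maximizes the entropy over $\mathbb{T}$.

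The main obstacle is the monotonicity $\sigma(G[S_q])\ge\sigma(G[S_p])$ along the hierarchy (dual feasibility); once the refinement argument is set up, the rest is bookkeeping with telescoping sums over the laminar family. A secondary point needing care is the separate treatment of the root (which carries the equality constraint, not a $\mu$) and of the leaf nodes (where $E[S_q]=\emptyset$), so that the telescoping in the stationarity equation lands precisely on $\ln\sigma(G_{\mathrm{root}})$.
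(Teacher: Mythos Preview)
Your proposal is correct and takes essentially the same approach as the paper: both set up the convex program, introduce Lagrange multipliers supported on the cut-hierarchy sets with value $\ln\sigma(G_q)-\ln\sigma(G_{\mathrm{parent}(q)})$, establish dual feasibility via the same refinement argument (your monotonicity is the paper's \Cref{fact:delta-positive}), and finish by telescoping over the laminar family. The only cosmetic difference is that you verify the KKT conditions directly (stationarity + complementary slackness) whereas the paper shows the primal and dual objective values coincide; the underlying computation is identical.
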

We prove this theorem in the rest of this section.

We want to maximize entropy given by $\min \sum_{e\in E}x_e\ln x_e$ subject to the constraints of the spanning tree polytope:
\[
    x_e\in [0, 1] ~\forall e\in E \text{ such that } x(E[S]) := \sum_{e\in E[S]} x_e \le |S|-1 ~\forall S\subseteq V \text{ and } \sum_{e\in E} x_e = n-1.
\]
The first group of constraints can be restricted to $2\le |S|\le n-1$, because $|S|=1$ is trivial and $|S|=n$ is included in the normalization constraint. 
Note also that the constraints $x_e\le 1$ are automatically implied by the subset constraints for pairs of vertices, so we only need to explicitly enforce $x_e \ge 0$ for all $e\in E$.

We introduce Lagrangian multipliers $\mu_S$ for $S\subseteq V, |S|\ge 2$.
The Lagrangian is
\[L(x, \mu) = \sum_e x_e\ln x_e + \sum_{|S|\ge 2}\mu_S \cdot (x(E[S]) - |S| + 1)\]
The primal program is equivalent to $\max_\mu \min_x L(x, \mu)$  where $\mu_S\ge 0$ for sets with $2\le |S|\le n-1$, and $\mu_V$ is unconstrained.  Next we solve $g(\mu):=\min_{x\ge 0} L(x, \mu)$ to obtain the dual program $\max_\mu g(\mu)$. Let $x^*$ denote $\arg\min_{x\ge 0} L(x, \mu)$. Taking the partial derivatives at $x = x^*$, we get that for every edge $e$,
\[
1 + \ln x^*_e + \sum_{S: e\in E[S]} \mu_S = 0,
\]
which implies that 
\[
x^*_e = \exp\left(-1-\sum_{S: e\in E[S]} \mu_S\right).
\]
\eat{
Substituting in the Lagrangian $L$, we get the dual objective
\begin{align*}
g(\mu) &= \sum_e x^*_e \ln x^*_e + \sum_S \mu_S \sum_{e\in E[S]}x^*_e - \sum_S \mu_S\cdot (|S|-1)\\
&= -\sum_e \exp\left(-1-\sum_{S: e\in E[S]} \mu_S\right)- \sum_S \mu_S\cdot (|S|-1)
\end{align*}
}
Define 
\[
    y_S = \begin{cases}
    \mu_S & S\ne V \\
    \mu_S+1 & S=V,
\end{cases}
\]
so that $x^*_e = \exp(-\sum_{S:e\in E[S]} y_S)$.
Substituting in the Lagrangian $L$, we get the dual objective
\begin{align*}
&= \sum_e x^*_e \cdot \left(-\sum_{S:e\in E[S]} y_S\right) + \sum_S y_S\cdot \left(\sum_{e\in E[S]}x^*_e - |S| + 1\right)  - (x^*(E)-n+1)\\
&= \sum_S y_S\cdot (-|S|+1) - (x^*(E)-n+1)\\
&= n-1-\sum_e \exp\left(-\sum_{S:e\in E[S]}y_S\right) - \sum_S y_S(|S|-1)
\end{align*}
The dual program is 
\[
    \max~ g(y) := n-1-\sum_e \exp\left(-\sum_{S:e\in E[S]}y_S\right) - \sum_S y_S(|S|-1)
    \text{ subject to }~ y_S\ge 0 ~\forall S\ne V.
\]

Because the ideal load is in the spanning tree polytope, it is a feasible primal solution.
We now construct a dual solution $y^*$ and show its dual objective matches the primal objective for ideal load. It follows that both solutions are optimal.

For each non-root node $p$ in the cut hierarchy with parent $q$, set $y^*_{S_{p}} := \ln \sigma(G_p) - \ln \sigma(G_q)$, where $q$ is the parent of $p$ in the cut hierarchy. For the root node corresponding to $V$, set $y^*_V := \ln \sigma(G)$.
It is easy to verify the next property, which implies dual feasibility:
\begin{fact}\label{fact:delta-positive}
    Consider any non-root node $p$ of the cut hierarchy and let $q$ denote its parent. Then, $\sigma(G_p) \ge \sigma(G_q)$, which implies $y^*_{S_p} \ge 0$.
\end{fact}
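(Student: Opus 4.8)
The plan is to derive \Cref{fact:delta-positive} from a general monotonicity property of min-ratio cuts: \emph{every side of a min-ratio cut of a connected graph induces a subgraph whose strength is at least the strength of the whole graph}. First I would record the bookkeeping identity that for any node $r$ of the cut hierarchy with $|S_r|\ge 2$ we have $\sigma(G_r) = \sigma(G[S_r])$. Indeed, by the definition of the canonical cut hierarchy the maximal min-ratio cut of $G[S_r]$ is precisely the partition of $S_r$ into the vertex sets of the children of $r$; contracting those vertex sets leaves the cut-edge set and the number of sides unchanged, so $\sigma(G[S_r]) = c(E_r)/(|V_r|-1)$, which equals $\sigma(G_r)$ because the all-singleton cut is a min-ratio cut of $G_r$. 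Also, the variable $y^*_{S_p}$ only matters when $|S_p|\ge 2$: if $|S_p|=1$ its coefficient $|S_p|-1$ in $g(y)$ is $0$ and no edge lies in $E[S_p]$, so we may take $y^*_{S_p}=0$ and there is nothing to prove. Hence assume $|S_p|\ge 2$; then its parent $q$ also has $|S_q|\ge 2$, and $S_p$ is one of the sides of the min-ratio cut of $G[S_q]$, so it suffices to prove the general property with $H := G[S_q]$ and $S := S_p$.

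For the general property I would argue by contradiction with a mediant inequality. Let $\mathcal{P}$ be a min-ratio cut of the connected graph $H$ with ratio $\pi := \sigma(H)$, let $S\in\mathcal{P}$ with $|S|\ge 2$, and suppose $\sigma(H[S]) < \pi$. Since $H[S]$ is connected by \Cref{fact:min-ratio-cut-connected}, it has a genuine multiway cut $\mathcal{Q}$ of ratio $\sigma(H[S])$ with $|\mathcal{Q}|\ge 2$. Refine $\mathcal{P}$ by replacing the side $S$ with the sides of $\mathcal{Q}$, obtaining a multiway cut $\mathcal{P}'$ of $H$. Its cut-edge set is the disjoint union $\partial\mathcal{P}\,\sqcup\,\partial_{H[S]}\mathcal{Q}$ (the former contains no edge with both endpoints in $S$, while the latter is contained in $E[S]$), so $d_H(\mathcal{P}') = d_H(\mathcal{P}) + d_{H[S]}(\mathcal{Q})$ and $|\mathcal{P}'|-1 = (|\mathcal{P}|-1)+(|\mathcal{Q}|-1)$. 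Writing $a=d_H(\mathcal{P})$, $b=|\mathcal{P}|-1\ge 1$, $c=d_{H[S]}(\mathcal{Q})$, $d=|\mathcal{Q}|-1\ge 1$, we have $a/b=\pi$ and $c/d<\pi$, hence $c<\pi d$ and $\frac{a+c}{b+d}<\frac{\pi b+\pi d}{b+d}=\pi$, contradicting minimality of $\pi$. Therefore $\sigma(H[S])\ge\sigma(H)$.

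Combining the two parts, $\sigma(G_p) = \sigma(G[S_p]) \ge \sigma(G[S_q]) = \sigma(G_q)$, so $y^*_{S_p} = \ln\sigma(G_p) - \ln\sigma(G_q)\ge 0$. I do not anticipate a genuine obstacle: the two points that need care are the identity $\sigma(G_r)=\sigma(G[S_r])$ (which relies on the cut hierarchy being built from min-ratio cuts with connected sides, via \Cref{fact:min-ratio-cut-connected}) and verifying that the denominators $|\mathcal{P}|-1$ and $|\mathcal{Q}|-1$ are strictly positive, which holds because every multiway cut has at least two sides.
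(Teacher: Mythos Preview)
Your proof is correct and follows essentially the same route as the paper's: both argue by contradiction, refining the min-ratio cut $\mathcal{Q}$ of $G[S_q]$ by replacing the side $S_p$ with its own min-ratio cut, and then using the mediant inequality to show the refined cut has strictly smaller ratio. You are somewhat more careful than the paper in two respects---you explicitly record the identity $\sigma(G_r)=\sigma(G[S_r])$ (which the paper uses tacitly when it writes ``let $\mathcal{P},\mathcal{Q}$ be the maximal min-ratio cuts in $G[S_p],G[S_q]$'' while the statement is about $\sigma(G_p),\sigma(G_q)$), and you handle the leaf case $|S_p|=1$, where $\sigma(G_p)$ is not defined but the variable is irrelevant anyway.
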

\begin{proof}
 Consider any non-root node $p$ and let $q$ be $p$'s parent. Assume for contradiction that $\sigma(G_p) < \sigma(G_q)$. 
    Let $\mathcal{P}, \mathcal{Q}$ be the maximal min-ratio cuts in $G[S_p], G[S_q]$ respectively. Note that $S_p$ is a side in $\mathcal{Q}$. Let $\mathcal{Q}'=(\mathcal{Q}\setminus \{S_p\}) \cup \mathcal{P}$.
    The cut ratio of $\mathcal{Q}'$ is
    \[\frac{d_{G[S_q]}(\mathcal{Q}')}{|\mathcal{Q}'|-1}
    = \frac{d_{G[S_q]}(\mathcal{Q})+d_{G[S_p]}(\mathcal{P})}{|\mathcal{Q}|-1+|\mathcal{P}|-1} 
     = \frac{\sigma(G_q)(|\mathcal{Q}|-1)+\sigma(G_p)(|\mathcal{P}|-1)}{|\mathcal{Q}|-1+|\mathcal{P}|-1} < \sigma(G_q)
    \]
    which contradicts the fact that $\mathcal{Q}$ is a min-ratio cut in $G[S_q]$.
\end{proof}

Because the sets form a laminar family, we have
$\sigma(G_p) = \exp\left(\sum_{q:S_q\supseteq S_p} y^*_{S_q}\right)$.
Therefore,
\begin{equation}\label{eq:primal-dual-soln-relation}
    \ell(e) = \frac{1}{\sigma(G_{p(e)})}
= \exp\left(-\sum_{S: e\in E[S]} y^*_S\right)
\end{equation}

\eat{
We now extend the recursive min-ratio cut process to obtain a dual solution. 
\paragraph{Primal-Dual min-ratio cut Process.}
Define a recursive procedure Dual-alg($S, \Delta$) as follows.
Find a min-ratio cut $\mathcal{P}$ of $G[S]$.
Set $y_S=\ln \frac{d_{G[S]}(\mathcal{P})}{|\mathcal{P}|-1}$ - $\Delta$.
Set $x_e =\frac{|\mathcal{P}|-1}{d(\mathcal{P})}$ for each $e\in \partial \mathcal{P}$.
Recursively call Dual-alg($U, \Delta+\mu_S$) for all sides of $\mathcal{P}$.

We call Dual-alg($V, 0$) to output the dual solution $y$.
$y$ is feasible by the following monotone property.

From the construction, we have $x_e=\exp(-1-\sum_{S\supseteq e}\mu_S)$ for each edge $e$.
$x$ is the same as definition of ideal tree packing, so it is a feasible primal solution.
}

Applying (\ref{eq:primal-dual-soln-relation}) to $g(y^*)$, we have
\[g(y^*) = n-1-\sum_e \ell(e) - \sum_S y^*_S(|S|-1) =  -\sum_S y^*_S(|S|-1).\]
Here, the last step uses $\sum_{e\in E}\ell(e)=n-1$ because $\ell$ is a fractional spanning tree.
On the other hand,
\[
\sum_e \ell(e)\ln \ell(e) 
= - \sum_e \ell(e) \sum_{S:e \in E[S]} y^*_S
= -\sum_S y_S^*\sum_{e\in E[S]}\ell(e)
= -\sum_S y_S^* \cdot \ell(E[S]).
\]
We can ignore the terms with $y^*_S=0$. The remaining $S$ with $y^*_S\ne 0$ are sets in the cut hierarchy.
It remains to show that for such sets, $\ell(E[S]) = |S|-1$.

Note that if we define the ideal load in a subgraph $G[S_p]$ for a node $p$ in the cut hierarchy, the definition is identical to $\ell(e)$ because the cut hierarchy in $G[S_p]$ is a sub-family of the cut hierarchy in $G$. So, $\ell$ is also a fractional spanning tree in $G[S_p]$, and we have $\sum_{e\subseteq S_p}\ell(e) = |S_p|-1$.
In conclusion $g(y^*)=\sum_e \ell(e) \ln \ell(e)$, which implies that $x^*_e=\ell(e)$ is an optimal solution for the primal program.

\section{Closing Remarks}\label{sec:closing}
In this paper, we have presented new algorithms for constructing the cut hierarchy of a weighted, undirected graph in $nm^{1+o(1)}$ time, and for determining its arboricity in $\sqrt{n}m^{1+o(1)}$ time. These algorithms improve the state of the art for both problems, previously due to Gabow~\cite{Gabow95}, from $\tO(n^2 m)$ (weighted) and $\tO(n m^{3/2})$ (unweighted) for the cut hierarchy problem and $\tO(nm)$ (weighted) and $\tO(m^{3/2})$ (unweighted) for the arboricity problem. Our arboricity algorithm is bottlenecked by the running time of a directed minimum cut subroutine -- if that improves to $m^{1+o(1)}$, the entire algorithm will run in $m^{1+o(1)}$ time. Nevertheless, it would be interesting to explore whether this dependence on the directed minimum cut subroutine is necessary -- apriori, there is no reason to believe that it is. Improving the running time for the cut hierarchy problem beyond that shown in this paper is also an interesting goal. Among other implications, this would result in a faster algorithm for computing the strength of a graph beyond the current best running time of $\tO(nm)$.

\section*{Acknowledgments}
RC and DP were supported in part by NSF grants CCF-1955703 and CCF-2329230. Part of this work was done when RC, JL, and DP were visiting the Simons Institute for Theory of Computing, UC Berkeley, as part of the semester program on ``Data Structures and Optimization for Fast Algorithms''. DP also gratefully acknowledges the support of Google Research where he was a visitor during part of this work. GL thanks Jeff Giliberti for mentioning the problem of computing arboricity to him. HF is supported by the National Science Foundation Graduate
Research Fellowship Program under NSF grant DGE2140739. Any opinions, findings,
and conclusions or recommendations expressed in this material are those of the author(s)
and do not necessarily reflect the views of the National Science Foundation.

\bibliographystyle{alpha}
\bibliography{refs}

\appendix

\section{Minimum Directed Cut (Proof of \Cref{lem:alg-sparsify})}\label{sec:partial-sparsify}
In this section, we design an algorithm to solve the following subproblem:
Given a digraph $G$ with a root vertex $t$, cut value parameter $\tau$ and size parameter $k$, assuming there exists a $t$-cut $S$ with $d^+(S)< \tau$ and $|S|\le k$, find any $t$-cut $S'$ with $d^+(S')< \tau$.

The cut we find will be a minimum 1-respecting $t$-cut.
We define a $t$-arborescence to be a directed subgraph where each vertex except $t$ has out-degree $1$, $t$ has out-degree $0$, and each vertex has a directed path to $t$. We say a $t$-cut 1-respects an arborescence if the intersection of cut edges and arborescence edges is 1. 

\begin{lemma}\label{lem:1-resp-mincut-alg}
    There exists an algorithm that, given a capacitated digraph $G$ and a $t$-arborescence $T$, output the minimum $t$-cut that 1-respects $T$ in $\tO(m)$ time plus polylogarithmic calls to max flow on digraphs with $O(m)$ edges and $O(n)$ vertices and integer edge weights at most $\poly(n, W)$.
\end{lemma}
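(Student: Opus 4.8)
The plan is to reduce the computation of the minimum $t$-cut that 1-respects a fixed $t$-arborescence $T$ to a polylogarithmic number of max-flow computations, using a recursive centroid-style decomposition of $T$. First I would observe that a $t$-cut $S$ (with $t \notin S$) that 1-respects $T$ corresponds to choosing exactly one arborescence edge $(u, \anc(u))$ to cut; removing that edge from $T$ splits $T$ into the subtree $T_u$ hanging below $u$ and the rest. For $S$ to 1-respect $T$ with this being the unique crossed arborescence edge, the source side $S$ must be a union of whole subtrees of $T \setminus (u,\anc(u))$ that, together, form a set containing $u$, contained in $V(T_u)$... more precisely, $S$ must be "downward closed within $T_u$ and contain $u$'s component." The cleanest formulation: $S$ 1-respects $T$ via edge $e=(u,\anc(u))$ iff $S \subseteq V(T_u)$, $S$ contains $u$, and $S$ is a union of subtrees, i.e. for every vertex $w \in S$ with child $w'$ in $T$, either $w' \in S$ or $w' \notin V(T_u)$. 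After contracting, this becomes a standard $s$-$t$ min-cut computation once we fix the "pivot" edge $e$.

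The key step is to avoid iterating over all $n$ candidate edges $e$. Here I would use the standard trick (as in Karger's tree-packing min-cut algorithm, and its directed analogues): root $T$ at $t$ and recursively pick a centroid edge that splits $T$ into two parts each of size at most a constant fraction of $|V(T)|$. For any candidate pivot edge $e$ on the "heavy" side, we can contract the entire light side to a single vertex (merging it appropriately with $t$ or with the rest), reducing to a smaller arborescence; for $e$ on the light side, symmetrically contract the heavy side. At the base of the recursion, each subtree has $O(1)$ size and we directly solve the constant number of candidate $s$-$t$ min-cut instances. Since the recursion has depth $O(\log n)$ and at each level the total size of all subinstances is $O(n + m)$ (contractions do not increase the total size beyond the original graph), the total work is $\tO(m)$ plus $\polylog(n)$ max-flow calls on graphs with $O(m)$ edges and $O(n)$ vertices. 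Concretely, for a fixed pivot edge $e = (u, \anc(u))$, the correct min-cut instance is: in $G$, contract $V \setminus V(T_u)$ into $t$, contract into a source $s$ all vertices forced to be on the source side (which, after the subtree contractions performed by the recursion, is just $u$ together with whatever light-side subtrees were collapsed), add infinite-capacity arcs enforcing the "union of subtrees" structure is not needed because any min $s$-$t$ cut automatically respects it once we've contracted whole subtrees on the recursion's other branch — this is exactly why the centroid decomposition is the right tool. Then the value $d^+_G(S)$ for the recovered $S$ equals the $s$-$t$ max-flow value of that instance.

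I expect the main obstacle to be the bookkeeping in the reduction: verifying that after the centroid contractions, the remaining $s$-$t$ min-cut instance correctly ranges over \emph{exactly} the $t$-cuts that 1-respect $T$ via a pivot edge in the current subtree, with no spurious cuts (that cross $T$ more than once) and no omissions. The cleanest way to handle this is to phrase the invariant of the recursive procedure as "the current instance is $G$ with a specified subtree $T'$ of $T$ intact and everything else outside $T'$ contracted either into $t$ or into a single source-side vertex, and we want the minimum over 1-respecting cuts whose pivot edge lies in $T'$," and then show the centroid split preserves this invariant in both recursive calls. Once the invariant is nailed down, correctness is immediate (min over all pivot edges = min over the two recursive calls, and base cases are exhaustive), and the running time bound follows from the standard centroid-decomposition accounting: $O(\log n)$ levels, $\tO(m)$ work per level for the contractions, and $\polylog(n)$ total max-flow calls since each level contributes $O(1)$ calls per base-case subtree but we can batch or simply note the total number of base cases across all levels is $\tO(n)$; to get only $\polylog$ max-flow calls rather than $\tO(n)$, I would instead, at each centroid split, solve \emph{one} global $s$-$t$ max-flow on the current (contracted) instance to handle all pivot edges crossing the centroid edge simultaneously — this is the directed analogue of the "1-respecting cut via a single tree packing flow" idea — giving $O(\log n)$ max-flow calls total, each on a graph with $O(m)$ edges and $O(n)$ vertices.
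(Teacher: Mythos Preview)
Your approach is quite different from the paper's, and it has a real gap in the running-time accounting. The paper's proof is a two-line reduction: it invokes Theorem~2.7 of \cite{CLNPQS21} as a black box (an algorithm that, given a $t$-arborescence $T$ that 1-respects a global $t$-mincut, returns a $t$-mincut in $\tO(m)$ time plus $\polylog$ max-flow calls), and forces the output to be 1-respecting by first adding a uniform large capacity to every arborescence edge. After this re-weighting, every $t$-cut that uses more than one tree edge is strictly more expensive than every 1-respecting $t$-cut, while relative values among 1-respecting cuts are preserved; hence the global $t$-mincut \emph{is} the minimum 1-respecting cut and the cited black box finds it.

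Your centroid-decomposition plan, by contrast, tries to re-derive the \cite{CLNPQS21} result from scratch, and the place where it breaks is the final sentence. A centroid decomposition has depth $O(\log n)$, but its recursion tree has $\Theta(n)$ nodes; doing one max-flow ``at each centroid split'' therefore costs $\Theta(n)$ max-flow calls, not $\polylog(n)$. Your attempt to fix this (``handle all pivot edges crossing the centroid edge simultaneously'') does not help: each split has a single centroid edge, so there is exactly one pivot to handle, and you still recurse on both sides. Batching all subproblems at a fixed recursion level into one flow instance does not work either, since you would need to return the \emph{minimum} over many independent $s$-$t$ mincuts rather than their sum, and the subinstances are not on disjoint graphs. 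The actual machinery in \cite{CLNPQS21} that achieves $\polylog$ flow calls is considerably more delicate than a plain centroid recursion. A secondary issue: your structural description of 1-respecting cuts has the closure direction reversed---a 1-respecting $t$-cut $S$ with pivot $u$ is \emph{parent}-closed inside $T_u$ (every $w\in S\setminus\{u\}$ has its parent in $S$), not child-closed; your stated condition would force $S=V(T_u)$.
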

\begin{proof}
    The lemma is a corollary of the following result:
    \begin{lemma}[Theorem 2.7 of \cite{CLNPQS21}]\label{lem:1resp-cut-original}
    There exists an algorithm that, given a capacitated digraph $G$ and a $t$-arborescence $T$ that 1-respects a $t$-mincut, output a (not necessarily $1$-respecting) $t$-mincut in $\tO(m)$ time plus polylogarithmic calls to max flow.
    \end{lemma}
    (We remark that our direction of $t$-cut and $t$-arborescence is opposite to \cite{CLNPQS21}, but the results are equivalent after reversing orientation of all arcs.)
    
    The only difference to the lemma above is that we require the output $t$-mincut to be $1$-respecting. By adding a uniform large capacity on all edges of $T$, we can force the $t$-mincut to be 1-respecting, while keeping the relative value of all 1-respecting cuts. It follows that the original minimum 1-respecting cut is now the global $t$-mincut, and we can apply \Cref{lem:1resp-cut-original} to find it.
\end{proof}

It remains to construct a $t$-arborescence that 1-respects a cut of value $< \tau$. 
Our approach is adapted from \cite{CLNPQS21}.

\begin{lemma}\label{lem:partial-sparsify}
    Given a capacitated digraph $G$ with a fixed root vertex $t$ and a number $\tau > 0$, suppose there exists a $t$-cut $S$ with $d^+(S)< \tau$ and $|S|\le k$.
    In randomized nearly linear time, we can compute a sparsifier $G_0=(V_0, E_0)$, where $t\in V_0\subseteq V$ and the following properties hold.
    \begin{enumerate}
        \item $G_0$ has integer edge capacities and the $t$-mincut has value $O(k\log(n)/\eps^2)$.
        \item There exists some $S'$ with $d^+_G(S')< \tau$ such that $S'$ is a $(1+\eps)$-approximate $t$-mincut in $G_0$.
    \end{enumerate}
\end{lemma}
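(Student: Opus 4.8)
\emph{Proof plan.} The plan is to adapt the partial-sparsification construction of \cite{CLNPQS21}. First normalize: by scaling and rounding we may assume $\tau$ and all capacities $c_e$ are integers bounded by $\poly(n)$, which changes every cut value by a $(1\pm o(\eps))$ factor that we absorb into $\eps$. If $\tau=O(k\log n/\eps^2)$ we output $G_0=G$: its $t$-mincut is at most $d^+(S)<\tau=O(k\log n/\eps^2)$, and the minimum $t$-cut of $G$ (of value $<\tau$) witnesses the second property. So from now on assume $\tau$ is larger.

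The construction of $G_0$ has a contraction step and a sampling step. \emph{Contraction.} A vertex $v$ whose maximum flow to $t$ in $G$ is at least $\tau$ cannot lie on the source side of any $t$-cut of value $<\tau$, and in particular $v\notin S$. So contracting a set $K$ of such vertices into $t$, forming $G'=G/K$ on vertex set $V_0:=V\setminus K$, preserves every $t$-cut of value $<\tau$ with its value unchanged (in particular $d^+_{G'}(S)=d^+_G(S)<\tau$, so the $t$-mincut of $G'$ remains $<\tau$). Following \cite{CLNPQS21}, $K$ is taken rich enough that the $t$-cuts of $G'$ that can be near-minimal after sampling obey the directed analogue of Karger's cut-counting bound; such a $K$ is found in randomized near-linear time by sampling a polylogarithmic number of source vertices and computing approximate maximum flows from them to $t$. \emph{Sampling.} View each edge as $c_e$ parallel unit arcs, retain each independently with probability $p:=\min\{1,\ \Theta(k\log n/(\eps^2\tau))\}$, and let $G_0=(V_0,E_0)$ carry the retained arcs with their integer multiplicities; so $G_0$ has integer capacities. (As in the Bencz\'ur--Karger analysis, one actually partitions the edges into groups by strength and samples each group at its own rate, so as to handle cuts of value much smaller than $\tau$; we suppress this here.)

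\emph{Correctness.} A Chernoff bound gives $d^+_{G_0}(S)=(1\pm\eps)\,p\,d^+_G(S)$ w.h.p., so the $t$-mincut of $G_0$ is $O(p\tau)=O(k\log n/\eps^2)$, which is the first property. For the second property, a Chernoff bound together with the cut-counting bound secured by the contraction shows that w.h.p.\ $d^+_{G_0}(T)=p\,d^+_{G'}(T)\pm O\!\big(\sqrt{p\,d^+_{G'}(T)\log n}+\log n/\eps\big)$ simultaneously over all relevant $t$-cuts $T$. A short case analysis then exhibits $S'$: if the $t$-mincut of $G_0$ is attained by some $T^\star$ with $d^+_{G'}(T^\star)<\tau$, take $S'=T^\star$, which is then exactly minimal in $G_0$ and has $d^+_G(S')=d^+_{G'}(T^\star)<\tau$; otherwise every $G_0$-minimal cut has $G'$-value at least $\tau$, so the two-sided estimate forces the $t$-mincut of $G_0$ to be at least $(1-\eps)\,p\,\tau$, while $d^+_{G_0}(S)<(1+\eps)\,p\,\tau$, so $S'=S$ is within a $(1+O(\eps))$ factor of minimal in $G_0$. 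Rescaling $\eps$ by a constant finishes the argument.

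\emph{Running time and main obstacle.} Sampling is linear and the contraction step is near-linear by the cited approximate-max-flow subroutine, so the construction runs in randomized near-linear time. The step I expect to be the main obstacle is the contraction and the cut-counting bound it must enable: one has to argue that after contracting the vertices that are well connected to $t$, only $n^{O(k)}$ directed $t$-cuts of $G'$ can be near-minimal once $G'$ is sampled, so that the Chernoff-plus-union-bound estimate closes at the (relatively coarse) sampling rate $\Theta(k\log n/(\eps^2\tau))$; it is precisely the size promise $|S|\le k$ that licenses this coarse rate, and hence accounts for the factor $k$ in the $t$-mincut bound of the first property.
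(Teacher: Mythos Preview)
Your approach diverges from the paper's, and the gap you yourself flag as the ``main obstacle'' is real and unaddressed. You propose contraction of well-connected vertices followed by Bernoulli sampling, and you need a directed cut-counting bound---that only $n^{O(k)}$ $t$-cuts of $G'$ can be near-minimal after sampling---to close the union bound. You do not prove this, and there is no off-the-shelf directed analogue of Karger's bound that delivers it; the undirected Bencz\'ur--Karger strength machinery you invoke parenthetically does not transfer to directed graphs without substantial new work. Without this bound, your lower-tail argument for the case ``every $G_0$-minimal cut has $G'$-value at least $\tau$'' does not go through, since you cannot union-bound over all such cuts.

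The paper's construction sidesteps cut-counting entirely and needs no contraction step. Fix $\mu=\Theta(\eps^2\tau/(k\log n))$; randomly round each edge capacity up or down to the nearest multiple of $\mu$ (unbiased); then add an arc of capacity $\eps\tau/(2k)$ from every vertex $v\ne t$ to $t$; finally scale all capacities down by $\mu$. The point is that rounding changes $d^+(S)$ by at most $\eps\, d^+(S)+\eps\tau|S|/(2k)$ with failure probability $n^{-\Omega(|S|)}$ (an additive Chernoff bound, since each edge's rounding fluctuates by at most $\mu$), so a union bound over the $\le n^{|S|}$ sets of each fixed size $|S|$ closes---for \emph{all} sets simultaneously, not just near-minimal ones. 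The added arcs contribute exactly $|S|\cdot\eps\tau/(2k)$ to every $t$-cut $S$, which absorbs the size-dependent additive error and yields $(1-\eps)d^+_G(S)\le d_3(S)\le(1+\eps)d^+_G(S)+\eps\tau|S|/k$ uniformly. Properties~1 and~2 then follow by the same short case analysis you sketch. This construction is genuinely near-linear with no flow calls, whereas your contraction step, as described, already invokes approximate max-flow subroutines.
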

\begin{proof}
    Let $\mu=c_{\mu}\eps^2\tau/(k\log n)$ for a sufficiently small $c_{\mu}=\Theta(1)$ such that $\tau$ and $\eps\tau/2k$ are integer multiples of $\mu$ (we may decrease $\eps$ by a constant factor). Let $G_0$ be a capacitated graph obtained as follows.
    \begin{enumerate}
        \item Randomly round each edge capacity independently up or down to the nearest multiple of $\mu$, such that the expectation is $w_e$ (i.e., if $w_e=K\mu + r,\, K\in \mathbb{Z},\, r\in[0, \mu)$, sample $w'_e = K\mu$ w.p.\ $1-\frac{r}{\mu}$, and $w'_e = (K+1)\mu$ w.p.\ $\frac{r}{\mu}$).
        \item Add an edge of capacity $\eps\tau/2k$ from each $v\in V\setminus\{t\}$ to $t$.
        \item Scale down all the edge capacities by $\mu$.
    \end{enumerate}
    For each set $S\subseteq V$, let $d_1(S)$ be the original cut value, $d_2(S)$ be the cut value after step 1, and $d_3(S)$ be the cut value after step 2.

    Consider any fixed $S\subseteq V\setminus\{t\}$. $d_2(S)$ is the sum of $|\partial^+ S|$ independent random variables, and has expectation $d_1(S)$. Each random variable in the sum is nonnegative and varies by at most $\mu$. By a variation of Chernoff bound\footnote{\label{footnote:additive-chernoff} Here we
    apply the following bounds (appropriately rescaled) which follow
    from the same proof as the standard multiplicative Chernoff bound.
    \begin{quote}
      \itshape Let $X_1,\dots,X_n \in [0,1]$ independent random
      variables. Then for all $\eps > 0$ sufficiently small and all
      $\gamma > 0$,
      \begin{align*}
        \Pr[X_1 + \cdots + X_n \leq (1-\eps)\E[X_1 + \cdots + X_n] -
        \gamma] \leq e^{-\eps \gamma},
      \end{align*}
      and
      \begin{align*}
        \Pr[X_1 + \cdots + X_n \geq (1+\eps)\E[X_1 + \cdots + X_n] +
        \gamma] \leq e^{-\eps \gamma}.
      \end{align*}
    \end{quote}
  }
    ,
    \[\Pr[d_2(S)\le (1-\eps)d_1(S) - \gamma] \le e^{-\eps \gamma / \mu} = n^{-\gamma k / (c_\mu\eps\tau)}\]
    \[\Pr[d_2(S)\ge (1+\eps)d_1(S) + \gamma] \le e^{-\eps \gamma / \mu} = n^{-\gamma k / (c_\mu\eps\tau)}\]
    Say that set $S$ is $(\epsilon,\gamma)$-approximated if $(1-\epsilon)d_1(S)-\gamma\le d_2(S)\le(1+\epsilon)d_1(S)+\gamma$. For $\gamma=\eps\tau|S|/2k$, the probability that a set $S$ is not $(\eps, \gamma)$-approximated is at most $2n^{-c_0|S|}$, where $c_0=\frac{1}{2c_\mu}$. Note that there are $O(n^{|S|})$ sets of size $|S|$. For sufficiently large $c_0$ (sufficiently small $c_\mu$), we can apply union bound to conclude that $(\eps, \gamma)$-approximation holds for all subsets $S$, i.e., w.h.p.\ for all $S\subseteq V$,
        \[|d_2(S)-d_1(S)|\le \eps d_1(S) + \frac{\eps\tau|S|}{2k}\]
    Observe that $d_3(S) = d_2(S) + |S|\cdot \frac{\eps\tau}{2k} $ for all $t$-cuts $S$ by construction, so this implies
    \[(1-\eps)d_1(S) \le d_3(S) \le (1+\eps)d_1(S) + \frac{\eps\tau|S|}{k}\]
    
    Next, we prove properties assuming the above high probability event holds.    
    We assumed there exists a $t$-cut $S$ with $d_1(S)=d^+(S)<\tau$ and $|S|\le k$. Then,
    \[d_3(S) \le (1+\eps)d_1(S)+\frac{\eps\tau|S|}{k} \le (1+2\eps)\tau\]
    After scaling down by $\mu$, the min cut value is $\le d_3(S)/\mu = O(\tau/\mu)=O(k\log n)$. This establishes property~1.

    For property 2, we divide into two cases. We discuss the cut value in $d_3$, which is equivalent to $G_0$ up to scaling by $\mu$. Let $\lambda$ be the $t$-mincut value.

    Case 1: $\lambda \ge (1-\eps)\tau$. Then, $S$ is an $(1+O(\eps))$-approximate $t$-mincut.

    Case 2: $\lambda < (1-\eps)\tau$. Suppose $S'$ is the $t$-mincut in $G_0$ with $d_3(S')=\lambda$. 
    We have $d_1(S') \le \frac{1}{1-\eps}d_3(S') < \tau$. So $S'$ satisfies property 2.
\end{proof}

\begin{lemma}\label{lem:tree-packing}
    Given a digraph $G$ with integer edge capacities, a fixed root vertex $t$ and a number $k$ such that the (unknown) $t$-mincut value is $\lambda \le k$, we can construct in $\tO(mk)$ time a fractional packing of $t$-arborescences of value $\ge (1-\eps)\lambda$.
\end{lemma}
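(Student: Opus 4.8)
The plan is to reduce the lemma to an approximate solution of a packing linear program whose columns are $t$-arborescences, and to solve that program with the classical greedy/multiplicative-weights scheme, whose only nontrivial subroutine is a minimum-weight $t$-arborescence computation. Concretely, let $\cA$ be the (exponentially large) family of $t$-arborescences of $G$; we want a near-optimal solution to $\max\{\sum_{T\in\cA}x_T : \sum_{T\ni e}x_T\le c_e \ \forall e,\ x\ge 0\}$. By Edmonds' theorem (in its fractional, capacitated form; equivalently, after reversing every arc, these $t$-arborescences become spanning out-branchings rooted at $t$), the optimum of this program equals the minimum $t$-cut value $\lambda$. If $\lambda=0$ the empty packing suffices, so assume $\lambda\ge 1$; then every vertex reaches $t$ and at least one $t$-arborescence exists. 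A minimum-weight $t$-arborescence with respect to nonnegative edge weights is exactly a minimum-weight spanning branching and can be computed in $\tO(m)$ time by the Chu--Liu--Edmonds algorithm with the near-linear implementation of Gabow--Galil--Spencer--Tarjan, run on the original $m$-arc graph.

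Next I would run a greedy exponential-weights iteration for $N=\lceil c\,k\log n/\eps^2\rceil$ rounds, for a suitable constant $c$. Maintain multiplicative weights $w_e$, initialized to $1$. In round $i$, compute a minimum-$w$-weight $t$-arborescence $T_i$ and then, for each $e\in T_i$, multiply $w_e$ by $(1+\eps)^{1/c_e}$ (equivalently keep $w_e=\exp(\eta\cdot\mathrm{load}_e/c_e)$ with $\eta\approx\eps$, where $\mathrm{load}_e=|\{j\le i:e\in T_j\}|$). The output packing assigns coefficient $1/L$ to each $T_i$, where $L=\max_e \mathrm{load}_e/c_e$ at the end; this packing respects every capacity constraint by construction, has value $N/L$, and is represented explicitly by at most $N=\tO(k)$ distinct arborescences. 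The work is $N=\tO(k/\eps^2)$ minimum-branching calls, each $\tO(m)$ plus $O(n)$ bookkeeping, for a total of $\tO(mk/\eps^2)=\tO(mk)$.

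For correctness I would run the standard potential argument. With $\Phi^{(i)}=\sum_e c_e w_e^{(i)}$, one round increases $\Phi$ by at most (up to lower-order factors) $\eps$ times the $w$-weight $W_i$ of $T_i$; and since $T_i$ is a minimum-weight $t$-arborescence, $W_i$ is at most the $w$-weight of a random arborescence drawn from the optimal fractional packing scaled to a distribution, which is $\sum_e w_e^{(i)}\Pr[e\in T]\le \sum_e w_e^{(i)}c_e/\lambda=\Phi^{(i)}/\lambda$. Hence $\Phi^{(N)}\le m\,(1+\eps/\lambda)^N\le m\,e^{\eps N/\lambda}$, while $\Phi^{(N)}\ge\max_e c_e w_e^{(N)}\ge\max_e w_e^{(N)}=(1+\eps)^{L}$, so $L\le(\ln m+\eps N/\lambda)/\ln(1+\eps)\le O(\log n/\eps)+N/\bigl(\lambda(1-\eps/2)\bigr)$. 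Substituting into $N/L$ and using $\lambda\le k\le n$ together with $N=\Theta(k\log n/\eps^2)$ and $c$ large enough, one checks $N/L\ge(1-\eps)\lambda$. (Achieving feasibility only up to a $(1+\eps)$ factor and then rescaling, as opposed to exact feasibility, changes $\eps$ by a constant, which I would absorb.)

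The main obstacle is making the round count genuinely $\tO(k)$ rather than $\tO(m)$ or $\mathrm{poly}(\lambda)$: a naive Garg--K\"onemann run gives $\tO(m)$ rounds, and the ``parallel-edge blow-up'' view of an integer-capacitated graph would introduce a dependence on $\sum_e c_e$. Both are avoided by (i) running the greedy scheme directly on the capacitated graph with the $1/c_e$-scaled weight updates, so capacities never enter the round count, and (ii) using the hypothesis $\lambda\le k$ to bound $L$, and hence the number of rounds needed for value $(1-\eps)\lambda$, by $\tO(k/\eps^2)$ — this is exactly where $\lambda\le k$ is used. A secondary technical point is that the minimum-weight spanning branching must be computed in near-linear time on the original graph, for which I would invoke the Gabow--Galil--Spencer--Tarjan bound.
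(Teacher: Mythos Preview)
Your approach is essentially the same as the paper's: both cast the problem as a packing LP over $t$-arborescences, solve it by a multiplicative-weights/greedy scheme whose oracle is a minimum-cost arborescence computation (the paper invokes Young's packing framework and cites \cite{GGST86} for the oracle, exactly as you do), and use $\lambda\le k$ to bound the number of rounds by $\tO(k)$.

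One slip to flag: with your choice of potential $\Phi=\sum_e c_e w_e$ you have $\Phi^{(0)}=\sum_e c_e$, not $m$, so your bound on $L$ carries a $\log(\sum_e c_e)$ term rather than the $\log m$ you wrote; the lemma as stated only assumes integer capacities, so this is not absorbed by $\tO(\cdot)$. The paper sidesteps this by working in Young's width-$1$ formulation, which amounts to taking $\Phi=\sum_e w_e$ and having the oracle minimize $\sum_{e\in T} w_e/c_e$ rather than $\sum_{e\in T} w_e$; then $\Phi^{(0)}=m$ and your potential argument goes through verbatim to give the capacity-independent $\tO(mk)$ bound. This is the only place the integrality hypothesis is used (to get width $\le 1/w_{\min}\le 1$).
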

\begin{proof}
    We can write the fractional $t$-arborescences packing problem as a packing LP. Denote $\mathcal{A}=\{A_1,\ldots, A_N\}$ as the family of all $t$-arborescences. We represent a fractional packing of $t$-arborescence as a vector $x\in \mathbb{R}^N$ where $x_i$ is the coefficient of $A_i$. Define the value of an arborescence packing as the sum of all coefficients $\val(x)=\sum_{i=1}^N x_i$.
    The goal is to maximize $\val(x)$, under the constraints that each edge $j$ can be fractionally used up to its capacity $w(j)$.

    Equivalently, we can scale down all coefficients by $\val(x)$ for a solution $x$. Then, the goal becomes to minimize among all edges $j$ the ratio between usage of $j$ and $w(j)$.
    Our problem can be stated in the framework of a standard packing problem:
    \begin{defn}[Packing problem \cite{Young95}]
        For convex set $P\subseteq \mathbb{R}^n$ and nonnegative linear function $f:P\to\mathbb{R}^m$, the packing problem aims to find $\gamma^* = \min_{x\in P}\max_{j\in[m]}f_j(x)$ i.e., the solution in $P$ that minimizes the maximum value of $f_j(x)$ over all $j$. The width of the packing problem $(P, f)$ is defined as $\omega = \max_{j\in[m],x\in P} f_j(x)- \min_{j\in[m],x\in P} f_j(x)$.
    \end{defn}
    In our setting, $P=\{x\in \mathbb{R}^N: \val(x)=1, x\ge 0\}$ is the convex hull of all $t$-arborescences, and $f_j(x)=\frac{\sum_{i\in [N]:j\in A_i}x_i}{w(j)}$ is the ratio of usage over capacity of an edge $j$. The width is at most $1/w_{min}$, where $w_{min}$ is the minimum edge capacity. Because the edge capacities are integers, the width is at most 1.

    Next we describe the packing algorithm \cite{Young95}. Maintain a vector $y\in \mathbb{R}^m$, initially set to $y=1$. In each iteration, find $x=\arg\min_{x\in P}\sum_j y_jf_j(x)$, and then add $x$ to set $S$ and replace $y$ by the vector $y'$ defined by $y'_j = y_j(1+\eps f_j(x)/\omega)$. After a number of iterations, return $\bar{x}\in P$, the average of all the vectors $x$ over the course of the algorithm. The lemma below upper bounds the number of iterations that suffice:
    \begin{lemma}[Corollary 6.3 of \cite{Young95}] \label{lem:packing}
        After $\lceil \frac{(1+\eps)\omega\ln m}{\gamma^*((1+\eps)\ln(1+\eps)-\eps)}\rceil$ iterations of the packing algorithm, $\bar{\gamma} = \max_j f_j(\bar{x}) \le (1+\eps)\gamma^*$.
    \end{lemma}

    By duality between $t$-arborescence packing and minimum $t$-cut, we have the value of maximum $t$-arborescence packing is $\lambda$ (Corollary 2.1 of \cite{Gabow95}), and the optimal value of scaled packing problem is $\gamma^*=\frac{1}{\lambda}$. We run $C\cdot k\log n\ge C\cdot \lambda \log n$ iterations of the packing algorithm for a sufficiently large constant $C$. Then by \Cref{lem:packing}, the value of output fractional packing is $\le (1+\eps)\gamma^*$. After scaling back the solution to a fractional packing, its value is $\ge \frac{1}{1+\eps}\lambda$.

    Next we analyze the time complexity. Each iteration computes $x=\arg\min_{x\in P}\sum_j y_j f_j(x)$, which can be solved by a minimum cost $t$-arborescence algorithm in $O(m+n\log n)$ time \cite{GGST86}. We run $\tO(k)$ iterations, so the total running time is $\tO(mk)$.
\end{proof}

\begin{lemma}\label{lem:1resp-tree-packing}
    Given a capacitated digraph $G$ with a fixed root vertex $t$ and a number $\tau > 0$, suppose there exists a $t$-cut $S$ with $d^+(S)< \tau$ and $|S|\le k$.
    In $\tO(mk)$ time, we can find $O(\log n)$ $t$-arborescences on vertex set $V_0\supseteq S^*$, such that the following holds w.h.p.: There exists a $t$-cut $S''$ with $d^+(S'')< \tau$ and a $t$-arborescence $T$ in the packing, such that $T$ 1-respects $S''$.
\end{lemma}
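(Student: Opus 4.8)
The plan is to glue together three pieces: the sparsifier of \Cref{lem:partial-sparsify}, the fractional $t$-arborescence packing of \Cref{lem:tree-packing}, and a sampling step, using a short averaging argument as the cement. Fix a small constant $\eps$, say $\eps = 1/6$. First I would run \Cref{lem:partial-sparsify} on $(G,t,\tau,k)$ with parameter $\eps$; in randomized near-linear time this returns a digraph $G_0 = (V_0,E_0)$ with $t \in V_0 \subseteq V$, integer capacities, $t$-mincut value $\lambda_0 = O(k\log n/\eps^2)$, and guarantees the \emph{existence} of a $t$-cut $S^*\subseteq V_0$ with $d^+_G(S^*) < \tau$ and $d^+_{G_0}(S^*) \le (1+\eps)\lambda_0$. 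Setting $k_0$ to the explicit bound $O(k\log n/\eps^2)$ on $\lambda_0$ (so that $\lambda_0 \le k_0$), I would then run \Cref{lem:tree-packing} on $G_0$ with root $t$ and size bound $k_0$: in $\tO(mk_0) = \tO(mk)$ time this produces a fractional $t$-arborescence packing $\{(A_i,x_i)\}_i$ of $G_0$ of total value $W := \sum_i x_i \ge (1-\eps)\lambda_0$, consisting of $\tO(k_0) = \tO(k)$ distinct arborescences.

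The heart of the proof is an averaging argument showing that a constant fraction of the packing weight $1$-respects $S^*$. Since $t \notin S^*$ and every vertex of a $t$-arborescence has a directed path to $t$, each $A_i$ must use at least one arc of $\partial^+ S^*$, so $|A_i \cap \partial^+ S^*| \ge 1$. Summing this weighted by $x_i$ and swapping the order of summation, while using that the packing uses each arc of $\partial^+ S^*$ at most to its capacity, gives
\[
\sum_i x_i\,|A_i \cap \partial^+ S^*| \;=\; \sum_{e \in \partial^+ S^*}\ \sum_{i:\, e \in A_i} x_i \;\le\; d^+_{G_0}(S^*) \;\le\; (1+\eps)\lambda_0 .
\]
Subtracting $W = \sum_i x_i \ge (1-\eps)\lambda_0$ from both sides and noting that $|A_i\cap\partial^+ S^*|-1 \ge 1$ whenever $A_i$ does not $1$-respect $S^*$, I get that the total weight of arborescences not $1$-respecting $S^*$ is at most $2\eps\lambda_0 \le \tfrac{2\eps}{1-\eps}W$, so the weight $W_1$ of arborescences that \emph{do} $1$-respect $S^*$ satisfies $W_1 \ge \tfrac{1-3\eps}{1-\eps}W \ge \tfrac{3}{5}W$ for $\eps = 1/6$. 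In particular $W_1 > 0$, so such arborescences exist.

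To finish, I would sample $\Theta(\log n)$ arborescences from the packing independently, drawing $A_i$ with probability $x_i/W$, and output these $O(\log n)$ arborescences (all $t$-arborescences on vertex set $V_0 \supseteq S^*$). Since at least a $3/5$ fraction of the weight $1$-respects $S^*$, the probability that none of the samples does is at most $(2/5)^{\Theta(\log n)} = n^{-\Omega(1)}$, so w.h.p.\ some output arborescence $T$ has $|T \cap \partial^+ S^*| = 1$; taking $S'' := S^*$ (which lies in $V_0$ and has $d^+_G(S'') < \tau$) proves the lemma. The running time is dominated by the packing step, $\tO(mk)$, with $\tO(k + n\log n)$ extra for sampling and writing the output, and correctness holds w.h.p.\ by a union bound over the failure of \Cref{lem:partial-sparsify} and of the sampling. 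I expect the only genuinely delicate point to be the averaging argument, and specifically choosing $\eps$ small enough (we need $1-3\eps>0$) that the two multiplicative slacks — the $(1-\eps)$ loss in the packing value and the $(1+\eps)$ slack in $d^+_{G_0}(S^*)$ — together still leave a constant fraction of the packing weight $1$-respecting the target cut; the remaining steps are routine applications of the two black boxes.
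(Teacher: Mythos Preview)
Your proposal is correct and follows essentially the same approach as the paper: apply \Cref{lem:partial-sparsify}, then \Cref{lem:tree-packing} on the sparsifier, then sample $O(\log n)$ arborescences. Your direct counting argument (subtracting $W$ from $\sum_i x_i\,|A_i\cap\partial^+S^*|$ to bound the non-$1$-respecting weight) is equivalent to the paper's Markov-inequality phrasing (the expected intersection is at most $(1+\eps)/(1-\eps)$, so with constant probability a sampled arborescence has intersection exactly $1$); the only cosmetic differences are your choice $\eps=1/6$ versus the paper's $\eps=0.1$, and your taking $S''=S^*$ directly rather than invoking the minimum $1$-respecting cut.
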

\begin{proof}
    We first apply \Cref{lem:partial-sparsify} to obtain sparsifier $G_0$ and guarantee some $S'$ with $d^+_G(S')< \tau$ has $d^+_{G_0}(S') \le (1+\eps)\lambda$, where $\lambda$ is the $t$-mincut value in $G_0$, and $\eps$ is set to be 0.1.
    Then, we apply \Cref{lem:tree-packing} to obtain a fractional arborescence packing $\mathcal{T}$ with value $\ge (1-\eps)\lambda$.
    The expected intersection size $\E_{T\sim \mathcal{T}}[d^+_{T}(S')] \le \frac{(1+\eps)\lambda}{(1-\eps)\lambda}\le 1.5$. By Markov's inequality, with constant probability a random tree $T\in \mathcal{T}$ has intersection size $< 2$, which must be 1 as an integer. So, we have constant probability to sample a 1-respecting $t$-arborescence, and we can obtain a 1-respecting $t$-arborescence w.h.p.\ by sampling $O(\log n)$ arborescences from the packing.
    Finally, let $S''$ be the minimum $t$-cut 1-respected by $T$, then $d^+_G(S'')\le d^+_G(S')< \tau$.
\end{proof}

By applying \Cref{lem:1-resp-mincut-alg} on each of the $O(\log n)$ $t$-arborescences sampled and output the minimum $t$-cut found, we obtain a $t$-cut $S'$ with $d^+(S')<\tau$.

Next, we analyze the running time.
The partial sparsification step in \Cref{lem:partial-sparsify} takes $\tO(m)$ time. The tree packing step in \Cref{lem:tree-packing} takes $\tO(mk)$ time. Finally, computing 1-respecting min cut on $O(\log n)$ trees takes polylogarithmic calls to max flow by \Cref{lem:1-resp-mincut-alg}. In conclusion, the running time is $\tO(mk)$ plus polylogarithmic calls to max flow.

We have established \Cref{lem:alg-sparsify}.

\section{Discussion of Trubin's Min-Ratio Cut Algorithm~\cite{Trubin93}}\label{sec:trubin}
Trubin's min-ratio cut algorithm has aesthetic similarities to ours. Our algorithm proceeds by finding a dense core, contracting, and repeating this process until we build the whole cut hierarchy. Trubin's algorithm iteratively finds a so-called ``maximal strength HS-induced subgraph,'' contracts that subgraph, and repeats to build a cut hierarchy. Here a maximal strength HS-induced subgraph is an induced subgraph of $G$ with maximal strength and with strength equal to its fractional arboricity. (The fractional arboricity of a graph is the density of its skew-densest subgraph.) Trubin proves that the top level  of the hierarchy exhibits the same cost min-ratio cut as that of the original graph via the following lemma.
\begin{lemma}[\cite{Trubin93}] \label{lem: trubin}
Let $S \subseteq V$. The strength of $G$ is equal to the strength of $G/S$ if the strength of $G[S]$ is at least the strength of $G$.
\end{lemma}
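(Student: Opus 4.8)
The plan is to prove the two directions $\sigma(G)\le\sigma(G/S)$ and $\sigma(G)\ge\sigma(G/S)$ separately, the real content being the second. The first direction is unconditional and immediate: any multiway cut of $G/S$ lifts to a multiway cut of $G$ that keeps $S$ within a single side, with the same cut edges and the same number of sides, hence the same ratio; applying this to a min-ratio cut of $G/S$ gives $\sigma(G)\le\sigma(G/S)$. (Contracting vertices only removes admissible cuts, so the strength cannot drop.)

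For the reverse inequality I would take a min-ratio cut $\mathcal{P}=\{P_1,\dots,P_k\}$ of $G$, of ratio $\pi:=\sigma(G)$, and uncross it against $S$. If $S$ lies inside a single side $P_i$, then $\mathcal{P}$ descends to $G/S$ with ratio $\pi$ and we are done. Otherwise $S$ meets $j\ge 2$ sides, say $P_1,\dots,P_j$; then $\mathcal{P}$ restricted to $S$ is a $j$-way cut of $G[S]$ whose cut value is $c(\partial\mathcal{P}\cap E[S])$, so the hypothesis $\sigma(G[S])\ge\pi$ forces $c(\partial\mathcal{P}\cap E[S])\ge\pi(j-1)$. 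Merging $P_1,\dots,P_j$ into one side yields a partition $\mathcal{P}'$ with $k-j+1$ sides that keeps $S$ together; the edges of $\partial\mathcal{P}$ that are dropped are exactly those internal to $P_1\cup\cdots\cup P_j$, and these include $\partial\mathcal{P}\cap E[S]$ since $S\subseteq P_1\cup\cdots\cup P_j$, so $d(\mathcal{P}')\le d(\mathcal{P})-c(\partial\mathcal{P}\cap E[S])\le\pi(k-1)-\pi(j-1)=\pi(|\mathcal{P}'|-1)$. Hence $\mathcal{P}'$ has ratio $\le\pi$ and, after descending to $G/S$, witnesses $\sigma(G/S)\le\pi$.

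The main obstacle is the degenerate case $j=k$, where $S$ meets \emph{every} side of $\mathcal{P}$: then the merged partition $\mathcal{P}'$ is the trivial one-part partition and is useless, while the inequalities above only yield $c(\partial\mathcal{P}\cap E[S])=d(\mathcal{P})$, i.e.\ that all min-ratio-cut edges of $G$ already sit inside $S$. I do not see how to break this case from the hypothesis $\sigma(G[S])\ge\sigma(G)$ alone, and in fact it cannot be broken: taking $G$ to be $K_4$ together with a pendant vertex joined by one unit-weight edge and $S$ the pendant vertex together with its neighbor, one has $\sigma(G[S])=\sigma(G)=1$ while $G/S\cong K_4$, so $\sigma(G/S)=2$. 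Thus the uncrossing goes through only when one may choose a min-ratio cut of $G$ having a side disjoint from $S$ — which is guaranteed precisely when $S$ is a dense core (a maximal-strength homogeneously strong induced subgraph), since then $S$ lies inside a single side of the min-ratio cut by \Cref{lem:core-is-star-onestep} and the claim reduces to \Cref{fact:contract-aligned-set-preserve-min-ratio-cut}. Locating this gap is exactly the point where the hypothesis of \Cref{lem: trubin} as stated is too weak, and where Trubin's argument — and hence his algorithm — must be scrutinized.
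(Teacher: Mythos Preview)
Your analysis is correct, and in fact your counterexample shows that \Cref{lem: trubin} as stated is \emph{false}: with $G$ equal to $K_4$ plus a unit-weight pendant $e$ at vertex $a$, and $S=\{a,e\}$, one has $\sigma(G[S])=\sigma(G)=1$ while $\sigma(G/S)=\sigma(K_4)=2$. The paper does not supply its own proof of this lemma; it is attributed to \cite{Trubin93} and quoted only to narrate Trubin's framework. The paper's own justification of Trubin's non-optimized contraction scheme (the paragraph following the lemma) proceeds exactly along the lines you propose as the fix: it shows that the set being contracted is a maximum skew-densest set, hence a dense core, hence lies inside a single side of the min-ratio cut by \Cref{lem:core-is-star-onestep}, after which \Cref{fact:contract-aligned-set-preserve-min-ratio-cut} applies directly. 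So the paper never relies on \Cref{lem: trubin} in full generality, only on the special case you isolate as salvageable.

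Your uncrossing argument for the non-degenerate case $j<k$ is clean and correct; the degenerate case $j=k$ is precisely what your counterexample realizes (the unique min-ratio cut $\{e\}\mid\{a,b,c,d\}$ has both of its sides meeting $S$). The correct version of the lemma is the one you state at the end: it suffices that $S$ be contained in a single side of \emph{some} min-ratio cut of $G$ --- which is guaranteed for dense cores but not by the bare hypothesis $\sigma(G[S])\ge\sigma(G)$.
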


The maximum strength of an HS-induced subgraph is in fact the fractional arboricity of $G$. It is easy to show this using our results. In particular, let $S \subseteq V$ be of maximum size such that 
\[
\rho(S) = \max_{U \subseteq V \text{ nonempty}} \rho(U).
\]
Then, trivially $S$ is a maximum skew-densest subset of $G$. As such, by~\Cref{fact:core-exist} and~\Cref{lem:core-is-star}, we get that $S$ is a star set of the canonical cut hierarchy. Consequently, $\rho(S)$ is equal to the strength of $G[S]$---its min-ratio cut is the trivial multi-way cut into singletons as it is a star set. By our choice of $S$, the strength of $G[S]$ is then equal to its fractional arboricity. 

Finally, since the fractional arboricity of $G$ is at least $\rho(V)$ and in turn at least its strength (by considering the trivial multi-way cut), we obtain correctness of the non-optimized version of Trubin's algorithm.

However, the more optimized algorithm given in~\cite{Trubin93} is erroneous. We sketch the algorithm and then give a counterexample disproving its correctness. First, for each $v \in V$ and $\lambda \in [0,\infty)$ solve 
\begin{equation}
\max_{v \in S}\{ c(E[S]) - \lambda(|S| - 1)\}.  \label{eq:trubin}
\end{equation}
Solutions for a given $v$ can be represented by an increasing sequence of sets 
\[
\{v\} = X(\lambda_1^v, v) \subseteq X(\lambda_2^v, v) \cdots X(\lambda_{\ell_v}^v, v) = V,
\]
where $\lambda_1^v \geq \lambda_2^v \geq \cdots \geq \lambda_{\ell_v}^v = 0$ and, for $i \geq 2$, $X(\lambda_i^v, v)$ maximizes~\cref{eq:trubin} for $\lambda \in [\lambda_i^v, \lambda_{i-1}^v]$.  We also have $X(\lambda_1^v, v)$ maximizes~\cref{eq:trubin} for $\lambda \geq \lambda_1$.

Next, adhere to the following algorithm:
\begin{enumerate}
    \item Let $u \in V$ with $\lambda_1^u$ maximum over vertices with corresponding sequence of sets of length at least $2$ (if none exist, terminate).
    \item Contract $X(\lambda_2^u, u)$ in $G$ (setting $G$ to $G/X(\lambda_2^u, u)$), and replace all instances of elements in $X(\lambda_2^u, u)$ in computed sets with $u$. For $u$'s increasing sequence of sets, remove the first set and reduce the indices of the $\lambda_i^u$'s by $1$. Return to step 1.
\end{enumerate}
At the end, the final contracted graph should have each vertex a side of a min-ratio cut of $G$.

To show that this algorithm is incorrect, we consider a small graph, construct the sequences of sets corresponding to each vertex by hand, and then show that the algorithm does not terminate in a min-ratio cut of the original graph. Consider a path on vertices $a, b, c, d$ with edges of weight $2$, $1$, and $100$ between $a,b$; $b,c$; and $c,d$, respectively. We can compute the increasing sequences of sets by hand:
\begin{itemize}
    \item For $a$, we get $\{a\} \subseteq \{a,c,d\} \subseteq \{a,b,c,d\}$ with $\lambda_1^a = 50, \lambda_2^a = 3,$ and $\lambda_3^a = 0$.
    \item For $b$, we get $\{b\} \subseteq \{b,c,d\} \subseteq \{a,b,c,d\}$, with $\lambda_1^b = 50.5, \lambda_2^b = 2, $ and $\lambda_3^b = 0$.
    \item For $c$, we get $\{c\} \subseteq \{c,d\} \subseteq \{a,b,c,d\}$ with $\lambda_1^c = 100, \lambda_2^c = 1.5, $ and $\lambda_3^c = 0$.
    \item Finally, for $d$, we get $\{d\} \subseteq \{c,d\} \subseteq \{a,b,c,d\}$ with $\lambda_1^d = 100, \lambda_2^d = 1.5, $ and $\lambda_3^d = 0$.
\end{itemize}
Observe that the min-ratio cut of this graph has sides $\{a,b\}$ and $\{c,d\}$. However, we can see that it is not possible to follow the algorithm and contract $\{a,b\}$ into a single node: $\{a,b\}$ never appears among the sequences over the course of the algorithm.

\end{document}